\documentclass[acmsmall,nonacm]{acmart}
\usepackage{booktabs}
\usepackage{pgfplots}
\pgfplotsset{compat=1.18}
\usepgfplotslibrary{fillbetween,groupplots}
\usetikzlibrary{arrows.meta,shapes.geometric}

\usepackage{amsmath,amssymb,amsfonts,mathtools}
\usepackage{amsthm}
\usepackage{nicefrac}
\usepackage{xcolor}
\usepackage{xspace}
\usepackage{enumitem}
\usepackage{algorithm}
\usepackage{algorithmic}
\usepackage{stmaryrd}
\usepackage{url}

\theoremstyle{plain}
\newtheorem{theorem}{Theorem}[section]
\newtheorem{lemma}[theorem]{Lemma}
\newtheorem{proposition}[theorem]{Proposition}
\newtheorem{corollary}[theorem]{Corollary}

\theoremstyle{definition}
\newtheorem{definition}[theorem]{Definition}
\newtheorem{example}[theorem]{Example}

\theoremstyle{remark}
\newtheorem{remark}[theorem]{Remark}

\newcommand\ma[1]{\ensuremath{\mathcal{#1}}}

\newcommand*\eg{\text{e.g.,}\xspace}

\newcommand\tx[1]{\texttt{#1}}

\newcommand*\query{\ensuremath{q}}

\newcommand*\cqs{\ensuremath{\mathit{CQs}}\xspace}

\newcommand*\cq{\ensuremath{\mathit{CQ}}\xspace}

\newcommand*\size[1]{\ensuremath{|#1|}}

\newcommand*\attribute[1]{\ensuremath{\operatorname{att}(#1)}\xspace}

\newcommand*\closedComponent[1]{\ensuremath{\operatorname{CC}(#1)}\xspace}

\newcommand*\JS{\ensuremath{\operatorname{JS}}\xspace}
\newcommand*\KL{\ensuremath{\operatorname{KL}}\xspace}

\newcommand*\ov{\ensuremath{\mathit{ov}}\xspace}

\newcommand*\fd{\ensuremath{\mathit{FD}}\xspace}

\newcommand*\coverage{\ensuremath{\operatorname{Cover}}\xspace}

\newcommand{\Obs}{\ensuremath{\operatorname{Obs}_{\Sigma,\Omega}}}
\newcommand{\ObsFam}{\ensuremath{\Gamma_{\Sigma,\Omega}}}

\newcommand{\AttrUniverse}{\ma V}

\newcommand{\LegalWorlds}{\ma W}

\newcommand{\LawSet}{\Sigma}

\newcommand{\OverlapSet}{\Omega}

\newcommand{\AttrClosure}[2][\LawSet]{{#2}^+_{#1}}

\newcommand{\AugSchema}[1]{\widetilde{#1}}

\newcommand{\CCGraph}{\ma G_{\LawSet,\OverlapSet}}

\newcommand{\AtomObligs}{\ma B_Q}

\newcommand{\CandSet}{\ma A}

\newcommand{\GreedySol}{\ma A_{\mathrm{greedy}}}

\newcommand{\OptSol}{\ma A_{\mathrm{opt}}}

\usepackage{orcidlink}   
\noexpandarg             
\title{Identifiability of Relational Queries in Multi-View Pretraining}

\author[Ratan Bahadur Thapa]{Ratan Bahadur Thapa\,\texorpdfstring{\orcidlink{0009-0000-2368-5928}}{}}
\orcid{0009-0000-2368-5928}
\affiliation{%
  \department{Analytic Computing, KI}
  \institution{University of Stuttgart}
  \city{Stuttgart}
  \country{Germany}
}
\email{ratan.thapa@ki.uni-stuttgart.de}

\author[Daniel Hern\'{a}ndez]{Daniel Hern\'{a}ndez\,\texorpdfstring{\orcidlink{0000-0002-7896-0875}}{}}
\affiliation{%
  \department{Analytic Computing, KI}
  \institution{University of Stuttgart}
  \city{Stuttgart}
  \country{Germany}
}
\email{daniel.hernandez@ki.uni-stuttgart.de}
\orcid{0000-0002-7896-0875}
\fullexpandarg           

\begin{abstract}
When data sources are integrated through a shared interface, a downstream query may or may not be \emph{determined} by what the interface exposes: two globally consistent worlds can agree on every shared attribute yet disagree on the query answer. This ambiguity is structural---a property of the interface design, not the data volume---and cannot be resolved by collecting more records or training a larger model.

We formalize \emph{query identifiability} for data integration under interface laws---functional dependencies that hold uniformly across all legal worlds rather than within a single instance---and prove three results. (i)~A polynomial-time certificate (\tx{CheckCert}) decides identifiability via attribute closure, and is exact on instances that actually expose any residual ambiguity (which we formally call \emph{closure-separable}). (ii)~Non-identifiable queries face an irreducible $\nicefrac{1}{2}$ minimax error floor for any estimator using only interface evidence, bounding multi-view pretraining systems from below. (iii)~A minimum-augmentation algorithm (\tx{Greedy-MinAug}) finds the smallest set of interface additions to certify a query, reducing the problem to \textsc{Set Cover} (logarithmic approximation).

Experiments on synthetic benchmarks, real integration datasets spanning three domains (scholarly, product, and restaurant), and schemas up to $10^3$ attributes confirm \tx{CheckCert} is exact, both algorithms run in single-digit milliseconds, and ML classifiers exhibit the predicted error floor and abrupt capability gains that augmentation produces.
\end{abstract}

\ccsdesc[500]{Theory of computation~Database theory}
\ccsdesc[500]{Information systems~Data integration}
\ccsdesc[100]{Computing methodologies~Learning latent representations}

\keywords{Query identifiability, Data integration, Interface laws, Conjunctive queries, Multi-view learning}

\begin{document}
\maketitle

\section{Introduction}
\label{sec:introduction}
Enterprise analytics increasingly draws from data lakes and federated ecosystems where analysts query across hundreds to thousands of sources with incompatible schemas~\cite{nargesian2019datalake,stonebraker2018integration}; at open-data scale, catalogues index tens of millions of datasets from independent producers~\cite{brickley2019datasetsearch}. In each case an integration layer specifies which attributes are shared across sources and which deterministic correspondences between them hold, but leaves the rest unconstrained. A growing class of representation-learning methods, \emph{multi-view} and contrastive pretraining, learns directly from these shared attributes, aligning records across sources by the attributes they share~\cite{li2018survey,zhang2025augmentation}. This raises a foundational question \emph{prior} to any particular model or training procedure: given only what the integration interface exposes, which queries are \emph{determined}---guaranteed the same answer across every dataset consistent with the interface---and which are irreducibly ambiguous, no matter how much data is gathered or how large the model? We call this property \emph{query identifiability}, and it is the subject of this paper.

Where does this ambiguity arise? Consider a billing system and a support system integrated through a shared interface. A data engineer needs to answer a query: \emph{does this customer have both an overdue invoice and a high-severity support ticket?} The billing system identifies customers by email address (\tx{email}); the support system uses an internal customer code (\tx{cid}). Their shared interface aligns records on \tx{email}---the one attribute they have in common---but says nothing about which \tx{cid} corresponds to which email. Without that correspondence the interface offers no unique way to match invoices with tickets: there are multiple ways to complete the data consistently with everything the interface exposes, and they can assign \tx{cid} values differently, linking tickets to invoices differently and returning different query answers. This ambiguity is \emph{structural}: it cannot be resolved by collecting more data, because the interface simply does not carry the \tx{email}-to-\tx{cid} mapping. Add a resolver that fixes that mapping as a deterministic law, however, and the answer changes entirely---the resolver propagates the email evidence to \tx{cid}, the join key becomes available, and the query is determined by the interface alone.

To reason about this systematically, we model every way of completing the observed data as a \emph{legal world}---a globally consistent dataset that respects the interface's deterministic components. A query is \emph{identifiable} when all legal worlds return the same answer~\cite{abiteboul1995foundations,fagin1983semantics}. The deterministic components---resolvers, canonicalizers, identifier crosswalks---are modeled as \emph{interface laws}: constraints that fix one attribute from another uniformly across all legal worlds~\cite{armstrong1974dependency,beeri1977complete}. These laws propagate overlap evidence to further attributes: in our billing example, the $\tx{email}\to\tx{cid}$ resolver law reaches \tx{cid} from the email overlap, making the join key available and the query identifiable.

This structure yields three results. First, whether the laws propagate far enough to cover all attributes a query needs is decidable in polynomial time (\tx{CheckCert}, Theorem~\ref{thm:frontier-suff}); on instances that actually expose the ambiguity, a failed check is itself a proof of non-identifiability (Corollary~\ref{cor:closure-complete}). Second, for every non-identifiable query there is always a concrete pair of legal worlds on which any method using only interface evidence must err with probability at least $\nicefrac{1}{2}$, regardless of how much data it sees (Theorem~\ref{thm:nonident-minimax})---the information is structurally absent. Third, adding an interface component---a new resolver, a crosswalk, a shared identifier---can flip a non-identifiable query to identifiable, producing an abrupt capability gain (Theorem~\ref{thm:connectivity-emergence}); \tx{Greedy-MinAug} finds the smallest such addition, reducing the problem to \textsc{Set Cover} with a logarithmic approximation (Theorem~\ref{thm:minaug-hard-greedy}). Because multi-view pretraining minimises a loss anchored to the shared attributes, driving that loss below a threshold forces the unique answer on certified queries---structurally determined, independently of sample size (Theorem~\ref{thm:robust-threshold}).

\paragraph{Contributions.}
We turn these results into two practical schema-design algorithms. \tx{CheckCert} checks in polynomial time whether interface propagation covers all attributes a query needs---certifying identifiability on instances that expose the ambiguity (Corollary~\ref{cor:closure-complete})---and pinpoints unreachable attributes when the check fails. \tx{Greedy-MinAug} finds the smallest set of new interface components that certifies the query, reducing the problem to \textsc{Set Cover} with a logarithmic approximation (Theorem~\ref{thm:minaug-hard-greedy}). We evaluate both algorithms on a 5-attribute synthetic benchmark with exhaustive ground truth, on real integration datasets spanning three domains (scholarly: BibInteg and CrossKG-DBLP; product: Amazon-Google; restaurant: Fodors-Zagat) with the WDC schema as a design case study, and at database scale (up to a thousand attributes and functional dependencies). The certificate is exact and both algorithms run in single-digit milliseconds. ML classifiers trained on the same schemas confirm the $\nicefrac{1}{2}$ error floor and the abrupt capability gains that augmentation produces. The paper is organized as follows: Section~\ref{sec:related-work} discusses related work; Section~\ref{sec:model}--\ref{sec:algorithms} develop the theory and algorithms; Section~\ref{sec:experiments} presents the evaluation. Key theoretical results are machine-verified in Lean~4: the closure certificate (Theorem~\ref{thm:frontier-suff}), minimax lower bound (Theorem~\ref{thm:nonident-minimax}), and the reduction of identifiability to query determinacy are fully machine-checked; the capability-jump theorem and MinAug hardness remain open in the formalization.
\section{Related Work}
\label{sec:related-work}

Possible-worlds semantics and certain answers provide the closest semantic template \cite{abiteboul1995foundations,fagin1983semantics,libkin2011incomplete}. Evidence restricts a set of admissible worlds, and a query is certain when its answer is invariant across those worlds. We adapt the idea to heterogeneous data integration by replacing a single incomplete database with observed views, designated overlaps, and deterministic interface laws that hold across all legal worlds.

Data integration studies query answering over heterogeneous sources with different schemas and identifiers \cite{halevy2006data,doan2012principles,lenzerini2002data}. View-based query answering and query determinacy ask when views suffice to answer or rewrite a query \cite{levy1995answering,halevy2001answering,nash2010views,abiteboul1998complexity,pasaila2011conjunctive}. Nash, Segoufin, and Vianu~\cite{nash2010views} give a semantic characterisation of query determinacy, prove it undecidable for first-order views, and show decidability only in restricted \cq\ cases (monadic/Boolean and path queries); the general \cq\ case was later proved undecidable over both unrestricted~\cite{gogacz2015redspider} and finite~\cite{gogacz2016rainworm} instances. Our identifiability is the sub-case where observation views are closure-augmented overlap projections---projection \cqs with no joins---which falls within the decidable fragment; two further restrictions yield polynomial time (Theorem~\ref{thm:frontier-suff}): the views are FD-closures of designated overlaps, and interface laws hold \emph{across} legal worlds (Definition~\ref{def:legality}) rather than within a single instance, reducing the containment check to standard attribute closure with no quantifier alternation. Identifiability therefore does not subsume \cq\ views with joins; conversely, determinacy by \cq\ join views is undecidable~\cite{gogacz2015redspider,gogacz2016rainworm} and admits no closure certificate.

\begin{table}[t]
\caption{Identifiability vs.\ related query-answering frameworks (Boolean CQs). $^\dagger$Complete on closure-separable instances (Corollary~\ref{cor:closure-complete}).}
\label{tab:related}
\centering
\setlength{\tabcolsep}{4pt}
\begin{tabular}{@{}p{2.4cm}p{2.5cm}p{2.9cm}@{}}
\toprule
Framework & CQ complexity & Certificate \\
\midrule
Query determinacy~\cite{nash2010views} & Undecidable (general) & Sufficient only \\
Certain answers~\cite{abiteboul1995foundations,libkin2011incomplete} & \textsc{coNP}-complete & Complete \\
Data exchange~\cite{fagin2005data} & \textsc{PTIME} (chase) & Sound (target-side) \\
\tx{CheckCert} (Our work) & \textsc{PTIME} (closure) & Complete$^\dagger$ \\
\bottomrule
\end{tabular}
\end{table}

The tractability gap in Table~\ref{tab:related} has a precise source. Query determinacy asks whether an arbitrary set of views rewrites $\query$ under arbitrary world semantics---no constraint links views to each other, so the problem reduces to checking containment for all possible view extensions, which is undecidable in general~\cite{nash2010views} and undecidable even for \cqs~\cite{gogacz2015redspider,gogacz2016rainworm}. Certain answers are complete but \textsc{coNP}-hard because every possible open-world completion must be considered. Interface laws break both barriers: once evidence is anchored to designated overlaps and laws are restricted to FD-syntax holding \emph{across worlds}, propagation reduces to standard attribute closure---a linear fixed-point computation with no quantifier alternation. Identifiability is, at its core, a closure problem; this paper recognizes that structure, formalizes it under multi-world semantics, and derives its consequences for pretraining and schema design.

Data exchange studies legal target instances and certain answers under schema mappings \cite{fagin2005data}. The chase procedure produces canonical instances that satisfy a set of tgds and egds; interface laws are a restricted form of egds that hold universally. Minimum interface augmentation (Definition~\ref{def:minaug}) is analogous to strengthening a schema mapping to eliminate ambiguity in the target: adding interface actions corresponds to adding egd constraints that make the target unique for more queries. Entity resolution and record linkage provide practical mechanisms for constructing cross-source identity evidence \cite{christen2012data,dong2009data}. We abstract from a particular matching algorithm and ask which queries become determined once an interface exposes particular overlaps and deterministic components.

Functional dependencies and attribute closure are classical tools for reasoning about determinacy of attributes \cite{armstrong1974dependency,beeri1977complete,ginsburg1983characterizations}. We use the same implication machinery, but our dependencies represent deterministic interface laws shared across legal worlds rather than only constraints internal to one relation instance. Minimum interface augmentation is closest to view-set minimization and view-selection problems \cite{li2001minimizing}.

Conjunctive queries are the canonical language for joins and projections in database theory \cite{chandra1977optimal,abiteboul1995foundations}. We use their attribute footprint to connect query evaluation to closure-augmented overlap evidence. Jensen--Shannon divergence \cite{lin2002divergence}, information inequalities \cite{cover2006elements}, and Fano-style lower bounds \cite{fano1961transmission,tsybakov2009introduction,le2012asymptotic} provide the tools for our robustness and capacity results.

Multi-view learning uses agreement between different observations as surrogate supervision \cite{blum1998combining,andrew2013deep,li2018survey}. Representation-learning pipelines implement the same principle through contrastive or alignment losses on paired views. The key distinction from this paper is \emph{semantic vs.\ statistical}: multi-view learning bounds concern generalisation---whether a model trained on $N$ samples can predict well---while our identifiability is a structural question independent of sample size. A query can be statistically learnable (with enough data, a model converges to the right answer) yet not identifiable in our sense (the information is simply not present in the interface evidence). Theorem~\ref{thm:nonident-minimax} makes this precise: for non-identifiable queries, there exists a witness pair on which any interface-evidence-only estimator achieves error $\ge\tfrac{1}{2}$ regardless of $N$, whereas generalisation bounds decrease with $N$. Identifiability is therefore a precondition that learning bounds implicitly assume; our contribution is to make this precondition explicit and algorithmically checkable.

\section{Data Integration Interface}
\label{sec:model}

Let us introduce a running example to help formalize three key notions: \emph{interface}, \emph{worlds}, and \emph{legality}.

\begin{example}[Running example]
  A company integrates two sources (Table~\ref{tab:re-schema}). \textsc{Billing} records, for each customer \tx{email}, an \tx{invoice} and whether it is \tx{overdue}; \textsc{Support} records, for each internal identifier \tx{cid}, a \tx{ticket} and its \tx{severity}. The two share no attribute directly---Billing is keyed by \tx{email}, Support by \tx{cid}---so the query $\query$ (\emph{does a customer with an overdue invoice also have a high-severity ticket?}) can be answered only by linking \tx{email} to \tx{cid}. An \textsc{HR} \emph{resolver} supplies that link: the rule $\tx{email}\to\tx{cid}$. With the resolver the link is pinned; without it the data is ambiguous---Table~\ref{tab:two-worlds} shows two datasets $w$ and $w'$ that report the same \textsc{Billing} and \textsc{Support} records yet pair customers with tickets oppositely, so \tx{a@x}'s overdue invoice meets a high-severity ticket in $w$ ($\query=\mathrm{true}$) but a low-severity one in $w'$ ($\query=\mathrm{false}$). Same observations, two answers.
\end{example}

\begin{table}[t]
\caption{Running example. \textsc{Billing} and \textsc{Support} are the integrated \emph{sources}; \textsc{HR} is the \emph{resolver}, which supplies the interface law $\tx{email}\to\tx{cid}$ that links them. The resolver is optional---its presence is what makes the query identifiable.}
\label{tab:re-schema}
\centering\setlength{\tabcolsep}{5pt}
\begin{minipage}{0.49\columnwidth}\centering
\begin{tabular}{@{}lll@{}}
\toprule \tx{email} & \tx{invoice} & \tx{overdue} \\ \midrule
\tx{a@x} & I-1 & yes \\ \tx{b@x} & I-2 & no \\
\bottomrule
\end{tabular}\\[1pt]\textsc{Billing}
\end{minipage}%
\hfill
\begin{minipage}{0.49\columnwidth}\centering
\begin{tabular}{@{}lll@{}}
\toprule \tx{cid} & \tx{ticket} & \tx{severity} \\ \midrule
C1 & T-1 & hi \\ C2 & T-2 & lo \\
\bottomrule
\end{tabular}\\[1pt]\textsc{Support}
\end{minipage}\\[6pt]
\fbox{%
\begin{minipage}{0.90\columnwidth}\centering
\textsc{HR} \emph{resolver} (optional)---interface law $\tx{email}\to\tx{cid}$:\quad $\tx{a@x}\mapsto\mathrm{C1}$,\; $\tx{b@x}\mapsto\mathrm{C2}$
\end{minipage}}
\end{table}

\begin{table}[t]
\caption{Two worlds that agree on the observed \textsc{Billing} and \textsc{Support} views of Table~\ref{tab:re-schema} but link \tx{email} to \tx{cid} oppositely. Under the resolver law $\tx{email}\to\tx{cid}$, world $w$ is legal and $w'$ is not; without the law both are legal.}
\label{tab:two-worlds}
\centering
\begin{tabular}{@{}llllll@{}}
\toprule
\tx{email} & \tx{invoice} & \tx{overdue} & \tx{cid} & \tx{ticket} & \tx{severity} \\
\midrule
\tx{a@x} & I-1 & yes & C1 & T-1 & hi \\
\tx{b@x} & I-2 & no  & C2 & T-2 & lo \\
\bottomrule
\end{tabular}\\[2pt]
\emph{world $w$}---legal under $\tx{email}\to\tx{cid}$\par\vspace{6pt}
\begin{tabular}{@{}llllll@{}}
\toprule
\tx{email} & \tx{invoice} & \tx{overdue} & \tx{cid} & \tx{ticket} & \tx{severity} \\
\midrule
\tx{a@x} & I-1 & yes & C2 & T-2 & lo \\
\tx{b@x} & I-2 & no  & C1 & T-1 & hi \\
\bottomrule
\end{tabular}\\[2pt]
\emph{world $w'$}---illegal: \tx{a@x} linked to \tx{cid}{=}C2
\end{table}

The running example defines multiple attributes (e.g., \tx{email} and \tx{severity}). The \emph{attribute universe} $\AttrUniverse$ is the finite set of all the attributes. A \emph{world} $w$ is a finite relation over $\AttrUniverse$, and a \emph{view} $w|_O$ is its projection to a subset $O\subseteq\AttrUniverse$. If $R$ denotes the view $w|_O$ then we write $\attribute{R}$ for the set $O$. Table~\ref{tab:two-worlds} shows two such worlds, $w$ and $w'$: they agree on the views \textsc{Billing} and \textsc{Support} yet describe different underlying data. So, views alone do not single out one world. \emph{Interface laws} are the interface's mechanism for eliminating this ambiguity. Given two sets of attributes $X,Y \subseteq \AttrUniverse$, an \emph{interface law} is an expression of the form $X \to Y$. The running example has a single interface law, the resolver law $\mathrm{HR} = \{\tx{email}\} \to \{\tx{cid}\}$. An \emph{interface} is a pair $(\LawSet,\OverlapSet)$ where $\LawSet$ is a finite set of interface laws, and $\OverlapSet$ is a finite set of subsets of $\AttrUniverse$, called \emph{designated overlaps}. In the example, $\LawSet = \{\mathrm{HR}\}$ and $\Omega = \{\attribute{\textsc{Billing}}, \attribute{\textsc{Support}}\}$.

So far, we have described the syntax of interfaces and worlds. Definition~\ref{def:legality} defines its semantics. That is, which worlds are legal. The resolver law is what separates $w$ from $w'$.

\begin{definition}\label{def:legality}
A \emph{legality structure} $(\LegalWorlds,\LawSet)$ consists of a nonempty set of worlds $\LegalWorlds$ and a set of functional dependencies $\LawSet$ such that, for every $X\to Y\in\LawSet$ and all $w,w'\in\LegalWorlds$, $s\in w$, $t\in w'$,
\[s|_X = t|_X \;\Longrightarrow\; s|_Y = t|_Y.\]
The worlds in $\LegalWorlds$ are \emph{legal}.
\end{definition}

\begin{remark}
  An interface law is stronger than a single-instance \fd: it constrains tuples drawn from \emph{different} worlds, modeling a shared, fixed component---\eg a resolver mapping one identifier to another---whose behaviour is identical in every legal world.
\end{remark}

\begin{example}\label{ex:legality}
Take $\LawSet=\{\tx{email}\to\tx{cid}\}$, with the resolver of Table~\ref{tab:re-schema} fixing $\tx{a@x}\mapsto\mathrm{C1}$, $\tx{b@x}\mapsto\mathrm{C2}$. Of the two worlds in Table~\ref{tab:two-worlds}, $w$ is legal but $w'$ is not: pairing $w'$'s \tx{a@x} tuple ($\tx{cid}=\mathrm{C2}$) with $w$'s ($\tx{cid}=\mathrm{C1}$) gives equal \tx{email} but different \tx{cid}, violating $\tx{email}\to\tx{cid}$ across worlds. Both agree on the observed views; only $w$ also respects the resolver. Drop the resolver ($\LawSet=\varnothing$) and both become legal---the ambiguity behind non-identifiability.
\end{example}

The \emph{overlap evidence} is what the designated overlaps expose directly; together with everything the interface laws derive from it, this forms the \emph{interface evidence}.

How far does an interface law carry overlap evidence? In the running example, from \tx{email} to \tx{cid}; in general, the reach is captured by a single standard notion, \emph{attribute closure}, applied to the designated overlaps. Given $\LawSet$, the \emph{attribute closure} $\AttrClosure{X}$ of a set $X\subseteq\AttrUniverse$ is the standard Armstrong closure: the least superset of $X$ closed under all rules in $\LawSet$. It is computed in polynomial time via forward chaining.

\emph{Overlap augmentation} applies this closure to the overlaps: a \emph{designated overlap} $O\subseteq\AttrUniverse$ (one $O \in \Omega$) is extended to its \emph{closure-augmented schema} $\AugSchema{O} = \AttrClosure{O}$, the attributes the interface laws make determinable from $O$-evidence.

A single overlap closure captures what one overlap determines on its own; but a query may need attributes spread across several overlaps, linked only by chaining through the attributes those overlaps share. To track this combined reach---which attributes the interface ties together---we collect the per-overlap closures into one graph.

\begin{definition}\label{def:cc-graph}
The \emph{constraint-closed overlap graph} $\CCGraph$ has vertex set $\AttrUniverse$; for each designated overlap $O$, it adds a clique on $\AugSchema{O} = \AttrClosure{O}$. Its connected components are \emph{constraint-closed components}; $\closedComponent{a}$ denotes the component of $a\in\AttrUniverse$.
\end{definition}

\begin{example}\label{ex:cc-graph}
The two designated overlaps are the view schemas: $O_B = \{\tx{email},\tx{invoice},\tx{overdue}\}$ (\textsc{Billing}) and $O_S = \{\tx{cid},\tx{ticket},\tx{severity}\}$ (\textsc{Support}). With $\LawSet=\{\tx{email}\to\tx{cid}\}$: $\AugSchema{O_B} = \{\tx{email},\tx{invoice},\tx{overdue},\tx{cid}\}$ (the law extends the Billing closure to include \tx{cid}); $\AugSchema{O_S} = \{\tx{cid},\tx{ticket},\tx{severity}\}$. Both augmented overlaps contain \tx{cid}, so $\CCGraph$ has a single connected component spanning all six attributes. Without the law, $\AugSchema{O_B}=\{\tx{email},\tx{invoice},\tx{overdue}\}$ and $\AugSchema{O_S}=\{\tx{cid},\tx{ticket},\tx{severity}\}$ share no attribute---two disjoint components, one per view.
\end{example}

Given an interface $(\LawSet,\OverlapSet)$, the graph is computed in polynomial time in $|\AttrUniverse|+|\LawSet|+|\text{overlaps}|$ by computing each $\AugSchema{O}$ and then taking connected components.

\section{Identifiability}
\label{sec:identifiability}
Section~\ref{sec:model} fixed what an interface exposes---its legal worlds and, through closure, the attributes each overlap determines. Whether this is enough to \emph{answer a query} is a separate question: the interface evidence may pin down $\query$'s answer across all legal worlds, or leave it ambiguous. A query of the first kind is \emph{identifiable}; this section makes the notion precise, characterizes when it holds, and quantifies what is lost when it does not.

\subsection{Observational equivalence and identifiability}
The interface reveals a world only through its closure-augmented overlap projections. Two different worlds can project identically onto every overlap, leaving them indistinguishable from the interface evidence alone. We capture this as \emph{observational equivalence}, and call a query \emph{identifiable} when its answer never differs between two such worlds.

\begin{definition}\label{def:obs-equiv}
We call $w$ and $w'$ \emph{obs-equivalent}, written $w\sim w'$, if $w|_{\AugSchema{O}} = w'|_{\AugSchema{O}}$ for every designated overlap $O$ (where $\AugSchema{O}=\AttrClosure{O}$). Equivalently, writing $\mathrm{Obs}(w)=(w|_{\AugSchema{O}})_{O\in\Omega}$ for the \emph{observation} of $w$, we have $w\sim w'$ iff $\mathrm{Obs}(w)=\mathrm{Obs}(w')$.
\end{definition}

\begin{example}\label{ex:obs-equiv}
Continue the running example of Table~\ref{tab:re-schema}: its two designated overlaps are the \textsc{Billing} and \textsc{Support} view schemas, $O_B$ and $O_S$. Obs-equivalence requires agreement on their closure-augmented schemas, $\AugSchema{O_B}=\{\tx{email},\allowbreak\tx{invoice},\allowbreak\tx{overdue},\allowbreak\tx{cid}\}$ (the resolver law $\tx{email}\to\tx{cid}$ adds \tx{cid}) and $\AugSchema{O_S}=\{\tx{cid},\allowbreak\tx{ticket},\allowbreak\tx{severity}\}$. The worlds $w$ and $w'$ of Table~\ref{tab:two-worlds} make this concrete: they agree on the \textsc{Billing} and \textsc{Support} instances but assign \tx{a@x} different \tx{cid} values, so they disagree on $\AugSchema{O_B}$ (which contains \tx{cid}) and are \emph{not} obs-equivalent. Without the resolver law, \tx{cid} leaves $\AugSchema{O_B}$ and the two become obs-equivalent, differing only in a linkage the interface no longer determines.
\end{example}

Identifiability asks whether obs-equivalence forces a query's answer to agree. We take queries to be \emph{conjunctive}, the standard language of joins and projections: a \cq{} $\query=\exists\bar z\,\bigwedge_j R_{U_j}(\bar v_j)$ references the attributes $\attribute{\query}=\bigcup_j U_j$, its \emph{footprint}, and a Boolean \cq{} returns true or false on each world.

\begin{definition}\label{def:query-ident}
A \cq{} $\query$ is \emph{identifiable} from the interface evidence if $\query(w)=\query(w')$ for all legal worlds $w,w'\in\LegalWorlds$ with $w\sim w'$.
\end{definition}

\begin{example}\label{ex:identifiable}
With $\LawSet=\{\tx{email}\to\tx{cid}\}$, the query $\query$ reads \tx{overdue} from $R_{\AugSchema{O_B}}$ and \tx{severity} from $R_{\AugSchema{O_S}}$, joined on \tx{cid}~$\in\AugSchema{O_B}$: every atom uses a symbol from $\mathcal{L}_{\mathrm{iv}}$, so Theorem~\ref{thm:iv-cq-iff} certifies $\query$ as identifiable. Without the law ($\LawSet=\varnothing$): $\AugSchema{O_B}=\{\tx{email},\tx{invoice},\tx{overdue}\}$ loses \tx{cid}, leaving \textsc{Billing} and \textsc{Support} unlinked. The two worlds $w$ and $w'$ of Table~\ref{tab:two-worlds} then agree on both view projections yet answer $\query$ differently: in $w$, overdue \tx{a@x} is paired with high-severity \tx{cid}{=}C1 ($\query=\mathrm{true}$); in $w'$, with low-severity C2 ($\query=\mathrm{false}$). Same evidence, different answers---$\query$ is not identifiable without the resolver law.
\end{example}

\subsection{Certifying identifiability}
Identifiability quantifies over all obs-equivalent pairs of legal worlds---far too many to test directly. But two structural conditions, each decidable from the schema alone, are sufficient. The first asks that the query's footprint be covered by a single overlap closure:

\begin{theorem}[Closure certificate]\label{thm:frontier-suff}
Let $\query$ be a \cq. If there exists a designated overlap $O$ such that
\[\attribute{\query} \;\subseteq\; \AttrClosure{O},\]
then $\query$ is identifiable.
\end{theorem}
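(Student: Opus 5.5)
The plan is to show that the answer of $\query$ on a world $w$ depends only on the projection $w|_{\AttrClosure{O}}$. Obs-equivalence then forces equal answers. The argument needs only Definitions~\ref{def:obs-equiv} and~\ref{def:query-ident}; the interface laws enter only through the fact that $\AugSchema{O}=\AttrClosure{O}$ is one of the projections on which obs-equivalent worlds agree. Legality of $w,w'$ is not needed beyond membership in $\LegalWorlds$.

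\textbf{Step 1 (locality of CQs).} Fix $O$ with $\attribute{\query}\subseteq\AttrClosure{O}$. Each atom $R_{U_j}(\bar v_j)$ of $\query$ is evaluated on $w$ against the projection $w|_{U_j}$; here I read the atom $R_U$ as the view $w|_U$, following Section~\ref{sec:model}. Since $U_j\subseteq\attribute{\query}\subseteq\AttrClosure{O}$, we have
\[ w|_{U_j} = \bigl(w|_{\AttrClosure{O}}\bigr)\big|_{U_j}. \]
So every relation the query touches is a function of $w|_{\AttrClosure{O}}$ alone. A satisfying assignment of $\bar z$ and the free terms for $\query$ on $w$ is a homomorphism into the family $(w|_{U_j})_j$. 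Therefore $\query(w)$ is determined by this family and hence by $w|_{\AttrClosure{O}}$. Concretely, I will write $\query(w)=\hat\query\bigl(w|_{\AttrClosure{O}}\bigr)$, where $\hat\query$ evaluates the same CQ after projecting its argument onto each $U_j$.

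\textbf{Step 2 (transfer).} Let $w,w'\in\LegalWorlds$ with $w\sim w'$. By Definition~\ref{def:obs-equiv}, $w|_{\AugSchema{O}}=w'|_{\AugSchema{O}}$ with $\AugSchema{O}=\AttrClosure{O}$. Step~1 then gives
\[ \query(w)=\hat\query\bigl(w|_{\AttrClosure{O}}\bigr)=\hat\query\bigl(w'|_{\AttrClosure{O}}\bigr)=\query(w'). \]
This is exactly Definition~\ref{def:query-ident}.

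\textbf{Main obstacle.} The only delicate point is Step~1, which requires pinning down the semantics of a CQ over a single universal-relation world: an atom $R_U$ must be read as the projection $w|_U$. Under that reading the proof is just a routine factorization through projection, and I would state the semantics explicitly before the argument. If instead atoms were meant to range over arbitrary tuples of $w$ jointly, for example with one tuple variable shared across atoms, the argument would still go through. In that case the whole existential is evaluated over tuples of $w$ restricted to $\attribute{\query}\subseteq\AttrClosure{O}$, so only $w|_{\attribute{\query}}$ matters, and this is again a projection of $w|_{\AttrClosure{O}}$.
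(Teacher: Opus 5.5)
Your proposal is correct and is essentially the paper's main-text proof: obs-equivalence gives $w|_{\AugSchema{O}}=w'|_{\AugSchema{O}}$ with $\AugSchema{O}=\AttrClosure{O}$, and every atom extension $w|_{U_j}$ is a projection of that common view, so $\query(w)=\query(w')$; as you observe, the laws enter only through the definition of $\AugSchema{O}$. The appendix's machine-checked version differs only in that it starts from agreement on $O$ itself and lifts it tuple-by-tuple to $\AttrClosure{O}$ via cross-world FD determinacy, which is where legality is used; your argument does not need that step under the main-text definition of $\sim$.
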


\begin{proof}
By Definition~\ref{def:obs-equiv}, $w\sim w'$ implies $w|_{\AugSchema{O}}=w'|_{\AugSchema{O}}$ for every designated overlap $O$. If $\attribute{\query}\subseteq\AttrClosure{O}=\AugSchema{O}$, both worlds agree on every attribute in $\attribute{\query}$. Equal footprint projections mean each atom $R_{U_j}$ of $\query$ has identical extension in $w$ and $w'$; therefore $\query(w)=\query(w')$.
\end{proof}

\paragraph{Interface-visible queries.}
Theorem~\ref{thm:frontier-suff} certifies a \cq{} by inspecting its footprint $\attribute{\query}$ and checking that it falls within a single overlap closure---a test repeated for each query. For a broad class of queries this per-query check is unnecessary: those \emph{written over} the augmented overlap layer are identifiable by their vocabulary alone.

For each designated overlap $O$, let $R_{\AugSchema{O}}$ be a relation symbol of arity $|\AugSchema{O}|$, interpreted on world $w$ as $w|_{\AugSchema{O}}$, the projection of $w$ to the closure-augmented overlap schema. The \emph{interface-visible vocabulary} is $\mathcal{L}_{\mathrm{iv}}=\{R_{\AugSchema{O}}\mid O\in\OverlapSet\}$, and a \cq{} is \emph{interface-visible} when every one of its atoms uses a symbol from $\mathcal{L}_{\mathrm{iv}}$.

Multi-view pretraining objectives are a canonical instance: contrastive loss and co-occurrence prediction compute functions of the form $\ell(w|_{\AugSchema{O}},w'|_{\AugSchema{O}})$, which are \cq{s} over $\mathcal{L}_{\mathrm{iv}}$. The next result shows every such query is identifiable, with no footprint inspection at all.

\begin{theorem}[Interface-visible identifiability]\label{thm:iv-cq-iff}
Every interface-visible \cq\ is identifiable.
\end{theorem}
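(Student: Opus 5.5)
The plan is to argue semantically, exactly as in the proof of Theorem~\ref{thm:frontier-suff}, but now atom by atom rather than via a single overlap. Fix an interface-visible \cq{} $\query=\exists\bar z\,\bigwedge_j R_{\AugSchema{O_j}}(\bar v_j)$, where each $O_j\in\OverlapSet$. Take any two legal worlds $w,w'\in\LegalWorlds$ with $w\sim w'$. The aim is to show that the two relational structures on which $\query$ is evaluated, namely the $\mathcal{L}_{\mathrm{iv}}$-structures induced by $w$ and by $w'$, are literally identical. Since the truth value of a Boolean \cq{} depends only on the extensions of the relation symbols it mentions, this forces $\query(w)=\query(w')$.

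The key steps are as follows.
\begin{enumerate}[label=(\roman*)]
\item Recall that the interpretation of $R_{\AugSchema{O}}$ on $w$ is $w|_{\AugSchema{O}}$.
\item By Definition~\ref{def:obs-equiv}, $w\sim w'$ gives $w|_{\AugSchema{O}}=w'|_{\AugSchema{O}}$ for \emph{every} $O\in\OverlapSet$. In particular, this holds for each $O_j$ occurring in $\query$, so every atom symbol has the same extension in $w$ and in $w'$.
\item Conclude by the standard fact that the satisfaction of an existential positive sentence is determined by the extensions of its relation symbols. Concretely, a satisfying assignment of $\bar z$ (together with the free variables, if $\query$ is not Boolean) for the matrix in $w$ is also one in $w'$, and conversely. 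Hence $\query(w)=\query(w')$, and for non-Boolean queries the answer sets coincide, which is all Definition~\ref{def:query-ident} requires.
\end{enumerate}

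The only point needing care is that, unlike in Theorem~\ref{thm:frontier-suff}, the atoms may come from \emph{different} overlaps $O_j$, and the footprint $\attribute{\query}=\bigcup_j\AugSchema{O_j}$ need not lie inside any single closure. One should therefore not try to reduce to Theorem~\ref{thm:frontier-suff}. Instead, the argument must rely on the fact that joins between atoms are evaluated on the per-overlap projections themselves, so agreement of each projection separately suffices. No agreement on a joint projection $w|_{\AugSchema{O_1}\cup\AugSchema{O_2}}$ is needed.

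I expect this to be the main (conceptual) obstacle: making explicit that the atoms' semantics are the projections rather than the world's joint columns. Once that is stated, the proof is a two-line invariance argument, and legality of $w,w'$ is not even used beyond restricting the quantification.
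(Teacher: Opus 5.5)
Your proposal is correct and follows essentially the same route as the paper: obs-equivalence makes every symbol of $\mathcal{L}_{\mathrm{iv}}$ have the same extension in $w$ and $w'$, so any \cq{} written over those symbols evaluates identically on both worlds. Your remarks that the argument should not be reduced to Theorem~\ref{thm:frontier-suff}, and that legality is not actually used, match the paper's discussion and its appendix note on the interface-visible fragment.
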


\begin{proof}
Let $w\sim w'$. By Definition~\ref{def:obs-equiv}, $w|_{\AugSchema{O}}=w'|_{\AugSchema{O}}$ for every $O\in\OverlapSet$, so every symbol in $\mathcal{L}_{\mathrm{iv}}$ has identical extension (as a set of tuples) in both worlds. Since $\query$ uses only symbols from $\mathcal{L}_{\mathrm{iv}}$, its evaluation depends entirely on these extensions, and therefore $\query(w)=\query(w')$.
\end{proof}

Theorems~\ref{thm:frontier-suff} and~\ref{thm:iv-cq-iff} are complementary. Theorem~\ref{thm:frontier-suff} certifies any \cq\ whose footprint happens to fall in a single closure, regardless of how the query is written; Theorem~\ref{thm:iv-cq-iff} certifies \cq{s} \emph{written over} $\mathcal{L}_{\mathrm{iv}}$, including joins across different overlaps, with no closure condition. Multi-view pretraining objectives therefore sit inside the identifiable layer by their vocabulary alone; the contribution of this paper is to characterise which downstream queries fall \emph{outside} it.

\subsection{The cost of non-identifiability}
The certificates above say when a query \emph{is} identifiable. But when a query fails them, does that cost anything---could more data or a larger model still recover the answer? It cannot: the obstruction is structural, not statistical, and the only remedy is to change the interface. We begin with the error floor:

\begin{theorem}[Minimax lower bound]\label{thm:nonident-minimax}
If $\query$ is not identifiable, there exist $w\sim w'$ with $\query(w)\neq \query(w')$. For any estimator $\widehat{\query}$ whose output depends only on interface evidence,
\[\sup_{u\in\{w,w'\}} \Pr\!\left[\widehat{\query}(u)\neq \query(u)\right] \;\geq\; \tfrac{1}{2}\]
under $0$--$1$ loss. The $\tfrac{1}{2}$ error floor is irreducible.
\end{theorem}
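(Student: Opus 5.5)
The first claim is just the negation of Definition~\ref{def:query-ident}: if $\query$ is not identifiable, the definition fails for some pair. So there are legal worlds $w,w'\in\LegalWorlds$ with $w\sim w'$ and $\query(w)\neq\query(w')$. I fix such a witness pair. Since $\query$ is Boolean, I may assume without loss of generality that $\query(w)=\mathrm{true}$ and $\query(w')=\mathrm{false}$.

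Next I make precise what it means for $\widehat{\query}$ to depend only on interface evidence. I will model it as a (possibly randomized) map $\widehat{\query}(u)=g(\mathrm{Obs}(u),\xi)$, where $\mathrm{Obs}$ is the observation map of Definition~\ref{def:obs-equiv}. The randomness $\xi$ is independent of $u$ and may include training samples, model parameters, or internal coins. Samples drawn from the world itself also count only through $\mathrm{Obs}(u)$, so arbitrarily large $N$ changes nothing. This modelling step is the main obstacle. The statement is informal here, and the whole proof rests on the estimator's output distribution being a function of $\mathrm{Obs}(u)$ alone. I would state this explicitly as the formal meaning of ``depends only on interface evidence''. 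With it, $w\sim w'$ gives $\mathrm{Obs}(w)=\mathrm{Obs}(w')$. Hence $\widehat{\query}(w)$ and $\widehat{\query}(w')$ have the same distribution. Write $p=\Pr[\widehat{\query}(w)=\mathrm{true}]=\Pr[\widehat{\query}(w')=\mathrm{true}]$.

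Finally I compute the errors under $0$--$1$ loss. On $w$ the error probability is $1-p$, and on $w'$ it is $p$. Therefore
\[\sup_{u\in\{w,w'\}}\Pr[\widehat{\query}(u)\neq\query(u)]=\max(p,1-p)\ \ge\ \tfrac12 .\]
The same bound also holds for the average under the uniform prior on $\{w,w'\}$, since that average equals $\tfrac12$ exactly. Equality is attained by the fair coin $p=\tfrac12$, so the constant $\tfrac12$ cannot be improved, which is the sense in which the floor is irreducible. It is independent of sample size because $\widehat{\query}$ ranges over every map $g$. If one prefers the Le~Cam view, the argument is the two-point bound with total variation $\mathrm{TV}(P_w,P_{w'})=0$, which gives the same conclusion.
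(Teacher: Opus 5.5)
Your proof is correct and is the paper's argument: a witness pair $w\sim w'$ with $\query(w)\neq\query(w')$ cannot be told apart by any evidence-only estimator, so that estimator must err on one of them. Your one refinement is to allow randomized estimators whose output distribution is a function of $\mathrm{Obs}(u)$, giving error $\max(p,1-p)\ge\tfrac12$ with equality for a fair coin. This is what makes the $\Pr$ and the constant $\tfrac12$ meaningful, since for the deterministic $\widehat{\query}$ the paper writes down, the same reasoning gives error $1$ on one of the two worlds.
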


\begin{proof}
Since $\query$ is not identifiable, by Definition~\ref{def:query-ident} there exist $w\sim w'$ with $w,w'\in\LegalWorlds$ and $\query(w)\neq \query(w')$. Since $w\sim w'$, both worlds produce identical interface evidence; any estimator $\widehat{\query}$ depending only on interface evidence satisfies $\widehat{\query}(w)=\widehat{\query}(w')$.

Without loss of generality, $\query(w)=\mathrm{true}$ and $\query(w')=\mathrm{false}$. If $\widehat{\query}(w)=\widehat{\query}(w')=\mathrm{true}$, then $\widehat{\query}$ errs on $w'$; if $\widehat{\query}(w)=\widehat{\query}(w')=\mathrm{false}$, it errs on $w$. In both cases the estimator misclassifies at least one element of $\{w,w'\}$, so
\[\sup_{u\in\{w,w'\}} \Pr\!\left[\widehat{\query}(u)\neq \query(u)\right] \;\geq\; \tfrac{1}{2}.\]
No estimator depending only on interface evidence can do better, since it cannot distinguish $w$ from $w'$.
\end{proof}

\paragraph{Outcome lower bounds.}
Theorem~\ref{thm:nonident-minimax} establishes that no predictor can beat the $\nicefrac{1}{2}$ error floor on non-identifiable queries. But that bound is uniform---it says nothing about the cost of correctly answering queries that \emph{are} identifiable. For those, the interface imposes a dual constraint that Theorem~\ref{thm:rate-lb} makes precise: a minimum number of bits any correct predictor must encode.

Write $m_{\query} = |\{\query(w) \mid w\in\LegalWorlds\}|$ for the \emph{outcome multiplicity} of an identifiable \cq{} $\query$---the number of distinct answer values it realises over all legal worlds.

\begin{theorem}[Outcome lower bound]\label{thm:rate-lb}
Let $\query$ be an identifiable \cq with multiplicity $m_{\query}$. Any predictor that reads interface evidence, stores it in a representation of size at most $2^k$, and answers $\query$ correctly on every $w\in\LegalWorlds$ must satisfy $k\ge\log_2 m_{\query}$.
\end{theorem}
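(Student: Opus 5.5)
The argument is a pigeonhole bound. I model the predictor as an encoder $E$ followed by a decoder $D$. The encoder maps interface evidence $\mathrm{Obs}(w)$ to one of at most $2^k$ codes, and the decoder maps each code to an answer value. Correctness on every legal world means $D(E(\mathrm{Obs}(w)))=\query(w)$ for all $w\in\LegalWorlds$. The composite $w\mapsto D(E(\mathrm{Obs}(w)))$ is therefore a function on $\LegalWorlds$ whose image is exactly the set $\{\query(w)\mid w\in\LegalWorlds\}$, which has $m_{\query}$ elements.

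The key step is to bound the size of that image. Because $D$ is a function on the code space, the image of $D\circ E\circ\mathrm{Obs}$ has at most as many elements as the set of codes actually used, which is at most $2^k$. Hence $m_{\query}\le 2^k$, and taking logarithms gives $k\ge\log_2 m_{\query}$. Concretely, I would pick one witness world $w_v$ for each realised answer value $v$. Distinct values $v\neq v'$ force $D(E(\mathrm{Obs}(w_v)))\neq D(E(\mathrm{Obs}(w_{v'})))$, and hence $E(\mathrm{Obs}(w_v))\neq E(\mathrm{Obs}(w_{v'}))$. So $v\mapsto E(\mathrm{Obs}(w_v))$ is an injection from the $m_{\query}$ answer values into a set of size at most $2^k$.

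Identifiability is what makes the statement coherent: by Definition~\ref{def:query-ident}, $\query(w)$ depends only on $\mathrm{Obs}(w)$ among legal worlds, so a correct interface-only predictor can exist at all. By contrast, Theorem~\ref{thm:nonident-minimax} rules out any such predictor for non-identifiable queries. The counting argument itself does not depend on identifiability beyond this point.

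The main obstacle is interpretive rather than technical. I must fix what ``representation of size at most $2^k$'' and ``answer value'' mean. For a Boolean \cq, $m_{\query}\le 2$; for a non-Boolean \cq, answers are relations. I will read the size bound as a bound on the number of distinct representation states, not the bit-length of any single state. I will also treat $\query(w)$ as an arbitrary value, so the bound holds for both Boolean and non-Boolean queries. With these conventions the injection argument is immediate.
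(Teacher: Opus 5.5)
Your pigeonhole argument is correct and is essentially the paper's proof. The appendix version also factors the predictor as a representation map into at most $2^k$ states followed by a decoder, and notes that the realised answers lie in the decoder's range, giving $m_{\query}\le 2^k$; the main-text version phrases the same count through equivalence classes of $\sim$. Your observation that the counting never uses identifiability matches the appendix proof.
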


\begin{proof}
Since $\query$ is identifiable it induces a well-defined map on $\LegalWorlds/{\sim}$. Distinct elements of $Y_{\query}=\{\query(w)\mid w\in\LegalWorlds\}$ correspond to distinct equivalence classes, so the representation must have at least $m_{\query}$ states: $2^k\ge m_{\query}$.
\end{proof}

The distributional version follows from Fano's inequality~\cite{fano1961transmission}:

\begin{corollary}[Fano lower bound]\label{cor:fano}
Let $\query$ be an identifiable \cq with $m_{\query}\ge 2$. Under a uniform prior over obs-equivalence classes, any predictor whose internal representation $R$ has at most $2^k$ states satisfies
\[
\Pr[\widehat{\query}\neq \query]\;\ge\;1-\frac{I(\query;R)+1}{\log_2 m_{\query}}\;\ge\;1-\frac{k+1}{\log_2 m_{\query}}.
\]
Achieving error at most $\delta$ requires $k\ge(1-\delta)\log_2 m_{\query} - 1$.
\end{corollary}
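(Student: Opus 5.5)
The plan is to reduce the statement to the standard Fano inequality applied to a Markov chain from the query answer, through the representation, to the prediction. Fix the uniform prior over obs-equivalence classes $\LegalWorlds/{\sim}$, and let $C$ denote the random class. Since $\query$ is identifiable, it induces a well-defined map $C\mapsto \query(C)$ on classes, as in the proof of Theorem~\ref{thm:rate-lb}. Hence $\query$ is a random variable taking at most $m_{\query}$ values. The predictor reads only interface evidence, so its representation is $R=f(\mathrm{Obs}(w))$, which is a function of $C$, and its output is $\widehat{\query}=g(R)$. This yields the Markov chain $\query \leftarrow C \to R \to \widehat{\query}$, and in particular $\query - R - \widehat{\query}$.

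The first step applies Fano's inequality in the form
\[
H(\query\mid R)\le H(\query\mid\widehat{\query}) \le 1 + \Pr[\widehat{\query}\neq\query]\log_2 m_{\query},
\]
using data processing for the first inequality and the binary-entropy bound $h(P_e)\le 1$ for the second. Writing $H(\query\mid R)=H(\query)-I(\query;R)$ and rearranging gives
\[
\Pr[\widehat{\query}\neq\query]\ge \frac{H(\query)-I(\query;R)-1}{\log_2 m_{\query}}.
\]

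The main obstacle is that the statement needs $H(\query)=\log_2 m_{\query}$, which would give the displayed bound $1-(I+1)/\log_2 m_{\query}$. Under a uniform prior over classes, $\query$ is uniform over its answers only if each answer value is realised by equally many classes. My first attempt would be to read the prior as uniform over the induced answer values, that is, uniform over the classes modulo the equivalence "same $\query$-answer". This coarsening is harmless because the predictor's error depends only on the answer. With that reading, $H(\query)=\log_2 m_{\query}$ exactly. If the paper insists on uniformity over classes themselves, I would either add this equal-fibre assumption explicitly or restrict the prior to one representative class per answer value, which is still a uniform prior over a set of obs-equivalence classes and suffices for a lower bound. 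I expect this modelling point, rather than any calculation, to be the delicate part.

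The second inequality follows from $I(\query;R)\le H(R)\le \log_2 2^k = k$, since $R$ has at most $2^k$ states. For the final claim, setting $\Pr[\widehat{\query}\neq\query]\le\delta$ gives $\delta\ge 1-(k+1)/\log_2 m_{\query}$, so $k+1\ge(1-\delta)\log_2 m_{\query}$, which is the stated requirement. The hypothesis $m_{\query}\ge 2$ ensures that $\log_2 m_{\query}>0$, so the divisions are legitimate.
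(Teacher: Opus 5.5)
Your argument (conditional Fano plus data processing along $\query - R - \widehat{\query}$, then $I(\query;R)\le H(R)\le k$) is the route the paper intends when it says the corollary ``follows from Fano's inequality''. The paper gives no further proof, and its appendix lists this distributional lift as not formalised; every step you write is correct. The uniformity issue you flag is a real defect of the statement, not a technicality. With answer frequencies $97{:}1{:}1{:}1$ over $100$ classes ($m_{\query}=4$), a one-state predictor that always outputs the majority answer errs with probability $0.03<\tfrac12=1-\tfrac{k+1}{\log_2 m_{\query}}$. So you must change the prior as you propose: your bound then holds for a uniform prior on one class per answer value, and hence for the minimax risk, but not for the literal prior. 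This is exactly the set-up the appendix uses for its proved substitute, $P_e\ge 1-2^k/m_{\query}$.
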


For Boolean queries ($m_{\query}\le 2$) a single bit suffices; the bound becomes strictly informative for $m_{\query}>2$, where the required capacity grows as $\log_2 m_{\query}$---quantifying the minimum interface complexity of answering $\query$.

The results so far treat the interface as fixed, characterising which queries are identifiable under a given $(\LawSet,\Omega)$. But the interface is itself a design target---adding overlaps or laws changes it. Even a single such addition can flip a query's status discontinuously, a \emph{capability jump}:

\begin{theorem}[Capability jumps]\label{thm:connectivity-emergence}
Let $\query$ have atoms $R_{U_1},\dots,R_{U_m}$. Under an augmented interface with law set $\LawSet'$ and overlaps $\Omega'$: if for every atom $j$ there exists a designated overlap $O_j$ (under $\Omega'$) with $\attribute{U_j}\subseteq \AttrClosure[\LawSet']{O_j}$, then $\query$ is identifiable under the augmented interface. If $\query$ was not identifiable before augmentation, this constitutes a structural capability jump.
\end{theorem}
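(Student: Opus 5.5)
The plan is to argue atom by atom, in the style of the proof of Theorem~\ref{thm:frontier-suff}, and then use the fact that a \cq's truth value depends only on the extensions of its atoms. Fix the augmented interface $(\LawSet',\Omega')$ with its closure-augmented schemas $\AttrClosure[\LawSet']{O}$ for $O\in\Omega'$. Take two legal worlds $w,w'$ (legal with respect to $\LawSet'$) that are obs-equivalent under the augmented interface. By Definition~\ref{def:obs-equiv}, applied to $\Omega'$ and $\LawSet'$, we get $w|_{\AttrClosure[\LawSet']{O}}=w'|_{\AttrClosure[\LawSet']{O}}$ for every $O\in\Omega'$.

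\textbf{Step 1 (per-atom agreement).} For each atom $j$ the hypothesis supplies $O_j\in\Omega'$ with $U_j\subseteq\AttrClosure[\LawSet']{O_j}$. Projection is monotone under restriction of the attribute set, $(w|_{A})|_{U_j}=w|_{U_j}$ for $U_j\subseteq A$. Hence $w|_{U_j}=w'|_{U_j}$, so the atom $R_{U_j}$, which is interpreted as the projection to $U_j$, has the same extension in $w$ and $w'$.

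\textbf{Step 2 (query agreement).} A \cq{} $\exists\bar z\,\bigwedge_j R_{U_j}(\bar v_j)$ is evaluated by searching for a valuation of its variables that satisfies every atom. Its answer is therefore a function of the tuple of atom extensions. Equal extensions give $\query(w)=\query(w')$, so $\query$ is identifiable under the augmented interface by Definition~\ref{def:query-ident}. Equivalently, we can rewrite each atom $R_{U_j}(\bar v_j)$ as $\exists\bar y\,R_{\AugSchema{O_j}}(\dots)$, projecting away the attributes outside $U_j$. This makes $\query$ an interface-visible \cq{}, and Theorem~\ref{thm:iv-cq-iff} applies directly. I would include this rewriting as the cleanest reduction.

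\textbf{Step 3 (the jump).} The second sentence of the statement needs no further proof. If $\query$ was not identifiable under $(\LawSet,\Omega)$, there is by definition a witness pair that Theorem~\ref{thm:nonident-minimax} turns into a $\tfrac12$ error floor. Under the augmented interface, Steps 1--2 give identifiability, so a correct evidence-only estimator exists. The status therefore flips discretely, with no intermediate regime.

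The main obstacle is bookkeeping rather than depth. One has to be careful that obs-equivalence and legality are taken relative to the \emph{augmented} interface, since the set of legal worlds shrinks when laws are added. The proof never uses legality beyond membership in $\LegalWorlds$, so this causes no difficulty. One also has to make precise that atoms of $\query$ are interpreted as projections to $U_j$, so that agreement on a superset of $U_j$ implies agreement on $U_j$. Note that the attributes of different atoms may lie in different closures; that is exactly why the per-atom argument, rather than the single-closure Theorem~\ref{thm:frontier-suff}, is needed.
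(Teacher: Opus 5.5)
Your proof is correct and follows the paper's argument. Obs-equivalence under $(\LawSet',\Omega')$ gives agreement on each $\AttrClosure[\LawSet']{O_j}$, hence on each $U_j$, so the atoms have equal extensions and $\query$ gives equal answers; your optional rewriting of each atom as a projection of the corresponding interface-visible relation (invoking Theorem~\ref{thm:iv-cq-iff}) is the same observation the paper makes in the \tx{CheckCert} paragraph.
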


\begin{proof}
By Definition~\ref{def:obs-equiv}, $w\sim' w'$ under $(\LawSet',\Omega')$ requires $w|_{\AttrClosure[\LawSet']{O_j}} = w'|_{\AttrClosure[\LawSet']{O_j}}$ for every designated overlap $O_j\in\Omega'$. Since $\attribute{U_j}\subseteq\AttrClosure[\LawSet']{O_j}$ for each atom $j$, both worlds agree on $\attribute{U_j}$, and therefore $\query(w)=\query(w')$.
\end{proof}

\subsection{Identifiability and multi-view pretraining}
Identifiability is a structural property of the interface. But a multi-view pretraining system never computes a closure---it only drives down an overlap-anchored loss. Does the structural notion constrain what such a system can deliver? It does: once the loss falls below a fixed threshold, the predictor must return the unique answer on every certified query, regardless of capacity or sample size. We measure prediction quality by Jensen--Shannon divergence~\cite{lin2002divergence}, under which the identifiability dichotomy induces a sharp quantitative boundary.

\begin{definition}\label{def:pair-disc}
The \emph{pair discrepancy} of $\query$ under interface $I$ is
\[
\Delta_{\mathrm{JS}}(\query,I)=\sup\bigl\{
  \mathrm{JS}(\delta_{\query(w)}\|\delta_{\query(w')})
  :w,w'\in\LegalWorlds,\; w\sim w'
\bigr\}.
\]
\end{definition}

Since $\query$ takes values in a finite set, $\mathrm{JS}(\delta_{\query(w)}\|\delta_{\query(w')})$ equals $1$~bit when $\query(w)\neq \query(w')$ and $0$ otherwise (using $\log_2$; this holds for all \cq{}s: any two distinct answer values, whether Boolean or tuple-sets, yield $\mathrm{JS}=1$~bit between the corresponding Dirac masses); hence $\Delta_{\mathrm{JS}}(\query,I)\in\{0,1\}$.

\begin{theorem}[Zero-discrepancy threshold]\label{thm:js-threshold}
$\Delta_{\mathrm{JS}}(\query,I)=0$ if and only if $\query$ is identifiable under $I$. Consequently, if $\attribute{\query}\subseteq\AttrClosure{O}$ for some $O\in\OverlapSet$, the interface achieves $\Delta_{\mathrm{JS}}=0$; augmenting the interface to satisfy this condition reduces pair discrepancy discontinuously from $1$ to $0$.
\end{theorem}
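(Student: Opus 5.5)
The plan is to reduce everything to the two-valued nature of the pair discrepancy noted just before the statement. For each pair $w,w'\in\LegalWorlds$ with $w\sim w'$, the term $\mathrm{JS}(\delta_{\query(w)}\|\delta_{\query(w')})$ is $0$ if $\query(w)=\query(w')$ and exactly $1$ bit otherwise. I will verify this directly. If the answers coincide, the two Dirac masses are equal and JS vanishes. If they differ, the mixture $M=\tfrac12(\delta_a+\delta_b)$ gives $\KL(\delta_a\|M)=\log_2 2=1$, and likewise for $\delta_b$, so $\JS=1$. The index set of the supremum is nonempty, since $\LegalWorlds\neq\varnothing$ and $w\sim w$. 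Hence $\Delta_{\mathrm{JS}}(\query,I)$ is the supremum of a nonempty subset of $\{0,1\}$, and it equals $1$ exactly when some obs-equivalent legal pair disagrees on $\query$.

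The equivalence then follows by unfolding Definition~\ref{def:query-ident}. Suppose $\query$ is identifiable. Then every obs-equivalent legal pair agrees, every term of the supremum is $0$, and $\Delta_{\mathrm{JS}}=0$. Conversely, suppose $\query$ is not identifiable. Then, as in the first line of Theorem~\ref{thm:nonident-minimax}, there are legal $w\sim w'$ with $\query(w)\neq\query(w')$. That pair contributes a term of $1$, so $\Delta_{\mathrm{JS}}=1\neq 0$. Taking the contrapositive gives the ``only if'' direction.

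For the consequence, I will invoke Theorem~\ref{thm:frontier-suff}. If $\attribute{\query}\subseteq\AttrClosure{O}$ for some $O\in\OverlapSet$, then $\query$ is identifiable, so $\Delta_{\mathrm{JS}}=0$ by the equivalence just proved. For the discontinuity claim, take an interface $I$ under which $\query$ is not identifiable, so $\Delta_{\mathrm{JS}}(\query,I)=1$. Let $I'$ be an augmentation under which the closure condition holds, so $\Delta_{\mathrm{JS}}(\query,I')=0$ by Theorem~\ref{thm:frontier-suff} or Theorem~\ref{thm:connectivity-emergence}. Since $\Delta_{\mathrm{JS}}$ only takes the values $0$ and $1$, there is no intermediate value, and the drop from $1$ to $0$ is a jump.

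The main obstacle is the JS computation for non-Boolean \cq{}s whose answers are sets of tuples. There I need only that $\query$ takes values in some set, so that Dirac masses on distinct values have disjoint supports, and that the base-$2$ logarithm is used. No metric on answers is needed, which keeps the argument uniform across Boolean and non-Boolean queries.
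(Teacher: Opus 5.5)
Your proposal is correct and follows the paper's proof: both directions come from unfolding Definition~\ref{def:query-ident} together with the fact that each JS term is $0$ or $1$ bit, and the consequence is obtained from Theorem~\ref{thm:frontier-suff}. Your explicit Dirac-mass KL computation and your check that the supremum ranges over a nonempty set are details the paper leaves implicit. So is your stated assumption that $\query$ is non-identifiable before augmentation, which the discontinuity claim needs.
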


\begin{proof}
($\Rightarrow$) If $\query$ is identifiable then $\query(w)=\query(w')$ for every $w\sim w'$, so every term in the sup is $0$.
($\Leftarrow$) Contrapositive: if $\query$ is not identifiable there exist $w\sim w'$ with $\query(w)\neq \query(w')$, giving $\Delta_{\mathrm{JS}}=1>0$.  The closure certificate (Theorem~\ref{thm:frontier-suff}) provides a checkable sufficient condition for $\Delta_{\mathrm{JS}}=0$.
\end{proof}

The threshold is sharp: no augmentation can achieve $0<\Delta_{\mathrm{JS}}<1$.  A quantitative, noise-tolerant refinement---a \emph{small} overlap loss, not only an exactly zero one, already forces exact agreement on certified queries---is given below (Theorem~\ref{thm:robust-threshold}).  Sections~\ref{sec:exp-rq1}--\ref{sec:exp-scalability} validate Theorems~\ref{thm:frontier-suff}--\ref{thm:connectivity-emergence} directly.

\paragraph{Implication for pretraining practice.}
Theorem~\ref{thm:js-threshold} separates two distinct quantities.  The pair discrepancy $\Delta_{\mathrm{JS}}(\query,I)\in\{0,1\}$ is \emph{structural}: determined by the interface alone, $\Delta_{\mathrm{JS}}=0$ iff $\query$ is identifiable.  The training loss is a \emph{model} quantity: a pretraining pipeline drives its overlap loss toward zero, and a predictor that lands in the low-loss slice is constrained by Theorem~\ref{thm:robust-threshold} to return the unique answer on every certified query, independently of sample size.  For non-certified queries, $\Delta_{\mathrm{JS}}=1$ is irreducible regardless of training, and the $\nicefrac{1}{2}$ floor of Theorem~\ref{thm:nonident-minimax} applies.  Certifiability is therefore the checkable structural condition separating what pretraining \emph{can} deliver from what it fundamentally \emph{cannot}; the MinAug prescriptions in \S\ref{sec:algorithms} identify the minimal interface changes that move a non-certifiable query across that boundary.

\paragraph{Pretraining objective and noise tolerance.}
We make the pretraining connection precise. A multi-view model is an amortised
inference map $q_\theta(\cdot\mid E)\in\Delta(\LegalWorlds)$ from interface evidence
$E$ to a posterior over legal worlds, evaluated through the induced query estimator
$\widehat{\query}(E)$. Pretraining minimises an overlap-anchored loss whose term for a
designated overlap $O$ compares each world's closure-augmented projection to an
\emph{anchor} $p_O$ fixed across worlds (\eg the observed evidence on $\AugSchema{O}$):
\[
  \ell_{\ov}(w)\;\ge\;\eta_O\,\JS\!\bigl(\delta_{w|_{\AugSchema{O}}}\,\big\|\,p_O\bigr),
  \qquad \eta_O>0,
\]
with $\JS$ in nats ($\kappa{=}1$). The zero-discrepancy threshold then admits a
quantitative form: a \emph{small} loss already forces exact agreement.

\begin{lemma}\label{lem:js-mode}
Let $p\in\Delta(\ma X)$ on a finite set and $x,x'\in\ma X$. If $\JS(\delta_x\|p)\le\gamma$
and $\JS(\delta_{x'}\|p)\le\gamma$ with $\gamma<\tfrac{1}{8\kappa}$, then $x=x'$.
\end{lemma}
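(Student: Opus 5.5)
The plan is to turn the JS bound into an $\ell_1$ bound via Pinsker's inequality, and then use the fact that a point mass close to $p$ forces $p$ to put more than half its mass on that point. Two distinct points cannot both carry more than half the mass of one distribution, so $x=x'$.

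\textbf{Step 1 (Pinsker lower bound for JS).} Write $M=\tfrac12(P+Q)$, so that $\JS(P\|Q)=\tfrac12\KL(P\|M)+\tfrac12\KL(Q\|M)$, in nats. Pinsker's inequality gives $\KL(P\|M)\ge\tfrac12\|P-M\|_1^2$. Since $\|P-M\|_1=\|Q-M\|_1=\tfrac12\|P-Q\|_1$, averaging the two terms yields
\[
\JS(P\|Q)\;\ge\;\tfrac18\,\|P-Q\|_1^2 .
\]
For a general normalisation constant $\kappa$ (the paper fixes $\kappa=1$ for nats), the same computation reads $\JS\ge\|P-Q\|_1^2/(8\kappa)$. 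This is the convention I would adopt, and it is exactly what makes the threshold $\tfrac{1}{8\kappa}$ appear.

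\textbf{Step 2 (mass concentration).} Apply Step 1 with $P=\delta_x$ and $Q=p$. We have $\|\delta_x-p\|_1=(1-p(x))+\sum_{y\ne x}p(y)=2(1-p(x))$. The hypothesis $\JS(\delta_x\|p)\le\gamma$ then gives $4(1-p(x))^2\le 8\kappa\gamma<1$. Hence $1-p(x)<\tfrac12$, that is, $p(x)>\tfrac12$. The same argument applied to $x'$ gives $p(x')>\tfrac12$.

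\textbf{Step 3 (conclusion).} If $x\neq x'$, then $p(x)+p(x')>1$, which contradicts $p$ being a probability distribution on $\ma X$. Therefore $x=x'$.

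The only step needing care is Step 1. The main obstacle is getting the constant right and matching it to the paper's $\kappa$ convention. I would first verify the factor $\tfrac18$ in nats: the $\tfrac12$ from Pinsker times the $\tfrac14$ from the midpoint halving. Then I would note that measuring in other units rescales JS by the factor that $\kappa$ absorbs. Steps 2 and 3 are elementary.
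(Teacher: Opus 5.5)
Your proof is correct and follows the paper's own argument for this lemma. Both use a Pinsker-type bound for Jensen--Shannon divergence to get $p(x)\ge 1-\sqrt{2\kappa\gamma}>\tfrac12$ (your constant $\JS\ge\|P-Q\|_1^2/(8\kappa)$ reproduces exactly this), and then use the fact that two distinct outcomes cannot both carry more than half the mass. The appendix version instead drops the nonnegative $\KL(p\|m)$ term via Gibbs to obtain the elementary point-mass bound $\JS(\delta_x\|p)\ge\tfrac12\log\frac{2}{1+p(x)}$, hence $p(x)\ge 1-4\gamma$, which avoids Pinsker altogether.
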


\begin{proof}
For a point mass, a Pinsker-type bound for Jensen--Shannon divergence
\cite{lin2002divergence} gives $p(x)\ge 1-\sqrt{2\kappa\gamma}$ whenever
$\JS(\delta_x\|p)\le\gamma$; for $\gamma<\tfrac1{8\kappa}$ this exceeds $\tfrac12$, and
likewise $p(x')>\tfrac12$. A finite distribution has at most one value of mass
exceeding $\tfrac12$, so $x=x'$.
\end{proof}

Call $\query$ \emph{$(\varepsilon,0)$-identifiable} when $\query(w)=\query(w')$ for all $w,w'\in\LegalWorlds$ with $\ell_{\ov}(w),\ell_{\ov}(w')\le\varepsilon$. A small enough loss guarantees it:

\begin{theorem}[Robust threshold]\label{thm:robust-threshold}
Suppose $\attribute{\query}\subseteq\AttrClosure{O}$ for a designated overlap $O$ whose loss
term is anchored as above, and let $\varepsilon_0=\eta_O/(8\kappa)$. Then for every
$\varepsilon<\varepsilon_0$, $\query$ is $(\varepsilon,0)$-identifiable.
\end{theorem}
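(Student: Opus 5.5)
The plan is to combine the loss lower bound with Lemma~\ref{lem:js-mode}, and then to argue exactly as in Theorem~\ref{thm:frontier-suff}. Fix $\varepsilon<\varepsilon_0=\eta_O/(8\kappa)$, and let $w,w'\in\LegalWorlds$ satisfy $\ell_{\ov}(w),\ell_{\ov}(w')\le\varepsilon$. We must show $\query(w)=\query(w')$.

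\emph{Step 1 (from loss to a divergence bound).} The anchoring hypothesis gives $\eta_O\,\JS(\delta_{w|_{\AugSchema{O}}}\,\|\,p_O)\le\ell_{\ov}(w)\le\varepsilon$. Hence
\[
\JS\bigl(\delta_{w|_{\AugSchema{O}}}\,\big\|\,p_O\bigr)\le\gamma:=\varepsilon/\eta_O ,
\]
and the same bound holds for $w'$. Because $\varepsilon<\eta_O/(8\kappa)$, we get $\gamma<\tfrac{1}{8\kappa}$, which is exactly the regime of Lemma~\ref{lem:js-mode}.

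\emph{Step 2 (applying the lemma).} Take the finite set $\ma X$ to be the set of possible projections $\{u|_{\AugSchema{O}}: u\in\LegalWorlds\}$, enlarged if necessary by the support of $p_O$. Let $x=w|_{\AugSchema{O}}$ and $x'=w'|_{\AugSchema{O}}$. Since the anchor $p_O$ is the same for both worlds, Lemma~\ref{lem:js-mode} yields $x=x'$, that is, $w|_{\AugSchema{O}}=w'|_{\AugSchema{O}}$. One point must be checked here: that $\ma X$ is finite, so the lemma applies as stated. I expect this to hold because $\AttrUniverse$ is finite and each world is a finite relation. If the value domains are infinite, I would instead restrict to the finite support of $p_O$ together with $\{x,x'\}$, and note that the lemma's argument (the ``at most one value of mass $>\tfrac12$'' step) needs only countable support.

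\emph{Step 3 (from agreement on $\AugSchema{O}$ to agreement of $\query$).} Since $\attribute{\query}\subseteq\AttrClosure{O}=\AugSchema{O}$, the two worlds agree on every atom's attribute set $U_j$. So each atom of $\query$ has the same extension in $w$ and $w'$, and the same reasoning as in the proof of Theorem~\ref{thm:frontier-suff} gives $\query(w)=\query(w')$. Note that obs-equivalence is never needed: the low loss by itself forces agreement on the single relevant overlap.

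I expect the main difficulty to lie in Step~2. One part is the bookkeeping of units: $\kappa$ and $\JS$ must be in nats, consistent with the lemma. The other part is making sure the threshold is strict so that $p(x)>\tfrac12$. Everything else is a direct chaining of the stated inequality with earlier results.
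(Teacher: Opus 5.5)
Your proposal is correct and follows the paper's proof exactly. You divide the anchored loss bound by $\eta_O$ to get both Jensen--Shannon terms at most $\varepsilon/\eta_O<\tfrac{1}{8\kappa}$, invoke Lemma~\ref{lem:js-mode} to obtain $w|_{\AugSchema{O}}=w'|_{\AugSchema{O}}$, and then use $\attribute{\query}\subseteq\AugSchema{O}$ to conclude $\query(w)=\query(w')$. Your finiteness discussion of $\ma X$ is extra care the paper does not take. Note, though, that finitely many attributes and finite relations do not by themselves give finitely many projections over an infinite value domain, so it is your fallback that covers the general case: work on the support of $p_O$ together with $\{x,x'\}$, where only the ``at most one value of mass above $\tfrac12$'' step matters.
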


\begin{proof}
If $\ell_{\ov}(w),\ell_{\ov}(w')\le\varepsilon<\varepsilon_0$, both anchored terms are at
most $\varepsilon/\eta_O<\tfrac1{8\kappa}$, so Lemma~\ref{lem:js-mode} gives
$w|_{\AugSchema{O}}=w'|_{\AugSchema{O}}$. Since
$\attribute{\query}\subseteq\AttrClosure{O}=\AugSchema{O}$, the two worlds agree on the whole
footprint; each atom $R_{U_j}$ has identical extension in both worlds, so $\query(w)=\query(w')$.
\end{proof}

\noindent A pretraining run that drives the overlap loss below $\varepsilon_0$ is therefore
forced to the unique answer on \emph{every} certified query---certified queries are
structurally determined, so this is independent of sample size; whether a given
architecture can drive loss below $\varepsilon_0$ is a separate architectural question---while
leaving the $\nicefrac{1}{2}$ floor of Theorem~\ref{thm:nonident-minimax} untouched for
non-certified queries. Certifiability, not data volume, governs what pretraining can deliver.

\subsection{Witnesses and completeness}
The closure certificate (Theorem~\ref{thm:frontier-suff}) is sufficient, but its failure is not a proof of non-identifiability: a query whose footprint escapes every overlap closure can still be identifiable if the legal class simply lacks the world pairs that would realise the gap the missing closure permits. The constructive dual settles it---a \emph{non-identifiability witness} for $\query$ under $I$ is a pair $(w,w')\in\LegalWorlds^2$ with $w\sim w'$ and $\query(w)\neq \query(w')$, and its existence proves non-identifiability directly (Proposition~\ref{prop:dual-cert}). On closure-separable instances every failed certificate yields such a witness, so the test is complete.

\begin{proposition}[Dual certificates]\label{prop:dual-cert}
\emph{(i)} $\query$ is identifiable iff no witness exists.
\emph{(ii)} If $\query$ satisfies the closure certificate of Theorem~\ref{thm:frontier-suff}---every atom footprint lies within some overlap closure---then no witness exists.
\emph{(iii)} If a witness exists, the closure certificate fails.
\end{proposition}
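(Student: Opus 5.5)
The plan is to prove the three parts in order, with (i) a matter of unfolding definitions, (ii) a reduction to results already stated, and (iii) the contrapositive of (ii).

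\emph{Part (i).} By Definition~\ref{def:query-ident}, $\query$ is identifiable iff $\query(w)=\query(w')$ for every pair $w,w'\in\LegalWorlds$ with $w\sim w'$. Negating this gives exactly the existence of a pair $(w,w')\in\LegalWorlds^2$ with $w\sim w'$ and $\query(w)\neq\query(w')$, which is the definition of a non-identifiability witness. So (i) is the definition read in contrapositive form; the only care needed is that both notions quantify over the same legal class $\LegalWorlds$ and the same relation $\sim$ from Definition~\ref{def:obs-equiv}.

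\emph{Part (ii).} The statement here uses the per-atom form of the certificate: for each atom $R_{U_j}$ there is some $O_j\in\OverlapSet$ with $U_j\subseteq\AttrClosure{O_j}$. This is weaker than the single-overlap hypothesis of Theorem~\ref{thm:frontier-suff}, but it is exactly the hypothesis of Theorem~\ref{thm:connectivity-emergence} with the trivial augmentation $\LawSet'=\LawSet$, $\OverlapSet'=\OverlapSet$. That theorem yields identifiability, and part (i) then rules out a witness.

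To make the step self-contained, I would also spell out the argument. Take $w\sim w'$. Then $w|_{\AugSchema{O_j}}=w'|_{\AugSchema{O_j}}$ for each $j$. Since $U_j\subseteq\AugSchema{O_j}$, projecting further gives $w|_{U_j}=w'|_{U_j}$, using $(w|_A)|_B=w|_B$ for $B\subseteq A$. So every atom has the same extension in both worlds, and a \cq\ evaluated over identical atom extensions returns the same answer.

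\emph{Part (iii).} This is the contrapositive of (ii): if a witness exists, then by (ii) the certificate cannot hold, so it fails.

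The main obstacle is the atom-extension step in (ii). One must check that atom $R_{U_j}$ is interpreted on $w$ as the projection $w|_{U_j}$, and that the per-atom condition really suffices even when different atoms use different overlaps. A join across atoms only compares variable bindings drawn from each atom's own extension; so once every extension agrees, the homomorphism sets coincide and no joint agreement across overlaps is needed. I would state this interpretation explicitly, reading the footprint convention $\attribute{\query}=\bigcup_j U_j$ as implying it, and then the remaining steps are immediate.
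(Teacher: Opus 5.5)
Your proof is correct and follows essentially the same route as the paper: (i) is Definition~\ref{def:query-ident} read contrapositively, (ii) restricts the agreement $w|_{\AugSchema{O_j}}=w'|_{\AugSchema{O_j}}$ to $U_j$ so that every atom relation has the same extension in both worlds, and (iii) is the contrapositive of (ii). Routing (ii) through Theorem~\ref{thm:connectivity-emergence} with the trivial augmentation only repackages the argument that the paper writes out inline.
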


\begin{proof}
(i) is Definition~\ref{def:query-ident}.  (ii): the closure certificate gives $U_j\subseteq\AttrClosure{O_j}$ for each atom $R_{U_j}$ and some overlap $O_j$. By Definition~\ref{def:obs-equiv}, $w\sim w'$ implies $w|_{\AttrClosure{O_j}}=w'|_{\AttrClosure{O_j}}$ for each $j$; since $U_j\subseteq\AttrClosure{O_j}$, both worlds agree on every attribute in $U_j$, hence on relation $R_{U_j}$. Agreement on every atom relation gives $\query(w)=\query(w')$ (Theorem~\ref{thm:frontier-suff} is the special case where one overlap covers all atoms).  (iii) is the contrapositive of (ii).
\end{proof}

\begin{proposition}[Monotone witness shrinkage]\label{prop:witness-shrink}
Let $I'$ extend $I$ with additional overlaps or FDs.  The obs-equivalence $\sim'$ under $I'$ is finer than $\sim$: if $w\not\sim' w'$ then $(w,w')$ is no longer a witness under $I'$.  Augmentation can only eliminate witnesses, never create them.
\end{proposition}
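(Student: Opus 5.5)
The plan is to compare the two observation maps directly. First I show that every closure-augmented schema under $I$ is contained in one under $I'$; from that, agreement under $I'$ will force agreement under $I$.

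Let $I=(\LawSet,\OverlapSet)$ and $I'=(\LawSet',\OverlapSet')$ with $\LawSet\subseteq\LawSet'$ and $\OverlapSet\subseteq\OverlapSet'$. The first step is monotonicity of attribute closure in the law set: $\AttrClosure{O}\subseteq\AttrClosure[\LawSet']{O}$ for every $O$. The set $\AttrClosure[\LawSet']{O}$ contains $O$ and is closed under every rule of $\LawSet'$, so it is closed under every rule of $\LawSet$. Since $\AttrClosure{O}$ is the least superset of $O$ closed under $\LawSet$, it lies inside $\AttrClosure[\LawSet']{O}$. Each $O\in\OverlapSet$ also belongs to $\OverlapSet'$, so each $I$-augmented schema $\AugSchema{O}$ is contained in the $I'$-augmented schema of the same overlap.

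The second step gives the refinement $w\sim' w'\Rightarrow w\sim w'$. Projection composes: for $A\subseteq B$ we have $w|_A=(w|_B)|_A$. So if $w|_{\AttrClosure[\LawSet']{O}}=w'|_{\AttrClosure[\LawSet']{O}}$ for all $O\in\OverlapSet'$, then in particular $w|_{\AttrClosure{O}}=w'|_{\AttrClosure{O}}$ for all $O\in\OverlapSet$. This is exactly Definition~\ref{def:obs-equiv} for $\sim$, so $\sim'$ is finer than $\sim$.

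The third step handles witnesses. A witness under $I'$ is a pair $(w,w')\in\LegalWorlds'^2$ with $w\sim' w'$ and $\query(w)\neq\query(w')$. By the second step the same pair satisfies $w\sim w'$. The answers $\query(w)$ and $\query(w')$ do not depend on the interface. Here I need the legal class not to grow: adding laws to $\LawSet$ only strengthens the cross-world condition of Definition~\ref{def:legality}, so $\LegalWorlds'\subseteq\LegalWorlds$. I will state this as the standing assumption that augmentation keeps or restricts the set of legal worlds. Under it, every $I'$-witness is an $I$-witness, so augmentation creates no witnesses. The displayed clause is then immediate: if $w\not\sim' w'$, the pair fails the first witness condition under $I'$ and is no longer a witness.

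The step needing the most care is the legality bookkeeping, not the closure monotonicity. Definition~\ref{def:legality} does not say how $\LegalWorlds$ changes when $\LawSet$ grows, so I will make the convention $\LegalWorlds'\subseteq\LegalWorlds$ explicit. With that convention the proof is a direct chain of inclusions.
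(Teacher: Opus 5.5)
Your proof is correct and follows the same route as the paper's. Both show that added overlaps and enlarged closures only add agreement conditions, so $\{(w,w') : w\sim' w'\}\subseteq\{(w,w') : w\sim w'\}$; you additionally spell out the closure-monotonicity and projection-composition steps, and your explicit assumption $\LegalWorlds'\subseteq\LegalWorlds$ makes precise what the paper's proof leaves implicit (it tacitly keeps the legal class fixed), which matters because the witness set depends on the legal class as well as on $\sim$.
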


\begin{proof}
Each new overlap or FD adds constraints to obs-equivalence, separating additional world pairs.  A pair that was indistinguishable under $I$ may be distinguishable under $I'$, removing it from the witness set.  No pair becomes indistinguishable by adding evidence, so $\{(w,w'):w\sim' w'\}\subseteq\{(w,w'):w\sim w'\}$.
\end{proof}

\paragraph{When is the certificate complete?}
Proposition~\ref{prop:dual-cert} shows certification is \emph{sound}. It is not complete
in general: a query can fail the closure test yet still be identifiable if no two legal
worlds happen to realise the ambiguity the missing closure permits. Completeness
requires the legal class to be \emph{closure-separable}, made precise below.

Say the interface has \emph{$S$-ambiguity}, for $S\subseteq\AttrUniverse$, if some $w\sim w'$ in $\LegalWorlds$ disagree on the projection, $w|_S\neq w'|_S$. This is exactly what makes a projection query fail:

\begin{theorem}[Projection witness]\label{thm:projection-witness}
Let $S\subseteq\attribute{U}$ for a view $U$, and suppose the interface has
$S$-ambiguity witnessed by $w\sim w'$. Then the projection \cq{}
$\query_S(\bar x_S)=\exists\bar z\,R_U(\bar v_U)$---with the positions in $S$ free and the
rest existentially quantified---is not identifiable.
\end{theorem}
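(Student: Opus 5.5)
The plan is to exhibit the pair $(w,w')$ from the $S$-ambiguity hypothesis as a non-identifiability witness in the sense of Proposition~\ref{prop:dual-cert}. Then Proposition~\ref{prop:dual-cert}(i), equivalently Definition~\ref{def:query-ident}, gives the conclusion. By hypothesis $w,w'\in\LegalWorlds$ and $w\sim w'$. The only thing left to show is that $\query_S(w)\neq\query_S(w')$.

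First I would compute the answer of $\query_S$ on an arbitrary world $u$. The atom $R_U$ is interpreted on $u$ as the projection $u|_{\attribute{U}}$. Projecting further onto the free positions $S$, with all other positions existentially quantified, gives
\[
\query_S(u)=\bigl(u|_{\attribute{U}}\bigr)\big|_S=u|_S .
\]
The second equality uses $S\subseteq\attribute{U}$, since projection onto a subset of a subset composes. Here $\query_S$ is a non-Boolean \cq{}, and its answer is a set of $S$-tuples. Identifiability for such queries, in Definition~\ref{def:query-ident} and in the remark after Definition~\ref{def:pair-disc}, compares these answer sets.

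With this, the $S$-ambiguity hypothesis $w|_S\neq w'|_S$ reads directly as $\query_S(w)\neq\query_S(w')$. So $(w,w')$ is a witness and $\query_S$ is not identifiable.

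The main obstacle is interpretive rather than technical. I must fix how an atom $R_U(\bar v_U)$ is evaluated on a world: as the projection of the world onto $\attribute{U}$, which is the convention used for $R_{\AugSchema{O}}$ in the interface-visible vocabulary. I must also confirm that the variables $\bar v_U$ are pairwise distinct, so that the atom imposes no equality selection. If repeated variables were allowed, the answer would be $\sigma(u|_{\attribute{U}})|_S$ for some selection $\sigma$, and disagreement on $u|_S$ would not automatically transfer to disagreement of the answers. I would therefore state explicitly that $\bar v_U$ lists distinct variables, one per attribute of $U$, which is the reading the statement intends by "positions in $S$ free and the rest existentially quantified."
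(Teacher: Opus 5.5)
Your proposal is correct and follows the paper's argument: the paper likewise takes the answer set of $\query_S$ on a world to be its $S$-projection, so $w|_S\neq w'|_S$ makes the obs-equivalent legal pair $(w,w')$ a witness against Definition~\ref{def:query-ident}. Your requirement that the variables $\bar v_U$ be pairwise distinct (so the atom imposes no hidden selection) states explicitly an assumption the paper leaves tacit.
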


\begin{proof}
As $w|_S\neq w'|_S$, some $S$-tuple lies in exactly one of $w|_S,w'|_S$, hence in the
answer set of $\query_S$ for exactly one of the two worlds; since $w\sim w'$,
Definition~\ref{def:query-ident} fails for $\query_S$.
\end{proof}

Call the legality structure \emph{closure-separable} for $S\subseteq\AttrUniverse$ when failure of closure coverage---$S\not\subseteq\AttrClosure{O}$ for every $O\in\OverlapSet$---forces $S$-ambiguity. On such instances the certificate is complete:

\begin{corollary}[Completeness on separable instances]\label{cor:closure-complete}
Under closure-separability for $S$, the closure certificate is \emph{complete} for the
projection \cq{} $\query_S$: $\query_S$ is identifiable if and only if $S\subseteq\AttrClosure{O}$
for some $O\in\OverlapSet$. Hence on closure-separable instances the closure certificate fails
only for genuinely non-identifiable queries, and the minimum augmentation is then tight: no
smaller set of interface actions can make $\query_S$ identifiable.
\end{corollary}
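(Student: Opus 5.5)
The proof splits into the two directions of the biconditional, followed by the tightness claim about minimum augmentation.

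\emph{Sufficiency ($\Leftarrow$).} Suppose $S\subseteq\AttrClosure{O}$ for some $O\in\OverlapSet$. The footprint of $\query_S$ is $\attribute{U}$, not $S$, so Theorem~\ref{thm:frontier-suff} cannot be applied verbatim; the argument instead works directly with the answer set. The answer set of $\query_S$ on a world $w$ is the projection $w|_S$, since $R_U$ is interpreted as $w|_{\attribute{U}}$ and projecting further to $S$ gives $w|_S$. If $w\sim w'$, Definition~\ref{def:obs-equiv} gives $w|_{\AugSchema{O}}=w'|_{\AugSchema{O}}$. Projecting both sides to $S\subseteq\AugSchema{O}$ yields $w|_S=w'|_S$, so $\query_S(w)=\query_S(w')$. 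Hence $\query_S$ is identifiable by Definition~\ref{def:query-ident}. This is the same projection argument as in Theorem~\ref{thm:frontier-suff}, applied to the output positions only.

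\emph{Necessity ($\Rightarrow$).} We argue by contraposition. Suppose $S\not\subseteq\AttrClosure{O}$ for every $O\in\OverlapSet$. Closure-separability for $S$ then yields $S$-ambiguity: there are legal worlds $w\sim w'$ with $w|_S\neq w'|_S$. Theorem~\ref{thm:projection-witness} turns this pair directly into a non-identifiability witness for $\query_S$. Equivalently, by Proposition~\ref{prop:dual-cert}(i), $\query_S$ is not identifiable. Combining the two directions gives the equivalence. The phrase ``fails only for genuinely non-identifiable queries'' is a restatement of this direction.

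\emph{Tightness of minimum augmentation.} Let $A$ be any set of interface actions producing $I'=(\LawSet',\OverlapSet')$. The claim is read as: if the legality structure under $I'$ is still closure-separable for $S$, then $\query_S$ is identifiable under $I'$ exactly when $S\subseteq\AttrClosure[\LawSet']{O}$ for some $O\in\OverlapSet'$. So any augmentation achieving identifiability must achieve closure coverage. Consequently, the minimum number of actions needed for identifiability equals the minimum needed for coverage, and no strictly smaller set can work.

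The main obstacle is this last step. Closure-separability is a hypothesis about a particular interface, and it is not automatic that it survives augmentation: new laws $\LawSet'$ also shrink the set $\LegalWorlds$ of legal worlds. I would first try to state the claim under the explicit assumption that separability holds for every candidate augmented interface, which is the natural reading of ``on closure-separable instances.'' Failing that, I would try to show that separability is monotone under adding overlaps. This looks plausible via Proposition~\ref{prop:witness-shrink}: coverage failure under the larger $\OverlapSet'$ implies coverage failure under $\OverlapSet$. However, adding overlaps can also kill witnesses, so this route may require the stronger assumption after all.
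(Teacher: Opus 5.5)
Your proof of the equivalence follows the paper's: coverage gives identifiability, and otherwise separability supplies an $S$-ambiguous pair that Theorem~\ref{thm:projection-witness} turns into a witness. Your sufficiency step is more careful than the paper's. The paper simply cites Theorem~\ref{thm:frontier-suff}, but that theorem needs the whole footprint $\attribute{\query_S}=\attribute{U}$ inside $\AttrClosure{O}$, and $S\subseteq\AttrClosure{O}$ does not give this. Your observation $\query_S(w)=w|_S$, combined with projecting $w|_{\AugSchema{O}}=w'|_{\AugSchema{O}}$ down to $S$, is exactly what is needed.

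The paper's proof says nothing about tightness. The obstacle you flag is real in general, but it can be avoided for $\query_S$. Take actions to be new designated overlaps, as in Definition~\ref{def:minaug}. Since $\query_S$ has a single atom, one action $A$ with $S\subseteq\AttrClosure{A}$ already certifies it, so a minimum augmentation has size $0$ or $1$. Size $0$ is trivially tight. Size $1$ means $S$ is uncovered under the original interface, so the only strictly smaller augmentation is the empty one. There your necessity direction already shows $\query_S$ is not identifiable, using only the original separability hypothesis.

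This requires reading the minimum as minimum for covering $S$, as you do. With the literal obligation $\attribute{U}\subseteq\AttrClosure{A}$ of Definition~\ref{def:minaug}, one action can be prescribed even though $\query_S$ is already identifiable.

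Your fallback route fails, because separability is not monotone under adding overlaps. Take attributes $c,x,b$, $\LawSet=\varnothing$, $\OverlapSet=\{\{c\}\}$, $S=\{b\}$, and legal single-tuple worlds $w_1,w_2$ with $(c,x,b)$-values $(0,0,0)$ and $(0,1,1)$. Then $w_1\sim w_2$ and $w_1|_S\neq w_2|_S$, so the instance is separable for $S$. Adding the overlap $\{x\}$ leaves $S$ uncovered but separates $w_1$ from $w_2$, so separability is lost.

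The same example shows that tightness genuinely fails if actions may also add laws. Certification then needs both the overlap $\{x\}$ and the law $x\to b$, which $w_1,w_2$ satisfy across worlds. Yet the overlap alone already makes $\query_S$ identifiable. So in that broader reading an extra hypothesis like yours is genuinely required, not just convenient.
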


\begin{proof}
If such $O$ exists, Theorem~\ref{thm:frontier-suff} gives identifiability. Otherwise
closure-separability yields $w\sim w'$ with $w|_S\neq w'|_S$, and
Theorem~\ref{thm:projection-witness} produces a witness, so $\query_S$ is not identifiable.
\end{proof}

\begin{remark}
Closure-separability is a genuine richness condition: without it, failure of closure
coverage means \emph{uncertified}, not necessarily non-identifiable. The exhaustive
benchmark of \S\ref{sec:exp-rq1}---where every non-certified single-atom query admits an
explicit witness---is its empirical counterpart: those instances are closure-separable
by construction.
\end{remark}

A feasible \tx{MinAug} solution therefore eliminates every remaining witness (Proposition~\ref{prop:dual-cert}): the augmented interface certifies $\query$, so no witness pair can exist under it.  Finding the minimum such solution is the \tx{MinAug} problem (§\ref{sec:algorithms}).

\section{Algorithms}
\label{sec:algorithms}

\paragraph{\tx{CheckCert}.}
\tx{CheckCert} decides whether a given query $\query$ is certified by the interface $(\LawSet,\Omega)$. It computes $\AugSchema{O} = \AttrClosure{O}$ for each designated overlap $O$ via forward chaining, then checks each atom of $\query$ separately: atom $R_{U_j}$ is covered if $U_j\subseteq\AugSchema{O}$ for some designated overlap $O$, with a possibly different overlap per atom. If every atom is covered it returns \textsc{Certified}, otherwise \textsc{Uncertified} together with the uncovered atoms. Each covered atom is interface-visible, so certification implies identifiability by Theorem~\ref{thm:iv-cq-iff} (the special case in which one overlap covers the entire footprint is Theorem~\ref{thm:frontier-suff}); the test mirrors the atom obligations \tx{Greedy-MinAug} discharges, so it certifies multi-atom queries that join across different overlaps rather than requiring a single closure to cover the whole footprint. The computation is polynomial in $|\AttrUniverse|+|\LawSet|+|\text{overlaps}|$. For workloads, the overlap closures are precomputed once and shared across all queries.

\paragraph{Minimum interface augmentation.}
When $\query$ is not certified, the designer asks: what is the smallest set of new interface actions (\eg adding a resolver, identifier, or crosswalk) that would make $\query$ identifiable? An action $A\subseteq\AttrUniverse$ creates a new designated overlap whose closure-augmented schema is $\AttrClosure{A}$; its \emph{atom coverage} is the set of atoms of $\query$ it resolves:
\[\coverage(A)=\{\,j \mid \attribute{U_j}\subseteq \AttrClosure{A}\,\}.\]

\begin{definition}\label{def:minaug}
Given query $\query$ with atom obligations $\AtomObligs=\{1,\dots,m\}$ and candidate actions $\CandSet$, \emph{\tx{MinAug}} asks for a minimum-cardinality $\CandSet'\subseteq\CandSet$ with $\AtomObligs\subseteq\bigcup_{A\in\CandSet'}\coverage(A)$. For a workload, $\AtomObligs$ is the disjoint union of obligations over all queries.
\end{definition}

Any feasible \tx{MinAug} solution makes $\query$ identifiable: each chosen action becomes a designated overlap whose closure certifies the corresponding atom obligation, so Theorem~\ref{thm:connectivity-emergence} applies (each atom is covered by its own overlap under the augmented interface).

\tx{MinAug} reduces exactly to \textsc{Set Cover}. We use the standard greedy algorithm:

\begin{algorithm}[t]
\caption{\tx{Greedy-MinAug} (Unweighted \& Weighted)}
\label{alg:greedy-minaug}
\begin{algorithmic}[1]
\REQUIRE $\LawSet$, atom obligations $\AtomObligs$, atom schemas $\{\attribute{U_j}\}_{j\in\AtomObligs}$, candidates $\CandSet$, (optional) costs $c:\CandSet\to\mathbb{R}_{>0}$ (default $c\equiv 1$).
\ENSURE Selected actions $\GreedySol\subseteq\CandSet$.
\STATE \textbf{Precompute} $\coverage(A)\gets\{j\in\AtomObligs\mid\attribute{U_j}\subseteq \AttrClosure{A}\}$ for all $A\in\CandSet$.
\STATE $C\gets\varnothing$;\quad $\GreedySol\gets\varnothing$.
\WHILE{$C\neq\AtomObligs$}
  \STATE $A^*\gets\arg\max_{A\in\CandSet\setminus\GreedySol}|\coverage(A)\setminus C|/c(A)$
  \IF{$|\coverage(A^*)\setminus C|=0$} \STATE \textbf{return} \textsc{Infeasible} \ENDIF
  \STATE $\GreedySol\gets\GreedySol\cup\{A^*\}$;\quad $C\gets C\cup\coverage(A^*)$
\ENDWHILE
\STATE \RETURN $\GreedySol$
\end{algorithmic}
\end{algorithm}

\begin{theorem}\label{thm:minaug-hard-greedy}
The decision version of \tx{MinAug} is \textnormal{NP}-complete even when $\LawSet=\varnothing$. For Algorithm~\ref{alg:greedy-minaug} with $c\equiv 1$:
\[\size{\GreedySol} \;\leq\; H_{\size{\AtomObligs}}\cdot\size{\OptSol} \;\leq\; (1+\ln\size{\AtomObligs})\cdot\size{\OptSol},\]
where $\OptSol$ is an optimal solution and $H_k=\sum_{i=1}^k 1/i$. The same logarithmic bound holds for the weighted variant (replacing cardinality with total cost).
\end{theorem}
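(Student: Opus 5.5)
The theorem has two parts: NP-completeness of the decision version, and the approximation bound for Algorithm~\ref{alg:greedy-minaug}. Both rest on an exact correspondence between \tx{MinAug} and \textsc{Set Cover}. The universe is $\AtomObligs$, and each candidate $A\in\CandSet$ contributes the set $\coverage(A)$. With this identification, feasible augmentations are exactly set covers of equal cardinality (or cost). I will prove this correspondence first and then obtain each claim from it.

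\emph{Membership in NP.} The certificate is a subset $\CandSet'\subseteq\CandSet$ with $|\CandSet'|\le k$. To verify it, we compute $\AttrClosure{A}$ for every $A\in\CandSet'$ by forward chaining, which takes polynomial time. We then test $\attribute{U_j}\subseteq\AttrClosure{A}$ for every $j$ and check that the union of the coverages equals $\AtomObligs$.

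\emph{Hardness.} I will reduce from \textsc{Set Cover} with $\LawSet=\varnothing$. Take a universe $\{1,\dots,m\}$, a family $S_1,\dots,S_n$, and a budget $k$. Introduce one fresh attribute $a_j$ for each element $j$. Let the query have $m$ atoms, with atom $j$ over the singleton schema $U_j=\{a_j\}$; for example, $\query=\exists \bar z\,\bigwedge_j R_{\{a_j\}}(z_j)$. For each set $S_i$, create a candidate action $A_i=\{a_j\mid j\in S_i\}$. Since $\LawSet=\varnothing$, we have $\AttrClosure{A_i}=A_i$, so $\coverage(A_i)=\{j\mid a_j\in A_i\}=S_i$. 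Hence the covers of size $\le k$ correspond bijectively to feasible augmentations of size $\le k$, and the reduction is clearly polynomial.

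The one point needing care is that Definition~\ref{def:minaug} is phrased purely via coverage, so the reduction is exact. If one instead insisted on the semantic version ("makes $\query$ identifiable"), completeness would require closure-separability (Corollary~\ref{cor:closure-complete}). I will state the result for the coverage formulation, which is the one the theorem refers to. I expect this to be the main subtlety rather than a technical obstacle.

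\emph{Approximation.} Precomputing the coverages turns Algorithm~\ref{alg:greedy-minaug} literally into Chv\'atal's greedy algorithm on the instance $(\AtomObligs,\{\coverage(A)\}_{A\in\CandSet})$, so the classical analysis applies. For a brief self-contained argument I will use pricing. When $A^*$ is selected, charge each newly covered obligation $j$ the price $c(A^*)/|\coverage(A^*)\setminus C|$. Fix any set $A$ of the optimum and order its elements by the time they are covered. When the element that is $r$-th from last in this order gets covered, at least $r$ elements of $A$ are still uncovered. By the greedy choice, that element's price is therefore at most $c(A)/r$, so the total price charged to $A$'s elements is at most $H_{|\coverage(A)|}c(A)\le H_{|\AtomObligs|}c(A)$. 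Summing over the sets of $\OptSol$ covers every obligation and gives $c(\GreedySol)\le H_{|\AtomObligs|}\,c(\OptSol)$. Taking $c\equiv1$ yields the unweighted bound. The final inequality $H_k\le1+\ln k$ is the standard integral comparison.

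Finally, the \textsc{Infeasible} branch is correct. If no remaining candidate covers any new obligation, then the union of all coverages misses some $j$, so no feasible solution exists.
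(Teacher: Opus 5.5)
Your proposal is correct and takes the same route as the paper. It uses the identical \textsc{Set Cover} reduction: unary atoms $R_{\{a_j\}}$, candidate actions $A_i=\{a_j\mid j\in S_i\}$ and $\LawSet=\varnothing$, so that $\coverage(A_i)=S_i$. It also uses the same guess-and-verify argument for NP membership, and identifies Algorithm~\ref{alg:greedy-minaug} with classical greedy \textsc{Set Cover}. The only difference is that you write out Chv\'atal's pricing argument, which handles the weighted case directly, whereas the paper just cites the classical $H_{\size{\AtomObligs}}$ analysis.
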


\begin{proof}
\emph{NP-hardness.}
Reduce from \textsc{Set Cover}: given universe $\{1,\dots,m\}$ and sets $T_1,\dots,T_p$, introduce one attribute $a_j$ per element, one unary atom $R_{\{a_j\}}$ per element as atom obligations $\AtomObligs=\{1,\dots,m\}$, and candidate action $A_i=\{a_j\mid j\in T_i\}$ for each set $T_i$, with $\LawSet=\varnothing$. Then $\AttrClosure{A_i}=A_i$ and $\coverage(A_i)=T_i$. A size-$k$ solution to \tx{MinAug} exists iff a size-$k$ set cover exists. Membership in NP is immediate: guess a subset of candidates and verify coverage in polynomial time.

\emph{Greedy bound.}
With $c\equiv 1$, Algorithm~\ref{alg:greedy-minaug} is the standard greedy \textsc{Set Cover} algorithm applied to the coverage function. The $H_{\size{\AtomObligs}}$ bound is the classical greedy approximation ratio for \textsc{Set Cover}; the weighted variant follows from the analogous weighted analysis.
\end{proof}

\section{Experimental Evaluation}
\label{sec:experiments}

\paragraph{Schema-design walkthrough.}
Theorems~\ref{thm:frontier-suff}--\ref{thm:connectivity-emergence} translate directly into a two-step schema-design workflow. \emph{Step~1 (CheckCert):} given the current interface $(\LawSet,\Omega)$ and a target query $\query$, run \tx{CheckCert} to decide whether every atom footprint of $\query$ lies within some overlap closure $\AttrClosure{O}$ (a possibly different overlap per atom). If so, $\query$ is identifiable and no augmentation is needed. \emph{Step~2 (Greedy-MinAug):} if $\query$ is not certified, enumerate candidate interface actions (\eg adding a resolver, crosswalk, or shared identifier), compute their atom coverages, and run \tx{Greedy-MinAug} to find the smallest set of actions that covers all atom obligations. Any feasible solution makes $\query$ identifiable by Theorem~\ref{thm:connectivity-emergence}. The experiments below validate both steps at scale.

We evaluate \tx{CheckCert} and \tx{Greedy-MinAug} as schema-design tools: given a relational schema with functional dependencies and an overlap policy, do the algorithms correctly decide identifiability, witness failures, and recommend minimal augmentations at practical cost?
We address four research questions.
\textbf{RQ1}~Is the closure certificate (Theorem~\ref{thm:frontier-suff}) exact for
single-atom Boolean \cq{s} under single-row world semantics---confirming every certified
query identifiable and finding an explicit witness for every non-certified one?
\textbf{RQ2}~Does the certificate correctly classify real integration datasets, and can non-identifiability be witnessed in real data?
\textbf{RQ3}~Does \tx{Greedy-MinAug} (Algorithm~\ref{alg:greedy-minaug}) achieve practical
approximation ratios on realistic schemas?
\textbf{RQ4}~Do \tx{CheckCert} and \tx{Greedy-MinAug} remain practical at database-scale schemas
($|\mathit{Attr}|$ and $|\LawSet|$ up to $10^3$)?
Section~\ref{sec:exp-ml} presents confirmatory ML classifier experiments showing that
non-identifiable queries exhibit the $\nicefrac{1}{2}$ error floor
(Theorem~\ref{thm:nonident-minimax}) and that augmentation produces the capability jumps
predicted by Theorem~\ref{thm:connectivity-emergence}.
All RQ1--RQ4 experiments run on CPU; the confirmatory ML experiments use a GPU node equipped with
an NVIDIA~A40 GPU (48\,GB VRAM) and a 128-core CPU, running CUDA~12.1 and PyTorch~2.3.

\subsection{Setup}
\label{sec:exp-setup}

\paragraph{Synthetic benchmark.}
We use a CRM-inspired schema with $n{=}5$ attributes over a binary domain ($d{=}2$) and three views
(customer, order, support), inducing overlaps on their pairwise intersections.
A legality structure $(\LegalWorlds,\LawSet)$ is instantiated by drawing functional dependencies uniformly
at random from attribute pairs, then generating worlds of $m$ tuples consistent with $\LawSet$ via
shared resolvers (lookup tables keyed on FD antecedents).
For each $(\LawSet,\Omega,\query)$, identifiability of the target \cq{} $\query$ is checked using
Theorem~\ref{thm:frontier-suff}.
\textit{Exactness benchmark (RQ1).} We use a resolver-free exhaustive enumerator. Worlds are all $d^n$ FD-satisfying single-row assignments ($m{=}1$), giving universal identifiability semantics (not restricted to resolver-generated worlds). For each $(\LawSet,\Omega,\query)$, the enumerator groups worlds by observation and checks whether obs-equivalent worlds agree on $\query$.

\textit{Confirmatory ML (Section~\ref{sec:exp-ml}).} Worlds are generated by the resolver model: $m\!\in\!\{10,30,50\}$ tuples consistent with $\LawSet$ via shared resolver tables (lookup tables keyed on FD antecedents); training-set sizes $N\!\in\!\{10^3,5{\times}10^3,5{\times}10^4\}$.

\paragraph{Real-world datasets.}
\emph{BibInteg} is built from the OpenAlex API~\cite{priem2022openalex}, using
$10{,}000$ computer-science papers (2015--2024) with verified DOIs.
Three views mirror the DBLP/ACM/SemanticScholar schema:
\emph{DBLP} exposes $\{\tx{title},\tx{author},\tx{year},\tx{venue}\}$;
\emph{ACM} exposes $\{\tx{title},\tx{author},\tx{year},\tx{doi}\}$;
\emph{SemanticScholar} exposes $\{\tx{title},\tx{author},\tx{year},\tx{n\_authors}\}$.
All three views share overlap $O{=}\{\tx{title},\tx{author},\tx{year}\}$.
Interface laws: $O\!\to\!\tx{venue}$, $O\!\to\!\tx{doi}$,
$O\!\to\!\tx{n\_authors}$, $\{\tx{year}\}\!\to\!\tx{decade}$, so
the $\LawSet$-closure of $O$ covers all seven attributes and every
single-view existential query is certified.
\emph{WDC-Product}~\cite{primpeli2019profiling} provides the schema for a design case study, drawing its attribute structure from the Web Data Commons product corpus spanning Amazon, Walmart, and Best Buy listings.
\emph{Amazon} exposes $\{\tx{brand},\tx{model},\tx{category},\tx{price},\tx{rating}\}$;
\emph{Walmart} exposes $\{\tx{brand},\tx{model},\tx{category},\tx{price},\tx{in\_stock}\}$;
\emph{Best Buy} exposes $\{\tx{brand},\tx{model},\tx{category},\tx{n\_reviews}\}$.
Shared overlap $O{=}\{\tx{brand},\tx{model},\tx{category}\}$;
interface laws $\{\tx{brand},\tx{model}\}\!\to\!\tx{category}$,
$\{\tx{brand},\tx{model}\}\!\to\!\tx{price}$,
$\{\tx{brand},\tx{model}\}\!\to\!\tx{in\_stock}$
yield $\AugSchema{O}{=}\{\tx{brand},\tx{model},\tx{category},\tx{price},\tx{in\_stock}\}$.
Both \tx{rating} and \tx{n\_reviews} are outside the closure, yielding three non-certified queries:
\texttt{Q\_highly\_rated} (Amazon \tx{rating}${\geq}4$), \texttt{Q\_reviewed} (Best Buy \tx{n\_reviews}${\geq}32$),
and \texttt{Q\_popular} (Amazon \tx{rating}${\geq}4$ joined with Best Buy \tx{n\_reviews}${\geq}32$).
Two certified queries ask whether a product is available (\texttt{Q\_available}: Walmart \tx{in\_stock}${=}1$, footprint $\{0,1,2,6\}\subseteq\AugSchema{O}$) or inexpensive (\texttt{Q\_cheap}: Walmart \tx{price}${<}p_0$, footprint $\{0,1,2,3\}\subseteq\AugSchema{O}$).
BibInteg and CrossKG-DBLP form single-tuple worlds ($m{=}1$); Amazon-Google and Fodors-Zagat form 2-tuple matched-pair worlds---all with far smaller world multiplicity than the confirmatory ML setup ($m\!\in\!\{10,30,50\}$).
\emph{WDC data construction.} Raw WDC listings lack \tx{rating}, \tx{n\_reviews}, and \tx{in\_stock}; these are synthesized from \tx{brand}/\tx{model} keys to enforce the certified/uncertified split by construction. WDC is a schema stress test; real interface-law validation is on the four real-record datasets (\S\ref{sec:exp-real}).
\emph{CrossKG-DBLP} aligns $11{,}800$ computer-science papers from DBLP and OpenAlex on shared overlap $O{=}\{\tx{title},\tx{author},\tx{year}\}$ with interface law $O\!\to\!\tx{doi}$. \texttt{Q\_publisher} (publication publisher, identified by DOI prefix) is certified; \texttt{Q\_large\_team} (co-author count exceeding a threshold) is uncertified---DBLP and OpenAlex maintain independent author lists with no reconciliation law across sources.
\emph{Amazon-Google}~\cite{primpeli2019profiling} aligns $1{,}046$ product pairs from Amazon and Google Shopping. Matched-pair identity forms the shared overlap~$O$; \texttt{Q\_catalog} (whether a product is in a given catalog segment) is certified. No interface law constrains \tx{price} across sources---Amazon and Google list different prices for the same product (e.g., \$395 vs.\ \$319.95)---so \texttt{Q\_expensive} (\tx{price}${\geq}\$50$) is uncertified.
\emph{Fodors-Zagat}~\cite{primpeli2019profiling} aligns $110$ restaurant pairs from the Fodors and Zagat guides. Matched-pair identity and shared restaurant segment form~$O$; \texttt{Q\_segment} is certified. No interface law reconciles cuisine labels---the two guides categorize the same restaurant differently (e.g., ``asian'' vs.\ ``japanese'')---so \texttt{Q\_cuisine} is uncertified. This dataset is inherently small (${\sim}112$ gold pairs total in the Magellan benchmark); we include it for domain breadth.
Tables~\ref{tab:bibinteg-schema} and~\ref{tab:wdc-schema} give the attribute schemas, interface laws, and query certification status for BibInteg and WDC-Product.

\begin{table}[t]
  \centering
  \caption{\emph{BibInteg} schema.
    $O{=}\{0,1,2\}$;
    FDs: $O{\to}3$, $O{\to}4$, $O{\to}5$, $\{2\}{\to}6$;
    $\AugSchema{O}{=}\{0,\ldots,6\}$. All queries certified (\textbf{C}).}
  \label{tab:bibinteg-schema}
  \begin{tabular}{clccccc}
    \toprule
    Attr & Semantics & DBLP & ACM & SS & ${\in}O$ & ${\in}\AugSchema{O}$ \\
    \midrule
    0 & \texttt{title}    & \checkmark & \checkmark & \checkmark & \checkmark & \checkmark \\
    1 & \texttt{author}   & \checkmark & \checkmark & \checkmark & \checkmark & \checkmark \\
    2 & \texttt{year}     & \checkmark & \checkmark & \checkmark & \checkmark & \checkmark \\
    3 & \texttt{venue}    & \checkmark & --         & --         & --         & \checkmark \\
    4 & \texttt{doi}      & --         & \checkmark & --         & --         & \checkmark \\
    5 & \texttt{n\_authors}& --         & --         & \checkmark & --         & \checkmark \\
    6 & \texttt{decade}   & --         & --         & --         & --         & \checkmark \\
    \midrule
    \multicolumn{7}{l}{\texttt{Q\_venue}: footprint $\{0,1,2,3\}\subseteq\AugSchema{O}$\quad\textbf{C}} \\
    \multicolumn{7}{l}{\texttt{Q\_doi}: footprint $\{0,1,2,4\}\subseteq\AugSchema{O}$\quad\textbf{C}} \\
    \multicolumn{7}{l}{\texttt{Q\_large\_team}: footprint $\{0,1,2,5\}\subseteq\AugSchema{O}$\quad\textbf{C}} \\
    \bottomrule
  \end{tabular}
\end{table}

\begin{table}[t]
  \centering
  \caption{\emph{WDC-Product} schema.
    $O{=}\{0,1,2\}$;
    FDs: $\{0,1\}{\to}2$, $\{0,1\}{\to}3$, $\{0,1\}{\to}6$;
    $\AugSchema{O}{=}\{0,1,2,3,6\}$.
    Attrs~4, 5 lie outside $\AugSchema{O}$; three queries uncertified (\textbf{U}).}
  \label{tab:wdc-schema}
  \begin{tabular}{clccccc}
    \toprule
    Attr & Semantics & Amazon & Walmart & BestBuy & ${\in}O$ & ${\in}\AugSchema{O}$ \\
    \midrule
    0 & \texttt{brand}     & \checkmark & \checkmark & \checkmark & \checkmark & \checkmark \\
    1 & \texttt{model}     & \checkmark & \checkmark & \checkmark & \checkmark & \checkmark \\
    2 & \texttt{category}  & \checkmark & \checkmark & \checkmark & \checkmark & \checkmark \\
    3 & \texttt{price}     & \checkmark & \checkmark & --         & --         & \checkmark \\
    4 & \texttt{rating}    & \checkmark & --         & --         & --         & --         \\
    5 & \texttt{n\_reviews}& --         & --         & \checkmark & --         & --         \\
    6 & \texttt{in\_stock} & --         & \checkmark & --         & --         & \checkmark \\
    \midrule
    \multicolumn{7}{l}{\texttt{Q\_available}: footprint $\{0,1,2,6\}\subseteq\AugSchema{O}$\quad\textbf{C}} \\
    \multicolumn{7}{l}{\texttt{Q\_cheap}: footprint $\{0,1,2,3\}\subseteq\AugSchema{O}$\quad\textbf{C}} \\
    \multicolumn{7}{l}{\texttt{Q\_highly\_rated}: footprint $\{0,1,2,4\}\not\subseteq\AugSchema{O}$\quad\textbf{U}} \\
    \multicolumn{7}{l}{\texttt{Q\_reviewed}: footprint $\{0,1,2,5\}\not\subseteq\AugSchema{O}$\quad\textbf{U}} \\
    \multicolumn{7}{l}{\texttt{Q\_popular}: footprint $\{0,1,2,4,5\}\not\subseteq\AugSchema{O}$\quad\textbf{U}} \\
    \bottomrule
  \end{tabular}
\end{table}

\paragraph{Predictor architectures.}
We evaluate four architectures spanning the structure-agnostic to theory-exploiting spectrum, plus two baselines.
\emph{MLP}: view marginals are concatenated into a feature vector, passed through a two-layer ReLU
network with hidden dimension $h{=}64$.
\emph{SetTransformer}~(\tx{ST})~\cite{lee2019set}: attribute-value tokens from each view
form a set; cross-view multi-head attention precedes the classification head.
\emph{GNN-OG}: message passing over the constraint-closed overlap graph $\CCGraph$
(Definition~\ref{def:cc-graph}); overlap-marginal features propagate along overlap edges before pooling.
\emph{Closure-Aware}~(\tx{CA}): if the closure certificate holds, the answer is read
from the closure-augmented overlap projection without learned inference; otherwise falls back to
\tx{GNN-OG}.
Baselines: \emph{VanillaOv}~(logistic regression over $\LawSet$-closed overlap features; no
hidden layers, tests whether linear expressivity suffices) and
\emph{MajVote}~(constant majority predictor).

\paragraph{Metrics.}
Boolean \cq{s}: \emph{balanced accuracy} $\frac{1}{2}(\text{TPR}+\text{TNR})$, which equals
$\nicefrac{1}{2}$ for any majority-class predictor regardless of class imbalance, making it the
correct empirical proxy for the $\nicefrac{1}{2}$ error floor of Theorem~\ref{thm:nonident-minimax}.
\tx{MinAug}: approximation ratio $|\GreedySol|/|\OptSol|$ and wall-clock runtime.
All error bars are over worlds; results are averaged over three random seeds.

\subsection{Certificate Exactness (RQ1)}
\label{sec:exp-rq1}

We construct a benchmark of $841$ instances by sampling $200$ random schemas
($n\!\in\!\{4,6,8\}$ attributes, domain size $d{=}3$, up to four FDs) and generating five
single-atom Boolean \cq{s} per schema, split evenly between certified and non-certified.
For each instance we run both the closure certificate and an exhaustive identifiability verifier.
The verifier enumerates all $d^n$ single-row worlds under relational FD semantics---the semantics
of the theory, not the resolver model---and groups them by observation; for certified queries it
checks that no group contains conflicting answers, and for non-certified queries it returns an
explicit witness pair $(w, w')$ with $\mathrm{Obs}(w){=}\mathrm{Obs}(w')$ and $\query(w){\neq}\query(w')$.
Of the $353$ certified instances, all $353$ pass the observation-consistency check ($100\%$,
zero violations).
Of the $488$ non-certified instances, all $488$ have an explicit witness ($100\%$).
Within this restricted class (single-atom Boolean \cq{s}, single-row worlds), the certificate is exact on both sides; completeness is not claimed beyond this class.

\subsection{Real-World Certification and Witnesses (RQ2)}
\label{sec:exp-real}

\begin{table}[t]
  \centering
  \caption{Real-data coverage: 3 domains, 5 datasets, 14 queries.
    \textbf{C}~=~certified (Theorem~\ref{thm:frontier-suff});
    \textbf{W}~=~real witness found;
    \textbf{U}~=~uncertified, schema-only (WDC).}
  \label{tab:rq2-real}
  \begin{tabular}{@{}llc@{}}
    \toprule
    Dataset & Domain & C\,/\,W\,/\,U \\
    \midrule
    BibInteg (9,992 papers)        & scholarly  & 3C \\
    CrossKG-DBLP (11,800 papers)   & scholarly  & 1C\;1W \\
    Amazon-Google (1,046 pairs)    & product    & 1C\;1W \\
    Fodors-Zagat (110 pairs)       & restaurant & 1C\;1W \\
    WDC-Product (schema only)      & ---        & 2C\;3U \\
    \midrule
    Total                          &            & 8C\;3W\;3U \\
    \bottomrule
  \end{tabular}
\end{table}

\begin{table}[t]
  \caption{Balanced accuracy on all four real-world datasets. C\,$=$\,certified; U\,$=$\,uncertified; bold entries exceed the error floor ($> 0.505$); 3 seeds per cell; stds $= 0.00$ throughout.
  \tx{MLP}/\tx{ST}/\tx{VanillaOv}~$=$~$1.0$ on every certified query; \tx{GNN-OG}/\tx{CA}/\tx{MajVote}~$=$~$0.5$ on every certified query (see text).
  $N_{\mathrm{train}}{=}5000$ (BibInteg); ${\sim}120$ training pairs (Fodors-Zagat; \tx{VanillaOv}~$= 0.69$ is small-sample noise).}
  \label{tab:real-world}
  \centering
  \begin{tabular}{llcrrrrrr}
    \toprule
    Dataset & Query & C/U & MLP & ST & GNN-OG & CA & VanillaOv & MajVote \\
    \midrule
    \emph{BibInteg}      & \texttt{Q\_venue}       & C & \textbf{1.00} & \textbf{1.00} & 0.50 & 0.50 & \textbf{1.00} & 0.50 \\
                         & \texttt{Q\_doi}         & C & \textbf{1.00} & \textbf{1.00} & 0.50 & 0.50 & \textbf{1.00} & 0.50 \\
                         & \texttt{Q\_large\_team} & C & \textbf{1.00} & \textbf{1.00} & 0.50 & 0.50 & \textbf{1.00} & 0.50 \\
    \midrule
    \emph{CrossKG-DBLP}  & \texttt{Q\_publisher}   & C & \textbf{1.00} & \textbf{1.00} & 0.50 & 0.50 & \textbf{1.00} & 0.50 \\
                         & \texttt{Q\_large\_team} & U & 0.50 & 0.50 & 0.50 & 0.50 & 0.50 & 0.50 \\
    \midrule
    \emph{Amazon-Google} & \texttt{Q\_catalog}     & C & \textbf{1.00} & \textbf{1.00} & 0.50 & 0.50 & \textbf{1.00} & 0.50 \\
                         & \texttt{Q\_expensive}   & U & 0.50 & \textbf{0.52} & 0.50 & 0.50 & \textbf{0.51} & 0.50 \\
    \midrule
    \emph{Fodors-Zagat}  & \texttt{Q\_segment}     & C & \textbf{1.00} & \textbf{1.00} & 0.50 & 0.50 & \textbf{0.69} & 0.50 \\
                         & \texttt{Q\_cuisine}     & U & \textbf{0.52} & \textbf{0.54} & 0.50 & 0.50 & \textbf{0.53} & 0.50 \\
    \bottomrule
  \end{tabular}
\end{table}

\textbf{Answer to RQ2:}
Yes across all three domains (Table~\ref{tab:rq2-real}). On \emph{BibInteg}: all three queries are certified; laws hold on ${\geq}99.9\%$ of records, and the few violations are genuine non-identifiability witnesses (e.g., \emph{The ARRIVE guidelines~2.0} appears under one key with three venues and three DOIs). On \emph{CrossKG-DBLP}: \texttt{Q\_publisher} is certified; \texttt{Q\_large\_team} is uncertified---the same query is certified on BibInteg, whose law $O\!\to\!\texttt{n\_authors}$ reaches the author count, whereas DBLP and OpenAlex share no such reconciliation law---and a real witness is found in $1000/1000$ random trials (median discovery position $2{,}100$ records, positive rate $0.42$). On \emph{Amazon-Google}: \texttt{Q\_catalog} is certified; $9.4\%$ of matched pairs straddle the price threshold (first witness at scan position~62). On \emph{Fodors-Zagat}: \texttt{Q\_segment} is certified; cuisine labels disagree in ${\sim}80\%$ of matched pairs (first witness at scan position~2); the dataset is small and its value is domain breadth.

Table~\ref{tab:real-world} confirms the predicted pattern across all four datasets. \tx{MLP}, \tx{ST}, and \tx{VanillaOv} reach $1.0$ on every certified query; \tx{GNN-OG} and \tx{CA} remain at $0.50$ because the single-tuple and matched-pair world structures give their overlap-graph and closure-lookup components nothing to aggregate over (cf.\ Section~\ref{sec:exp-ml}, where both reach $1.0$ at $m{=}10$). Every uncertified query is at or within noise of the $0.50$ floor, confirming the minimax bound of Theorem~\ref{thm:nonident-minimax} on real data across three independent domains.

\subsection{Greedy-MinAug in Practice (RQ3)}
\label{sec:exp-rq3}

\begin{figure}[t]
  \centering
\definecolor{maB0}{RGB}{198,219,239}
\definecolor{maB1}{RGB}{107,174,214}
\definecolor{maB2}{RGB}{33,113,181}
\definecolor{maB3}{RGB}{8,48,107}
\definecolor{maGy}{RGB}{102,102,102}
\definecolor{maOr}{RGB}{230,97,1}
\begin{tikzpicture}
\begin{groupplot}[
  group style={group size=2 by 1, horizontal sep=2cm},
]
\nextgroupplot[
  width=0.44\textwidth,
  height=7cm,
  xlabel={Approx.\ ratio},
  ylabel={Cumulative fraction},
  xmin=0.98, xmax=1.55,
  ymin=0, ymax=1.02,
  xticklabel style={font=\footnotesize},
  yticklabel style={font=\footnotesize},
  xlabel style={font=\small},
  ylabel style={font=\small},
  legend style={at={(0.97,0.08)}, anchor=south east, font=\footnotesize,
                legend cell align=left, fill=white, fill opacity=0.9,
                draw opacity=1, text opacity=1},
  grid=major, grid style={dashed, gray!30},
  tick align=outside,
  title={(a) Ratio CDF by $|\mathcal{B}_Q|$},
  title style={font=\small},
]
\addplot[maB0, thick, const plot]
  table[x=ratio, y=cdf, col sep=comma]{pgfdata/minaug_cdf_k1.csv};
\addlegendentry{$|\mathcal{B}_Q|{=}1$}
\addplot[maB1, thick, const plot]
  table[x=ratio, y=cdf, col sep=comma]{pgfdata/minaug_cdf_k2.csv};
\addlegendentry{$|\mathcal{B}_Q|{=}2$}
\addplot[maB2, thick, const plot]
  table[x=ratio, y=cdf, col sep=comma]{pgfdata/minaug_cdf_k3.csv};
\addlegendentry{$|\mathcal{B}_Q|{=}3$}
\addplot[maB3, thick, const plot]
  table[x=ratio, y=cdf, col sep=comma]{pgfdata/minaug_cdf_k4.csv};
\addlegendentry{$|\mathcal{B}_Q|{=}4$}
\draw[black, thin, dashed] (axis cs:1.0,0) -- (axis cs:1.0,1.02);
\nextgroupplot[
  width=0.44\textwidth,
  height=7cm,
  xlabel={$|\mathcal{B}_Q|$},
  ylabel={Mean actions selected},
  xmin=0.5, xmax=4.5,
  xtick={1,2,3,4},
  ymin=0,
  xticklabel style={font=\footnotesize},
  yticklabel style={font=\footnotesize},
  xlabel style={font=\small},
  ylabel style={font=\small},
  legend style={at={(0.97,0.03)}, anchor=south east, font=\footnotesize,
                legend cell align=left, fill=white, fill opacity=0.9,
                draw opacity=1, text opacity=1},
  grid=major, grid style={dashed, gray!30},
  tick align=outside,
  title={(b) Solution size vs baseline},
  title style={font=\small},
]
\addplot[maB3, thick, mark=o, mark size=2pt]
  table[x=n_atoms, y=greedy_mean, col sep=comma]{pgfdata/minaug_sizes.csv};
\addlegendentry{Greedy}
\addplot[maOr, thick, mark=square, mark size=2pt, dashed]
  table[x=n_atoms, y=singleton_mean, col sep=comma]{pgfdata/minaug_sizes.csv};
\addlegendentry{Singleton-only}
\addplot[maGy, thick, mark=triangle, mark size=2pt, dotted]
  table[x=n_atoms, y=optimal_mean, col sep=comma]{pgfdata/minaug_sizes.csv};
\addlegendentry{Optimal}
\end{groupplot}
\end{tikzpicture}
  \caption{\tx{Greedy-MinAug} approximation ratio CDF~(a) and mean solution size~(b).
  Each curve in~(a) is one $|\AtomObligs|$ value; mass at $1.0$ indicates optimal.
  Panel~(b) compares \tx{Greedy} (with singleton and cross-atom pair candidates),
  \tx{Singleton-only} (pairs excluded), and the brute-force \tx{Optimal} per $|\AtomObligs|$.}
  \label{fig:minaug}
\end{figure}

We construct multi-atom \tx{MinAug} instances with $|\AtomObligs|\!\in\!\{1,2,3,4\}$ atom
obligations.
Each atom $U_i$ has $g{=}5$ dedicated attributes forming a chain
($\attribute{U_i} = \{5i, 5i{+}1, \ldots, 5i{+}4\}$ with within-atom FDs
$\{5i\}\!\to\!5i{+}1\!\to\!\cdots$), so a root singleton $\{5i\}$ covers all of atom $i$ via
closure.
A further $|\AtomObligs|$ random cross-atom FDs allow single actions to reach
multiple atoms transitively.
Candidates include the $|\AtomObligs|$ root singletons plus random cross-atom pairs
(total $\leq 16$ for brute-force feasibility; 500~trials per $|\AtomObligs|$; 2\,000~total).
The \emph{singleton-only} baseline restricts \tx{Greedy-MinAug} to root singletons only.

Figure~\ref{fig:minaug}\,(a) shows the approximation ratio CDF per $|\AtomObligs|$.
For $|\AtomObligs|{=}1$ the ratio is $1.000$ on every trial---consistent with the $H_1{=}1$
bound of Theorem~\ref{thm:minaug-hard-greedy}.
As $|\AtomObligs|$ grows the distribution spreads: at $|\AtomObligs|{=}4$,
$79.2\%$ of trials remain optimal and the mean ratio is $1.070$, well below
$H_4\!\approx\!2.08$.
Figure~\ref{fig:minaug}\,(b) shows that allowing cross-atom pair candidates
reduces the mean augmentation size (Greedy $<$ Singleton-only for $|\AtomObligs|\geq 2$),
confirming that richer candidate sets translate into smaller interface augmentations.
Table~\ref{tab:minaug} summarises the per-$|\AtomObligs|$ statistics.

\begin{table}[t]
  \centering
  \caption{\tx{Greedy-MinAug} per-$|\AtomObligs|$ statistics (500 trials each).
    Ratio $=|\GreedySol|/|\OptSol|$; \%-opt $=$ fraction of trials with ratio~$1.0$.
    Mean actions are greedy vs.\ optimal (brute-force).}
  \label{tab:minaug}
  \begin{tabular}{cccccc}
    \toprule
    $|\AtomObligs|$ & \%-opt & Mean ratio & Max ratio & Greedy actions & Opt.\ actions \\
    \midrule
    1 & 100.0 & 1.000 & 1.000 & 1.000 & 1.000 \\
    2 &  98.4 & 1.008 & 1.500 & 1.464 & 1.448 \\
    3 &  89.0 & 1.048 & 1.500 & 2.296 & 2.186 \\
    4 &  79.2 & 1.070 & 1.500 & 3.236 & 3.028 \\
    \bottomrule
  \end{tabular}
\end{table}

\begin{table}[t]
  \centering
  \caption{\tx{Greedy-MinAug} on real datasets (0 actions $=$ certified; attr indices per schema).}
  \label{tab:minaug-realworld}
\begin{tabular}{llccc}
\toprule
Dataset & Query & Certified? & \#Actions & Added attrs \\
\midrule
BibInteg & \texttt{Q\_doi} & yes & 0 & -- \\
 & \texttt{Q\_large\_team} & yes & 0 & -- \\
 & \texttt{Q\_venue} & yes & 0 & -- \\
CrossKG-DBLP & \texttt{Q\_large\_team} & no & 1 & \{2\} \\
 & \texttt{Q\_publisher} & yes & 0 & -- \\
Amazon--Google & \texttt{Q\_catalog} & yes & 0 & -- \\
 & \texttt{Q\_expensive} & no & 1 & \{2\} \\
Fodors--Zagat & \texttt{Q\_cuisine} & no & 1 & \{2\} \\
 & \texttt{Q\_segment} & yes & 0 & -- \\
\bottomrule
\end{tabular}

\end{table}

\begin{figure}[t]
  \centering
\definecolor{scBlue0}{RGB}{189,215,231}
\definecolor{scBlue1}{RGB}{107,174,214}
\definecolor{scBlue2}{RGB}{33,113,181}
\begin{tikzpicture}
\begin{groupplot}[
  group style={group size=2 by 1, horizontal sep=2cm},
]
\nextgroupplot[
  width=0.44\textwidth,
  height=6cm,
  xlabel={$|\mathit{Attr}|$},
  ylabel={Median time (ms)},
  xmode=log, ymode=log,
  log basis x=10, log basis y=10,
  xmin=7, xmax=1400,
  ymin=0.001, ymax=4,
  xminorticks=false,
  xticklabel style={font=\footnotesize},
  yticklabel style={font=\footnotesize},
  xlabel style={font=\small},
  ylabel style={font=\small},
  legend style={at={(0.03,0.97)}, anchor=north west, font=\footnotesize,
                legend cell align=left, fill=white, fill opacity=0.9,
                draw opacity=1, text opacity=1},
  grid=major, grid style={dashed, gray!30},
  tick align=outside,
  xtick={10,100,250,500,1000},
  xticklabels={10,100,250,500,1000},
  title={(a) \textsc{CheckCert}},
  title style={font=\small},
]
\addplot[color=scBlue0,mark=o,mark size=1.5pt,line width=0.8pt]
  table[col sep=comma,x=n_attrs,y=cert_f10]{pgfdata/scalability.csv};
\addlegendentry{$|\Sigma|=10$}
\addplot[color=scBlue1,mark=square,mark size=1.5pt,line width=0.8pt]
  table[col sep=comma,x=n_attrs,y=cert_f100]{pgfdata/scalability.csv};
\addlegendentry{$|\Sigma|=100$}
\addplot[color=scBlue2,mark=triangle,mark size=1.5pt,line width=0.8pt]
  table[col sep=comma,x=n_attrs,y=cert_f1000]{pgfdata/scalability.csv};
\addlegendentry{$|\Sigma|=1000$}
\nextgroupplot[
  width=0.44\textwidth,
  height=6cm,
  xlabel={$|\mathit{Attr}|$},
  ylabel={Median time (ms)},
  xmode=log, ymode=log,
  log basis x=10, log basis y=10,
  xmin=7, xmax=1400,
  ymin=0.001, ymax=4,
  xminorticks=false,
  xticklabel style={font=\footnotesize},
  yticklabel style={font=\footnotesize},
  xlabel style={font=\small},
  ylabel style={font=\small},
  legend style={at={(0.97,0.03)}, anchor=south east, font=\footnotesize,
                legend cell align=left, fill=white, fill opacity=0.9,
                draw opacity=1, text opacity=1},
  grid=major, grid style={dashed, gray!30},
  tick align=outside,
  xtick={10,100,250,500,1000},
  xticklabels={10,100,250,500,1000},
  title={(b) \textsc{Greedy-MinAug}},
  title style={font=\small},
]
\addplot[color=scBlue0,mark=o,mark size=1.5pt,line width=0.8pt]
  table[col sep=comma,x=n_attrs,y=minaug_f10]{pgfdata/scalability.csv};
\addlegendentry{$|\Sigma|=10$}
\addplot[color=scBlue1,mark=square,mark size=1.5pt,line width=0.8pt]
  table[col sep=comma,x=n_attrs,y=minaug_f100]{pgfdata/scalability.csv};
\addlegendentry{$|\Sigma|=100$}
\addplot[color=scBlue2,mark=triangle,mark size=1.5pt,line width=0.8pt]
  table[col sep=comma,x=n_attrs,y=minaug_f1000]{pgfdata/scalability.csv};
\addlegendentry{$|\Sigma|=1000$}
\end{groupplot}
\end{tikzpicture}
  \caption{\tx{CheckCert}~(a) and \tx{Greedy-MinAug}~(b) median runtime versus schema size (log--log axes).
  Each line corresponds to a fixed FD count $|\LawSet|\!\in\!\{10,100,1000\}$;
  $x$-axis is the attribute count $|\mathit{Attr}|$.
  Both algorithms stay well below $4\,\mathrm{ms}$ across all $6{\times}6 = 36$ combinations tested.}
  \label{fig:scalability}
\end{figure}

\textbf{Answer to RQ3:} \tx{Greedy-MinAug} achieves near-optimal ratios for all $|\AtomObligs|\!\in\!\{1,2,3,4\}$ (Table~\ref{tab:minaug}; all mean ratios well below the $H_{|\AtomObligs|}$ bound), and richer candidate actions (singletons~+~pairs) reduce augmentation cost over singleton-only proposals. On all four real integration datasets (Table~\ref{tab:minaug-realworld}), certified queries require zero augmentation actions and non-certified queries require exactly 1 action ($<30\,\mu$s each), confirming practical applicability across schema sizes and domains.

\subsection{Scalability (RQ4)}
\label{sec:exp-scalability}

We sweep $|\mathit{Attr}|$ and $|\LawSet|$ each over $\{10,50,100,250,500,1000\}$ ($36$ combinations total).
Certification timing uses random FD schemas; \tx{Greedy-MinAug} timing uses
planted instances with chain FDs that guarantee feasibility (20~candidate actions).
Figure~\ref{fig:scalability} shows three representative $|\LawSet|$ curves.
\tx{CheckCert} peaks at $0.69\,\mathrm{ms}$ median at $(|\mathit{Attr}|,|\LawSet|){=}(500,1000)$,
consistent with its $O(|\mathit{Attr}|{\times}|\LawSet|)$ cost.
\tx{Greedy-MinAug} with 20 candidates peaks at $3.2\,\mathrm{ms}$ at the largest schema
$(1000\!\times\!1000)$, reflecting $O(20\!\times\!|\mathit{Attr}|{\times}|\LawSet|)$ precomputation.
Both algorithms are dominated by the FD-closure fixed-point; at production-scale schemas
with hundreds of attributes and functional dependencies, certification and augmentation
complete in single-digit milliseconds.
\textbf{Answer to RQ4:} Yes---both \tx{CheckCert} and \tx{Greedy-MinAug} remain
practical at database-scale schemas ($|\mathit{Attr}|,|\LawSet|\leq 10^3$) with
sub-millisecond and low-millisecond runtimes respectively.

\subsection{Ablation Study}
\label{sec:exp-ablation}

\paragraph{FD completeness.}
We vary the fraction $\rho$ of true interface laws included in $\LawSet$, sweeping $\rho\in\{0,0.25,0.5,0.75,1.0\}$.
Certificate coverage degrades monotonically as $\rho$ decreases; at $\rho{=}0.5$ it retains the
majority of certifiable queries on the synthetic benchmark.
\tx{CA} error tracks certificate coverage closely, while \tx{MLP} and \tx{ST} degrade more
slowly because they can partially recover missing closure via statistical generalization at large
$m$; the gap closes at small sample sizes.

\subsection{Confirmatory ML Experiments}
\label{sec:exp-ml}

We confirm Theorems~\ref{thm:nonident-minimax} and~\ref{thm:connectivity-emergence} empirically
using the six predictor architectures from Section~\ref{sec:exp-setup}.

\paragraph{Error floor (E1).}
For each $(m,N)$ pair we sample 10 certified and 10 non-certified single-atom Boolean \cq{s},
train each architecture, and report balanced accuracy.
Figure~\ref{fig:error-floor} shows that balanced accuracy remains within $0.03$ of
$\nicefrac{1}{2}$ for \emph{all} architectures on non-certified queries at every $N$, confirming
the irreducible error floor of Theorem~\ref{thm:nonident-minimax}.
\tx{MLP}, \tx{ST}, \tx{GNN-OG}, and \tx{CA} achieve balanced accuracy $1.0$ at
$N{=}5\!\times\!10^4$ on certified queries ($m{=}10$).

\begin{figure}[t]
  \centering
\definecolor{efCert}{RGB}{0,114,178}
\definecolor{efNonc}{RGB}{213,94,0}
\begin{tikzpicture}
\begin{axis}[
  width=\columnwidth,
  height=5cm,
  ylabel={Balanced accuracy},
  ymin=0, ymax=1.05,
  ybar=3pt,
  bar width=0.28cm,
  enlarge x limits=0.12,
  xtick=data,
  xticklabels={MajVote, VanillaOv, MLP, SetTransf., GNN-OG, CA},
  xticklabel style={rotate=30, anchor=north east, font=\small},
  yticklabel style={font=\small},
  legend style={at={(0.5,1.02)}, anchor=south, font=\small,
                legend cell align=left, fill=white, fill opacity=0.9,
                draw opacity=1, text opacity=1,
                legend columns=-1, column sep=1em},
  grid=major,
  grid style={dashed, gray!30},
  tick align=outside,
]
\addplot[fill=efCert, draw=efCert, fill opacity=0.85,
         error bars/.cd, y dir=both, y explicit,
         error bar style={draw=efCert!60!black, line width=0.7pt}]
  table[x=arch_idx, y=cert_mean, y error=cert_std, col sep=comma]
  {pgfdata/error_floor.csv};
\addlegendentry{Certified}
\addplot[fill=efNonc, draw=efNonc, fill opacity=0.85,
         error bars/.cd, y dir=both, y explicit,
         error bar style={draw=efNonc!60!black, line width=0.7pt}]
  table[x=arch_idx, y=noncert_mean, y error=noncert_std, col sep=comma]
  {pgfdata/error_floor.csv};
\addlegendentry{Non-certified}
\addplot[black, thick, dashed, domain=-0.5:5.5, samples=2, forget plot] {0.5};
\end{axis}
\end{tikzpicture}
  \caption{Balanced accuracy by query type and architecture ($m{=}10$, averaged over
  $N\!\in\!\{10^3,5\!\times\!10^3,5\!\times\!10^4\}$ and 3 seeds).
  Non-certified queries (right bars) are bounded below by $\nicefrac{1}{2}$ for all
  architectures (Theorem~\ref{thm:nonident-minimax}); certified queries (left bars) converge
  to high balanced accuracy for structure-aware architectures.}
  \label{fig:error-floor}
\end{figure}

\paragraph{Capability jumps (E2).}
We use the 5-attribute binary schema with views $V_0{=}\{0,1\}$ and $V_1{=}\{0,2,3,4\}$
overlapping on $\{0\}$, and query $Q{=}\exists x_0.\, R_{V_1}(x_0,0,0,0)$.
FDs $\{0\}\!\to\!k$ are added one at a time for $k\in\{1,2,3,4\}$, producing 5 interface
configurations (steps~0--4); the certificate is satisfied at step~4.
Figure~\ref{fig:capability-jump} shows that the structure-preserving architectures (\tx{MLP}, \tx{ST}, \tx{GNN-OG}, \tx{CA}) remain at ${\approx}\nicefrac{1}{2}$
at steps~0--1 and jump sharply when the certificate is satisfied at step~4, confirming the
structural transition of Theorem~\ref{thm:connectivity-emergence}.
\tx{VanillaOv} differs: it rises above $\nicefrac{1}{2}$ at intermediate steps~2--3 by exploiting statistical correlations introduced by the partial FDs, but collapses back to $\nicefrac{1}{2}$ at step~4, owing to a linear-expressivity failure---$\query$ is an existential query whose answer at step~4 is encoded in tuple-level relational structure that mean-pooled linear features cannot detect.
This separates \emph{structural identifiability} (the closure certificate guarantees the information is present in the interface evidence) from \emph{linear detectability from feature marginals}. The behaviour is query-dependent: on BibInteg (Section~\ref{sec:exp-real}) the certified answers are linearly decodable from the $\LawSet$-closed overlap marginals, and \tx{VanillaOv} reaches $1.0$ there.

\begin{figure}[t]
  \centering
\definecolor{cjMLP}{RGB}{31,119,180}
\definecolor{cjST}{RGB}{255,127,14}
\definecolor{cjGNN}{RGB}{44,160,44}
\definecolor{cjCA}{RGB}{214,39,40}
\definecolor{cjVoV}{RGB}{148,103,189}
\definecolor{cjMjV}{RGB}{140,86,75}
\begin{tikzpicture}
\begin{axis}[
  width=\columnwidth,
  height=5cm,
  xlabel={Interface augmentation step},
  ylabel={Balanced accuracy},
  xmin=-0.4, xmax=4.4,
  ymin=0.4, ymax=1.05,
  xtick={0,1,2,3,4},
  xticklabels={%
    {no FDs},
    {$+$FD $0{\to}1$},
    {$+$FD $0{\to}2$},
    {$+$FD $0{\to}3$},
    {$+$FD $0{\to}4$}},
  xticklabel style={rotate=20, anchor=north east, font=\footnotesize},
  yticklabel style={font=\small},
  legend style={at={(0.02,0.98)}, anchor=north west, font=\footnotesize,
                legend cell align=left, fill=white, fill opacity=0.9,
                draw opacity=1, text opacity=1, row sep=-2pt},
  legend columns=2,
  grid=major,
  grid style={dashed, gray!30},
  tick align=outside,
]
\addplot[cjMLP, thick, mark=o, mark size=1.5pt, name path=mlpH, forget plot]
  table[x=step, y expr=\thisrow{mlp_mean}+\thisrow{mlp_std}, col sep=comma]
  {pgfdata/capability_jump.csv};
\addplot[cjMLP, thick, name path=mlpL, forget plot]
  table[x=step, y expr=\thisrow{mlp_mean}-\thisrow{mlp_std}, col sep=comma]
  {pgfdata/capability_jump.csv};
\addplot[cjMLP, opacity=0.15, forget plot] fill between[of=mlpH and mlpL];
\addplot[cjMLP, thick, mark=o, mark size=1.5pt]
  table[x=step, y=mlp_mean, col sep=comma]{pgfdata/capability_jump.csv};
\addlegendentry{MLP}
\addplot[cjST, thick, dashed, mark=square, mark size=1.5pt, name path=stH, forget plot]
  table[x=step, y expr=\thisrow{st_mean}+\thisrow{st_std}, col sep=comma]
  {pgfdata/capability_jump.csv};
\addplot[cjST, thick, dashed, name path=stL, forget plot]
  table[x=step, y expr=\thisrow{st_mean}-\thisrow{st_std}, col sep=comma]
  {pgfdata/capability_jump.csv};
\addplot[cjST, opacity=0.15, forget plot] fill between[of=stH and stL];
\addplot[cjST, thick, dashed, mark=square, mark size=1.5pt]
  table[x=step, y=st_mean, col sep=comma]{pgfdata/capability_jump.csv};
\addlegendentry{SetTransf.}
\addplot[cjGNN, thick, dashdotted, mark=triangle, mark size=1.5pt, name path=gnnH, forget plot]
  table[x=step, y expr=\thisrow{gnn_mean}+\thisrow{gnn_std}, col sep=comma]
  {pgfdata/capability_jump.csv};
\addplot[cjGNN, thick, dashdotted, name path=gnnL, forget plot]
  table[x=step, y expr=\thisrow{gnn_mean}-\thisrow{gnn_std}, col sep=comma]
  {pgfdata/capability_jump.csv};
\addplot[cjGNN, opacity=0.15, forget plot] fill between[of=gnnH and gnnL];
\addplot[cjGNN, thick, dashdotted, mark=triangle, mark size=1.5pt]
  table[x=step, y=gnn_mean, col sep=comma]{pgfdata/capability_jump.csv};
\addlegendentry{GNN-OG}
\addplot[cjCA, thick, dotted, mark=diamond, mark size=1.5pt, name path=caH, forget plot]
  table[x=step, y expr=\thisrow{ca_mean}+\thisrow{ca_std}, col sep=comma]
  {pgfdata/capability_jump.csv};
\addplot[cjCA, thick, dotted, name path=caL, forget plot]
  table[x=step, y expr=\thisrow{ca_mean}-\thisrow{ca_std}, col sep=comma]
  {pgfdata/capability_jump.csv};
\addplot[cjCA, opacity=0.15, forget plot] fill between[of=caH and caL];
\addplot[cjCA, thick, dotted, mark=diamond, mark size=1.5pt]
  table[x=step, y=ca_mean, col sep=comma]{pgfdata/capability_jump.csv};
\addlegendentry{CA}
\addplot[cjVoV, thick, name path=vovH, forget plot]
  table[x=step, y expr=\thisrow{vov_mean}+\thisrow{vov_std}, col sep=comma]
  {pgfdata/capability_jump.csv};
\addplot[cjVoV, thick, name path=vovL, forget plot]
  table[x=step, y expr=\thisrow{vov_mean}-\thisrow{vov_std}, col sep=comma]
  {pgfdata/capability_jump.csv};
\addplot[cjVoV, opacity=0.15, forget plot] fill between[of=vovH and vovL];
\addplot[cjVoV, thick, mark=pentagon*, mark size=1.5pt]
  table[x=step, y=vov_mean, col sep=comma]{pgfdata/capability_jump.csv};
\addlegendentry{VanillaOv}
\addplot[cjMjV, thick, dashed, name path=mjvH, forget plot]
  table[x=step, y expr=\thisrow{mjv_mean}+\thisrow{mjv_std}, col sep=comma]
  {pgfdata/capability_jump.csv};
\addplot[cjMjV, thick, dashed, name path=mjvL, forget plot]
  table[x=step, y expr=\thisrow{mjv_mean}-\thisrow{mjv_std}, col sep=comma]
  {pgfdata/capability_jump.csv};
\addplot[cjMjV, opacity=0.15, forget plot] fill between[of=mjvH and mjvL];
\addplot[cjMjV, thick, dashed, mark=x, mark size=2pt]
  table[x=step, y=mjv_mean, col sep=comma]{pgfdata/capability_jump.csv};
\addlegendentry{MajVote}
\addplot[black, thin, dotted, domain=-0.4:4.4, samples=2, forget plot] {0.5}
  node[pos=0.05, above, font=\tiny, black] {0.5};
\draw[green!60!black, thick, dashed] (axis cs:3.5,0.4) -- (axis cs:3.5,1.0);
\node[font=\tiny, green!60!black, anchor=south, fill=white, inner sep=1pt]
  at (axis cs:3.5,1.0) {Certified};
\end{axis}
\end{tikzpicture}
  \caption{Balanced accuracy vs.\ FD-augmentation step.
  Steps~0--1: certificate unsatisfied, all architectures ${\approx}\nicefrac{1}{2}$.
  Step~4: \tx{MLP} and \tx{ST} jump to $1.0$, \tx{GNN-OG} to $0.98$ (identifiability guaranteed, not perfect accuracy), \tx{CA} to $0.82$,
  confirming Theorem~\ref{thm:connectivity-emergence}.}
  \label{fig:capability-jump}
\end{figure}

\section{Conclusion}
\label{sec:conclusion}

We formalized query identifiability for data integration under an explicit relational interface model. A query is identifiable when all legal worlds consistent with the interface evidence return the same answer; when it is not, a minimax lower bound shows that for every non-identifiable query there is a witness pair on which any interface-evidence-only estimator incurs error $\ge\tfrac{1}{2}$.

The closure certificate (Theorem~\ref{thm:frontier-suff}) reduces identifiability checking to forward chaining over functional dependencies, yielding the polynomial-time \tx{CheckCert} algorithm. When a query is not certified, minimum augmentation (Definition~\ref{def:minaug}) asks for the smallest set of new interface actions that would certify it; this reduces to \textsc{Set Cover}, and \tx{Greedy-MinAug} achieves a $H_{|\AtomObligs|}$ approximation guarantee (Theorem~\ref{thm:minaug-hard-greedy}). Experiments confirm that \tx{CheckCert} is exact on exhaustive benchmarks and both algorithms remain practical at schemas with $10^3$ attributes and dependencies.

\paragraph{Scope and limitations.} The closure certificate is both sufficient and necessary on \emph{separable} instances (Corollary~\ref{cor:closure-complete})---instances that actually contain a witness pair if one exists; on non-separable instances a failed certificate is conservative. The CQ footprint model covers single-atom Boolean \cq{s} natively; extending certificates to UCQs, aggregation, or recursive queries requires machinery beyond attribute closure. Interface laws are restricted to FD-syntax (Armstrong implication); denial constraints, inclusion dependencies, and richer tgds are not covered. The real-data validation uses single-tuple and matched-pair worlds ($m{\le}2$), while the confirmatory ML experiments span larger synthetic worlds ($m\in\{10,30,50\}$); real integration datasets with larger world multiplicity remain to be studied. WDC-Product serves as a schema-design case study---its query-critical attributes are synthesized to enforce the certified/uncertified split by construction; BibInteg, CrossKG-DBLP, Amazon-Google, and Fodors-Zagat are the real-data validations.

Several directions remain open. Richer query languages (UCQs, aggregation, recursion) need new certificates beyond footprint closure. Stronger interface augmentation strategies could exploit structured FD sets to reduce outcome multiplicity exponentially. Extending the theory to multi-tuple world grouping and richer integrity constraints remains an important open direction. Identifiability is a design constraint, not only an evaluation metric.

\section*{Reproducibility}
Experiment code is available at \url{https://github.com/danielhz/query-identifiability} and Lean~4 proofs at \url{https://github.com/danielhz/MultiViewIdentifiability}. The formalization machine-checks Theorems~\ref{thm:frontier-suff}, \ref{thm:nonident-minimax}, \ref{thm:rate-lb}, and \ref{thm:robust-threshold}, the general atom-wise closure certificate (\tx{CheckCert} correctness, with Theorem~\ref{thm:frontier-suff} as the single-overlap special case), the semantic content of the interface-visible result (Theorem~\ref{thm:iv-cq-iff}; queries are modeled by their answer-invariant rather than CQ syntax), the FD/Armstrong correspondence (both directions), the reduction of identifiability to query determinacy over overlap-projection views, and Fano's inequality itself (the information inequality behind Corollary~\ref{cor:fano}); three separations are additionally certified as machine-checked disproofs (certificate necessity, union-of-footprints coverage, and MinAug uniqueness). Open in the formalization are the distributional Fano corollary (Corollary~\ref{cor:fano}; its qualitative $\nicefrac{1}{2}$-floor form, Theorem~\ref{thm:nonident-minimax}, is machine-checked), the capability-jump theorem (Theorem~\ref{thm:connectivity-emergence}; only its augmented single-overlap case is checked), and the MinAug greedy $H_k$ bound and NP-hardness (Theorem~\ref{thm:minaug-hard-greedy}). Supplementary material includes BibInteg statistics, extended ML results, and FD-completeness curves; all artifacts (code, Lean proofs, and data) are archived at DaRUS~\cite{darus-artifacts}.

\bibliographystyle{ACM-Reference-Format}
\bibliography{ml2026}


\begin{thebibliography}{37}


\ifx \showCODEN    \undefined \def \showCODEN     #1{\unskip}     \fi
\ifx \showDOI      \undefined \def \showDOI       #1{#1}\fi
\ifx \showISBNx    \undefined \def \showISBNx     #1{\unskip}     \fi
\ifx \showISBNxiii \undefined \def \showISBNxiii  #1{\unskip}     \fi
\ifx \showISSN     \undefined \def \showISSN      #1{\unskip}     \fi
\ifx \showLCCN     \undefined \def \showLCCN      #1{\unskip}     \fi
\ifx \shownote     \undefined \def \shownote      #1{#1}          \fi
\ifx \showarticletitle \undefined \def \showarticletitle #1{#1}   \fi
\ifx \showURL      \undefined \def \showURL       {\relax}        \fi
\providecommand\bibfield[2]{#2}
\providecommand\bibinfo[2]{#2}
\providecommand\natexlab[1]{#1}
\providecommand\showeprint[2][]{arXiv:#2}

\bibitem[\protect\citeauthoryear{Abiteboul and Duschka}{Abiteboul and
  Duschka}{1998}]%
        {abiteboul1998complexity}
\bibfield{author}{\bibinfo{person}{Serge Abiteboul} {and}
  \bibinfo{person}{Oliver~M. Duschka}.} \bibinfo{year}{1998}\natexlab{}.
\newblock \showarticletitle{The Complexity of Answering Queries Using
  Materialized Views}. In \bibinfo{booktitle}{\emph{Proceedings of the
  Seventeenth ACM SIGACT-SIGMOD-SIGART Symposium on Principles of Database
  Systems}}. \bibinfo{pages}{254--263}.
\newblock


\bibitem[\protect\citeauthoryear{Abiteboul, Hull, and Vianu}{Abiteboul
  et~al\mbox{.}}{1995}]%
        {abiteboul1995foundations}
\bibfield{author}{\bibinfo{person}{Serge Abiteboul}, \bibinfo{person}{Richard
  Hull}, {and} \bibinfo{person}{Victor Vianu}.}
  \bibinfo{year}{1995}\natexlab{}.
\newblock \bibinfo{booktitle}{\emph{Foundations of Databases}}.
\newblock \bibinfo{publisher}{Addison-Wesley}.
\newblock


\bibitem[\protect\citeauthoryear{Andrew, Arora, Bilmes, and Livescu}{Andrew
  et~al\mbox{.}}{2013}]%
        {andrew2013deep}
\bibfield{author}{\bibinfo{person}{Galen Andrew}, \bibinfo{person}{Raman
  Arora}, \bibinfo{person}{Jeff Bilmes}, {and} \bibinfo{person}{Karen
  Livescu}.} \bibinfo{year}{2013}\natexlab{}.
\newblock \showarticletitle{Deep Canonical Correlation Analysis}. In
  \bibinfo{booktitle}{\emph{Proceedings of the International Conference on
  Machine Learning (ICML)}}. PMLR, \bibinfo{pages}{1247--1255}.
\newblock


\bibitem[\protect\citeauthoryear{Armstrong}{Armstrong}{1974}]%
        {armstrong1974dependency}
\bibfield{author}{\bibinfo{person}{William~Ward Armstrong}.}
  \bibinfo{year}{1974}\natexlab{}.
\newblock \showarticletitle{Dependency Structures of Data Base Relationships}.
  In \bibinfo{booktitle}{\emph{IFIP Congress}}, Vol.~\bibinfo{volume}{74}.
  Geneva, Switzerland, \bibinfo{pages}{580--583}.
\newblock


\bibitem[\protect\citeauthoryear{Beeri, Fagin, and Howard}{Beeri
  et~al\mbox{.}}{1977}]%
        {beeri1977complete}
\bibfield{author}{\bibinfo{person}{Catriel Beeri}, \bibinfo{person}{Ronald
  Fagin}, {and} \bibinfo{person}{John~H Howard}.}
  \bibinfo{year}{1977}\natexlab{}.
\newblock \showarticletitle{A Complete Axiomatization for Functional and
  Multivalued Dependencies in Database Relations}. In
  \bibinfo{booktitle}{\emph{Proceedings of the 1977 ACM SIGMOD International
  Conference on Management of Data}}. \bibinfo{pages}{47--61}.
\newblock


\bibitem[\protect\citeauthoryear{Blum and Mitchell}{Blum and Mitchell}{1998}]%
        {blum1998combining}
\bibfield{author}{\bibinfo{person}{Avrim Blum} {and} \bibinfo{person}{Tom
  Mitchell}.} \bibinfo{year}{1998}\natexlab{}.
\newblock \showarticletitle{Combining Labeled and Unlabeled Data with
  Co-Training}. In \bibinfo{booktitle}{\emph{Proceedings of the eleventh annual
  conference on Computational learning theory}}. \bibinfo{pages}{92--100}.
\newblock


\bibitem[\protect\citeauthoryear{Brickley, Burgess, and Noy}{Brickley
  et~al\mbox{.}}{2019}]%
        {brickley2019datasetsearch}
\bibfield{author}{\bibinfo{person}{Dan Brickley}, \bibinfo{person}{Matthew
  Burgess}, {and} \bibinfo{person}{Natasha~F. Noy}.}
  \bibinfo{year}{2019}\natexlab{}.
\newblock \showarticletitle{Google Dataset Search: Building a Search Engine for
  Datasets in an Open Web Ecosystem}. In \bibinfo{booktitle}{\emph{The World
  Wide Web Conference ({WWW})}}. \bibinfo{publisher}{{ACM}},
  \bibinfo{pages}{1365--1375}.
\newblock
\urldef\tempurl%
\url{https://doi.org/10.1145/3308558.3313685}
\showDOI{\tempurl}


\bibitem[\protect\citeauthoryear{Chandra and Merlin}{Chandra and
  Merlin}{1977}]%
        {chandra1977optimal}
\bibfield{author}{\bibinfo{person}{Ashok~K Chandra} {and}
  \bibinfo{person}{Philip~M Merlin}.} \bibinfo{year}{1977}\natexlab{}.
\newblock \showarticletitle{Optimal implementation of conjunctive queries in
  relational data bases}. In \bibinfo{booktitle}{\emph{Proceedings of the ninth
  annual ACM symposium on Theory of computing}}. \bibinfo{pages}{77--90}.
\newblock


\bibitem[\protect\citeauthoryear{Christen}{Christen}{2012}]%
        {christen2012data}
\bibfield{author}{\bibinfo{person}{Peter Christen}.}
  \bibinfo{year}{2012}\natexlab{}.
\newblock \showarticletitle{The Data Matching Process}.
\newblock In \bibinfo{booktitle}{\emph{Data matching: concepts and techniques
  for record linkage, entity resolution, and duplicate detection}}.
  \bibinfo{publisher}{Springer}, \bibinfo{pages}{23--35}.
\newblock


\bibitem[\protect\citeauthoryear{Cover and Thomas}{Cover and Thomas}{2006}]%
        {cover2006elements}
\bibfield{author}{\bibinfo{person}{Thomas~M Cover} {and} \bibinfo{person}{Joy~A
  Thomas}.} \bibinfo{year}{2006}\natexlab{}.
\newblock \bibinfo{booktitle}{\emph{Elements of Information Theory}}.
\newblock \bibinfo{publisher}{Wiley-Interscience}.
\newblock


\bibitem[\protect\citeauthoryear{Doan, Halevy, and Ives}{Doan
  et~al\mbox{.}}{2012}]%
        {doan2012principles}
\bibfield{author}{\bibinfo{person}{AnHai Doan}, \bibinfo{person}{Alon Halevy},
  {and} \bibinfo{person}{Zachary Ives}.} \bibinfo{year}{2012}\natexlab{}.
\newblock \bibinfo{booktitle}{\emph{Principles of Data Integration}}.
\newblock \bibinfo{publisher}{Elsevier}.
\newblock


\bibitem[\protect\citeauthoryear{Dong and Naumann}{Dong and Naumann}{2009}]%
        {dong2009data}
\bibfield{author}{\bibinfo{person}{Xin~Luna Dong} {and} \bibinfo{person}{Felix
  Naumann}.} \bibinfo{year}{2009}\natexlab{}.
\newblock \showarticletitle{Data Fusion: Resolving Data Conflicts for
  Integration}.
\newblock \bibinfo{journal}{\emph{Proc. VLDB Endow.}} \bibinfo{volume}{2},
  \bibinfo{number}{2} (\bibinfo{year}{2009}), \bibinfo{pages}{1654--1655}.
\newblock


\bibitem[\protect\citeauthoryear{Fagin, Kolaitis, Miller, and Popa}{Fagin
  et~al\mbox{.}}{2005}]%
        {fagin2005data}
\bibfield{author}{\bibinfo{person}{Ronald Fagin}, \bibinfo{person}{Phokion~G.
  Kolaitis}, \bibinfo{person}{Ren{\'e}e~J. Miller}, {and}
  \bibinfo{person}{Lucian Popa}.} \bibinfo{year}{2005}\natexlab{}.
\newblock \showarticletitle{Data Exchange: Semantics and Query Answering}.
\newblock \bibinfo{journal}{\emph{Theoretical Computer Science}}
  \bibinfo{volume}{336}, \bibinfo{number}{1} (\bibinfo{year}{2005}),
  \bibinfo{pages}{89--124}.
\newblock


\bibitem[\protect\citeauthoryear{Fagin, Ullman, and Vardi}{Fagin
  et~al\mbox{.}}{1983}]%
        {fagin1983semantics}
\bibfield{author}{\bibinfo{person}{Ronald Fagin}, \bibinfo{person}{Jeffrey~D
  Ullman}, {and} \bibinfo{person}{Moshe~Y Vardi}.}
  \bibinfo{year}{1983}\natexlab{}.
\newblock \showarticletitle{On the semantics of updates in databases}. In
  \bibinfo{booktitle}{\emph{Proceedings of the 2nd ACM SIGACT-SIGMOD Symposium
  on Principles of Database Systems}}. \bibinfo{pages}{352--365}.
\newblock


\bibitem[\protect\citeauthoryear{Fano}{Fano}{1961}]%
        {fano1961transmission}
\bibfield{author}{\bibinfo{person}{Robert~M. Fano}.}
  \bibinfo{year}{1961}\natexlab{}.
\newblock \bibinfo{booktitle}{\emph{Transmission of Information: A Statistical
  Theory of Communication}}.
\newblock \bibinfo{publisher}{MIT Press}, \bibinfo{address}{Cambridge, MA}.
\newblock


\bibitem[\protect\citeauthoryear{Ginsburg and Hull}{Ginsburg and Hull}{1983}]%
        {ginsburg1983characterizations}
\bibfield{author}{\bibinfo{person}{Seymour Ginsburg} {and}
  \bibinfo{person}{Richard Hull}.} \bibinfo{year}{1983}\natexlab{}.
\newblock \showarticletitle{Characterizations for functional dependency and
  Boyce-Codd normal form families}.
\newblock \bibinfo{journal}{\emph{Theoretical Computer Science}}
  \bibinfo{volume}{26}, \bibinfo{number}{3} (\bibinfo{year}{1983}),
  \bibinfo{pages}{243--286}.
\newblock


\bibitem[\protect\citeauthoryear{Gogacz and Marcinkowski}{Gogacz and
  Marcinkowski}{2015}]%
        {gogacz2015redspider}
\bibfield{author}{\bibinfo{person}{Tomasz Gogacz} {and} \bibinfo{person}{Jerzy
  Marcinkowski}.} \bibinfo{year}{2015}\natexlab{}.
\newblock \showarticletitle{The Hunt for a Red Spider: Conjunctive Query
  Determinacy Is Undecidable}. In \bibinfo{booktitle}{\emph{30th Annual
  {ACM/IEEE} Symposium on Logic in Computer Science ({LICS})}}.
  \bibinfo{pages}{281--292}.
\newblock
\urldef\tempurl%
\url{https://doi.org/10.1109/LICS.2015.35}
\showDOI{\tempurl}


\bibitem[\protect\citeauthoryear{Gogacz and Marcinkowski}{Gogacz and
  Marcinkowski}{2016}]%
        {gogacz2016rainworm}
\bibfield{author}{\bibinfo{person}{Tomasz Gogacz} {and} \bibinfo{person}{Jerzy
  Marcinkowski}.} \bibinfo{year}{2016}\natexlab{}.
\newblock \showarticletitle{Red Spider Meets a Rainworm: Conjunctive Query
  Finite Determinacy Is Undecidable}. In \bibinfo{booktitle}{\emph{35th {ACM}
  {SIGMOD-SIGACT-SIGAI} Symposium on Principles of Database Systems ({PODS})}}.
  \bibinfo{pages}{121--134}.
\newblock
\urldef\tempurl%
\url{https://doi.org/10.1145/2902251.2902288}
\showDOI{\tempurl}


\bibitem[\protect\citeauthoryear{Halevy, Rajaraman, and Ordille}{Halevy
  et~al\mbox{.}}{2006}]%
        {halevy2006data}
\bibfield{author}{\bibinfo{person}{Alon Halevy}, \bibinfo{person}{Anand
  Rajaraman}, {and} \bibinfo{person}{Joann Ordille}.}
  \bibinfo{year}{2006}\natexlab{}.
\newblock \showarticletitle{Data Integration: The Teenage Years}. In
  \bibinfo{booktitle}{\emph{Proceedings of the 32nd International Conference on
  Very Large Data Bases}}. \bibinfo{pages}{9--16}.
\newblock


\bibitem[\protect\citeauthoryear{Halevy}{Halevy}{2001}]%
        {halevy2001answering}
\bibfield{author}{\bibinfo{person}{Alon~Y. Halevy}.}
  \bibinfo{year}{2001}\natexlab{}.
\newblock \showarticletitle{Answering Queries Using Views: A Survey}.
\newblock \bibinfo{journal}{\emph{The VLDB Journal}} \bibinfo{volume}{10},
  \bibinfo{number}{4} (\bibinfo{year}{2001}), \bibinfo{pages}{270--294}.
\newblock


\bibitem[\protect\citeauthoryear{Le~Cam}{Le~Cam}{2012}]%
        {le2012asymptotic}
\bibfield{author}{\bibinfo{person}{Lucien Le~Cam}.}
  \bibinfo{year}{2012}\natexlab{}.
\newblock \bibinfo{booktitle}{\emph{Asymptotic Methods in Statistical Decision
  Theory}}.
\newblock \bibinfo{publisher}{Springer}.
\newblock


\bibitem[\protect\citeauthoryear{Lee, Lee, Kim, Kosiorek, Choi, and Teh}{Lee
  et~al\mbox{.}}{2019}]%
        {lee2019set}
\bibfield{author}{\bibinfo{person}{Juho Lee}, \bibinfo{person}{Yoonho Lee},
  \bibinfo{person}{Jungtaek Kim}, \bibinfo{person}{Adam~R. Kosiorek},
  \bibinfo{person}{Seungjin Choi}, {and} \bibinfo{person}{Yee~Whye Teh}.}
  \bibinfo{year}{2019}\natexlab{}.
\newblock \showarticletitle{Set {Transformer}: A Framework for Attention-based
  Permutation-Invariant Neural Networks}. In
  \bibinfo{booktitle}{\emph{Proceedings of the 36th International Conference on
  Machine Learning (ICML)}}. \bibinfo{publisher}{PMLR},
  \bibinfo{pages}{3744--3753}.
\newblock


\bibitem[\protect\citeauthoryear{Lenzerini}{Lenzerini}{2002}]%
        {lenzerini2002data}
\bibfield{author}{\bibinfo{person}{Maurizio Lenzerini}.}
  \bibinfo{year}{2002}\natexlab{}.
\newblock \showarticletitle{Data Integration: A Theoretical Perspective}. In
  \bibinfo{booktitle}{\emph{Proceedings of the Twenty-First ACM
  SIGMOD-SIGACT-SIGART Symposium on Principles of Database Systems}}.
  \bibinfo{pages}{233--246}.
\newblock


\bibitem[\protect\citeauthoryear{Levy, Mendelzon, Sagiv, and Srivastava}{Levy
  et~al\mbox{.}}{1995}]%
        {levy1995answering}
\bibfield{author}{\bibinfo{person}{Alon~Y. Levy}, \bibinfo{person}{Alberto~O.
  Mendelzon}, \bibinfo{person}{Yehoshua Sagiv}, {and} \bibinfo{person}{Divesh
  Srivastava}.} \bibinfo{year}{1995}\natexlab{}.
\newblock \showarticletitle{Answering Queries Using Views}. In
  \bibinfo{booktitle}{\emph{Proceedings of the Fourteenth ACM
  SIGACT-SIGMOD-SIGART Symposium on Principles of Database Systems}}.
  \bibinfo{pages}{95--104}.
\newblock


\bibitem[\protect\citeauthoryear{Li, Chang, Ilyas, and Song}{Li
  et~al\mbox{.}}{2001}]%
        {li2001minimizing}
\bibfield{author}{\bibinfo{person}{Chen Li}, \bibinfo{person}{Edward Chang},
  \bibinfo{person}{Ihab~F. Ilyas}, {and} \bibinfo{person}{Jiannan Song}.}
  \bibinfo{year}{2001}\natexlab{}.
\newblock \showarticletitle{Minimizing View Sets without Losing Query-Answering
  Power}. In \bibinfo{booktitle}{\emph{Proceedings of the 8th International
  Conference on Database Theory}}. \bibinfo{pages}{99--113}.
\newblock


\bibitem[\protect\citeauthoryear{Li, Yang, and Zhang}{Li et~al\mbox{.}}{2018}]%
        {li2018survey}
\bibfield{author}{\bibinfo{person}{Yingming Li}, \bibinfo{person}{Ming Yang},
  {and} \bibinfo{person}{Zhongfei Zhang}.} \bibinfo{year}{2018}\natexlab{}.
\newblock \showarticletitle{A Survey of Multi-View Representation Learning}.
\newblock \bibinfo{journal}{\emph{IEEE Transactions on Knowledge and Data
  Engineering}} \bibinfo{volume}{31}, \bibinfo{number}{10}
  (\bibinfo{year}{2018}), \bibinfo{pages}{1863--1883}.
\newblock


\bibitem[\protect\citeauthoryear{Libkin}{Libkin}{2011}]%
        {libkin2011incomplete}
\bibfield{author}{\bibinfo{person}{Leonid Libkin}.}
  \bibinfo{year}{2011}\natexlab{}.
\newblock \showarticletitle{Incomplete Information and Certain Answers in
  General Data Models}. In \bibinfo{booktitle}{\emph{Proceedings of the
  Thirtieth ACM SIGMOD-SIGACT-SIGART Symposium on Principles of Database
  Systems}}. \bibinfo{pages}{59--70}.
\newblock


\bibitem[\protect\citeauthoryear{Lin}{Lin}{1991}]%
        {lin2002divergence}
\bibfield{author}{\bibinfo{person}{Jianhua Lin}.}
  \bibinfo{year}{1991}\natexlab{}.
\newblock \showarticletitle{Divergence Measures Based on the Shannon Entropy}.
\newblock \bibinfo{journal}{\emph{IEEE Transactions on Information Theory}}
  \bibinfo{volume}{37}, \bibinfo{number}{1} (\bibinfo{year}{1991}),
  \bibinfo{pages}{145--151}.
\newblock
\urldef\tempurl%
\url{https://doi.org/10.1109/18.61115}
\showDOI{\tempurl}


\bibitem[\protect\citeauthoryear{Nargesian, Zhu, Miller, Pu, and
  Arocena}{Nargesian et~al\mbox{.}}{2019}]%
        {nargesian2019datalake}
\bibfield{author}{\bibinfo{person}{Fatemeh Nargesian}, \bibinfo{person}{Erkang
  Zhu}, \bibinfo{person}{Ren{\'e}e~J. Miller}, \bibinfo{person}{Ken~Q. Pu},
  {and} \bibinfo{person}{Patricia~C. Arocena}.}
  \bibinfo{year}{2019}\natexlab{}.
\newblock \showarticletitle{Data Lake Management: Challenges and
  Opportunities}.
\newblock \bibinfo{journal}{\emph{Proc. {VLDB} Endow.}} \bibinfo{volume}{12},
  \bibinfo{number}{12} (\bibinfo{year}{2019}), \bibinfo{pages}{1986--1989}.
\newblock
\urldef\tempurl%
\url{https://doi.org/10.14778/3352063.3352116}
\showDOI{\tempurl}


\bibitem[\protect\citeauthoryear{Nash, Segoufin, and Vianu}{Nash
  et~al\mbox{.}}{2010}]%
        {nash2010views}
\bibfield{author}{\bibinfo{person}{Alan Nash}, \bibinfo{person}{Luc Segoufin},
  {and} \bibinfo{person}{Victor Vianu}.} \bibinfo{year}{2010}\natexlab{}.
\newblock \showarticletitle{Views and Queries: Determinacy and Rewriting}.
\newblock \bibinfo{journal}{\emph{ACM Transactions on Database Systems}}
  \bibinfo{volume}{35}, \bibinfo{number}{3} (\bibinfo{year}{2010}),
  \bibinfo{pages}{1--41}.
\newblock


\bibitem[\protect\citeauthoryear{Pasaila}{Pasaila}{2011}]%
        {pasaila2011conjunctive}
\bibfield{author}{\bibinfo{person}{Daniel Pasaila}.}
  \bibinfo{year}{2011}\natexlab{}.
\newblock \showarticletitle{Conjunctive Queries Determinacy and Rewriting}. In
  \bibinfo{booktitle}{\emph{Proceedings of the 14th International Conference on
  Database Theory}}. \bibinfo{pages}{220--231}.
\newblock


\bibitem[\protect\citeauthoryear{Priem, Piwowar, and Orr}{Priem
  et~al\mbox{.}}{2022}]%
        {priem2022openalex}
\bibfield{author}{\bibinfo{person}{Jason Priem}, \bibinfo{person}{Heather
  Piwowar}, {and} \bibinfo{person}{Richard Orr}.}
  \bibinfo{year}{2022}\natexlab{}.
\newblock \bibinfo{title}{{OpenAlex}: A fully-open index of scholarly works,
  authors, venues, institutions, and concepts}.
\newblock \bibinfo{howpublished}{\url{https://openalex.org}}.
\newblock
\newblock
\shownote{arXiv:2205.01833.}


\bibitem[\protect\citeauthoryear{Primpeli and Bizer}{Primpeli and
  Bizer}{2019}]%
        {primpeli2019profiling}
\bibfield{author}{\bibinfo{person}{Anna Primpeli} {and}
  \bibinfo{person}{Christian Bizer}.} \bibinfo{year}{2019}\natexlab{}.
\newblock \showarticletitle{Profiling Entity Matching Benchmark Tasks}. In
  \bibinfo{booktitle}{\emph{Proceedings of the 22nd International Conference on
  Extending Database Technology (EDBT)}}. \bibinfo{pages}{1--12}.
\newblock


\bibitem[\protect\citeauthoryear{Stonebraker and Ilyas}{Stonebraker and
  Ilyas}{2018}]%
        {stonebraker2018integration}
\bibfield{author}{\bibinfo{person}{Michael Stonebraker} {and}
  \bibinfo{person}{Ihab~F. Ilyas}.} \bibinfo{year}{2018}\natexlab{}.
\newblock \showarticletitle{Data Integration: The Current Status and the Way
  Forward}.
\newblock \bibinfo{journal}{\emph{{IEEE} Data Eng. Bull.}}
  \bibinfo{volume}{41}, \bibinfo{number}{2} (\bibinfo{year}{2018}),
  \bibinfo{pages}{3--9}.
\newblock


\bibitem[\protect\citeauthoryear{Thapa and Hern\'{a}ndez}{Thapa and
  Hern\'{a}ndez}{2026}]%
        {darus-artifacts}
\bibfield{author}{\bibinfo{person}{Ratan~Bahadur Thapa} {and}
  \bibinfo{person}{Daniel Hern\'{a}ndez}.} \bibinfo{year}{2026}\natexlab{}.
\newblock \bibinfo{title}{Artifacts for: {Identifiability of Relational Queries
  in Multi-View Pretraining}}.
\newblock \bibinfo{howpublished}{DaRUS, V1}.
\newblock
\urldef\tempurl%
\url{https://doi.org/10.18419/DARUS-6292}
\showDOI{\tempurl}


\bibitem[\protect\citeauthoryear{Tsybakov}{Tsybakov}{2009}]%
        {tsybakov2009introduction}
\bibfield{author}{\bibinfo{person}{A.~B. Tsybakov}.}
  \bibinfo{year}{2009}\natexlab{}.
\newblock \bibinfo{booktitle}{\emph{Introduction to Nonparametric Estimation}}.
\newblock \bibinfo{publisher}{Springer}, \bibinfo{address}{New York}.
\newblock


\bibitem[\protect\citeauthoryear{Zhang, Wang, and Wang}{Zhang
  et~al\mbox{.}}{2025}]%
        {zhang2025augmentation}
\bibfield{author}{\bibinfo{person}{Qi Zhang}, \bibinfo{person}{Yifei Wang},
  {and} \bibinfo{person}{Yisen Wang}.} \bibinfo{year}{2025}\natexlab{}.
\newblock \showarticletitle{An Augmentation Overlap Theory of Contrastive
  Learning}.
\newblock \bibinfo{journal}{\emph{Journal of Machine Learning Research}}
  \bibinfo{volume}{26}, \bibinfo{number}{228} (\bibinfo{year}{2025}),
  \bibinfo{pages}{1--42}.
\newblock


\end{thebibliography}

\clearpage
\appendix

\newcommand{\att}[1]{\attribute{#1}}
\newcommand{\W}{\LegalWorlds}
\newcommand{\Ov}{\OverlapSet}
\newcommand{\clo}[1]{\AttrClosure{#1}}
\newcommand{\aug}[1]{\AugSchema{#1}}
\newcommand{\proj}[2]{#1|_{#2}}
\newcommand{\binent}{H_{\mathrm{b}}}
\newcommand{\nml}{\operatorname{nml}}
\newcommand{\lean}[1]{\texttt{#1}}
\newcommand{\inlean}[1]{\par\smallskip\noindent{\small\textcolor{blue!50!black}{$\vdash$~\textbf{In Lean.}}~#1}\par\smallskip}
\providecommand{\bigland}{\textstyle\bigwedge}
\renewcommand{\Obs}{\operatorname{Obs}}
\renewcommand{\ObsFam}{\Gamma}

\section{Machine-Checked Formalization}\label{a:sec:formal}

This appendix restates the paper's theoretical results with human-readable proofs, each paired with a pointer ($\vdash$~\textbf{In Lean}) to the machine-checked declaration in the \tx{MultiViewIdentifiability} Lean~4 development (\url{https://github.com/danielhz/MultiViewIdentifiability}). To keep the appendix self-contained, the model of Sections~\ref{sec:model}--\ref{sec:identifiability} is recalled briefly before the proofs. Queries are represented \emph{semantically}---by their answer map together with the invariant that the answer depends only on the relevant projections of the world (its footprint)---so ``machine-checked'' means the semantic content of each statement is verified, not that a query syntax is reflected into Lean. Every result below is machine-checked with no \tx{sorry} except those explicitly flagged \emph{open}.

\paragraph{Numbering.} Appendix results are numbered independently of the main text; Table~\ref{a:tab:correspondence} gives the correspondence and Table~\ref{a:tab:deps} the dependency structure. Full Lean declaration names appear at each result's $\vdash$~In~Lean pointer.

\begin{table}[t]
\centering
\caption{Correspondence between the paper's results, this appendix, and the Lean development.}
\label{a:tab:correspondence}
\begin{tabular}{@{}p{0.29\linewidth}llll@{}}
\toprule
Result & Main text & Appx. & Lean module & Status \\
\midrule
FD closure / Armstrong          & \S\ref{sec:model}                  & Thm~\ref{a:thm:armstrong}      & \lean{FDClosure}        & proved \\
Closure certificate             & Thm~\ref{thm:frontier-suff}        & Thm~\ref{a:thm:cert}           & \lean{Certificate}      & proved \\
Certificate not necessary       & Cor~\ref{cor:closure-complete}     & Rem~\ref{a:rem:necessity-false}& \lean{Certificate}      & disproof \\
Atom-wise certificate           & \S\ref{sec:algorithms}             & Thm~\ref{a:thm:atomwise}       & \lean{AtomCertificate}  & proved \\
Union coverage insufficient     & ---                                & Thm~\ref{a:thm:union-false}    & \lean{InterfaceVisible} & disproof \\
Interface-visible fragment      & Thm~\ref{thm:iv-cq-iff}            & Thm~\ref{a:thm:iv}             & \lean{Determinacy}      & proved \\
Determinacy characterisation    & Thm~\ref{thm:iv-cq-iff}            & Thm~\ref{a:thm:det}            & \lean{Determinacy}      & proved \\
Minimax error floor             & Thm~\ref{thm:nonident-minimax}     & Thm~\ref{a:thm:minimax}        & \lean{Minimax}          & proved \\
Outcome / rate bound            & Thm~\ref{thm:rate-lb}              & Thm~\ref{a:thm:outcome}        & \lean{OutcomeBound}     & proved \\
Capacity error bound            & Cor~\ref{cor:fano}                 & Thm~\ref{a:thm:capacity}       & \lean{OutcomeBound}     & proved \\
Fano's inequality               & Cor~\ref{cor:fano}                 & Thm~\ref{a:thm:fano}           & \lean{Entropy}          & proved \\
Distributional Fano corollary   & Cor~\ref{cor:fano}                 & ---                            & ---                     & open \\
Robust threshold                & Thm~\ref{thm:robust-threshold}     & Thm~\ref{a:thm:robust}         & \lean{Information}      & proved \\
Capability jump (augmentation)  & Thm~\ref{thm:connectivity-emergence}& Prop~\ref{a:prop:aug}         & \lean{MinAug}           & aug.\ case \\
MinAug greedy / NP-hardness     & Thm~\ref{thm:minaug-hard-greedy}   & Rem~\ref{a:rem:greedy-open}    & \lean{MinAug}           & open \\
\bottomrule
\end{tabular}
\end{table}

\begin{table}[t]
\centering
\caption{Dependency structure of the appendix's main results.}
\label{a:tab:deps}
\begin{tabular}{@{}p{0.34\linewidth}p{0.58\linewidth}@{}}
\toprule
Result & Depends on \\
\midrule
Closure certificate (Thm~\ref{a:thm:cert})        & Footprint lifting (Lem~\ref{a:lem:footprint-lift}) $\leftarrow$ determinacy under closure (Lem~\ref{a:lem:fd-det}) \\
Atom-wise certificate (Thm~\ref{a:thm:atomwise})  & Atom footprint (Lem~\ref{a:lem:cq-footprint}) \\
Determinacy char.\ (Thm~\ref{a:thm:det})          & Views vs.\ agreement (Lem~\ref{a:lem:view-agree}) \\
Robust threshold (Thm~\ref{a:thm:robust})         & Gibbs (Lem~\ref{a:lem:gibbs}), point-mass (Lem~\ref{a:lem:js-point}), mass concentration (Lem~\ref{a:lem:mass}), unique mode (Lem~\ref{a:lem:mode}) \\
Fano's inequality (Thm~\ref{a:thm:fano})          & Entropy facts (Lem~\ref{a:lem:entropy-basic}), superadditivity (Lem~\ref{a:lem:superadd}) \\
Augmentation (Prop~\ref{a:prop:aug})              & Closure certificate (Thm~\ref{a:thm:cert}) \\
\bottomrule
\end{tabular}
\end{table}

\subsection{The model}\label{a:sec:model}

\paragraph{Attributes, tuples, worlds.}
Fix a universal schema of \emph{attributes} with values in a fixed domain. A \emph{tuple} is a
total map from attributes to values, and a \emph{world} is a set of tuples (a relation over the
universal schema). For an attribute set $X$ and tuples $s,t$, write $s =_X t$ for
\emph{agreement on $X$}: $s(a)=t(a)$ for every $a\in X$. Agreement on $X$ is an equivalence
relation and is antitone in $X$ (agreement on a larger set implies agreement on a smaller one).
\inlean{\lean{Tuple}, \lean{World}, \lean{Tuple.AgreeOn} in \lean{Basic.lean}; the
equivalence and antitonicity facts are \lean{AgreeOn.refl/symm/trans/mono}.}

\paragraph{Interface laws and legality.}
The interface imposes \emph{cross-world} functional dependencies. A dependency
$X\to b$ holds across a class of \emph{legal} worlds $\W$ if, for \emph{any} two legal worlds
$w,w'$ and any tuples $s\in w$, $t\in w'$, agreement on $X$ forces agreement on $b$:
\[
  s =_X t \;\Longrightarrow\; s(b)=t(b).
\]
This is strictly stronger than an instance-level dependency: it ties the value of $b$ to the
value of $X$ \emph{uniformly across all legal worlds}, which is what lets the interface
\emph{determine} attributes rather than merely constrain single instances. A \emph{legality
structure} is a class of legal worlds $\W$ together with a set $\Sigma$ of such laws that all
legal worlds satisfy.
\inlean{\lean{FD}, \lean{FD.HoldsOnPair}, \lean{LegalityStructure} in \lean{Basic.lean}.}

\paragraph{Overlaps, the observable family, observation.}
The interface exposes a family of \emph{observable schemas} $\ObsFam$ --- closure-augmented
designated overlaps (and, more generally, local views or resolver outputs). For an attribute
set $O$ we write $\aug O = \clo{\att O}$ for its closure-augmented schema. Two worlds $w,w'$
\emph{agree on} a schema $O$, written $\proj w O = \proj{w'} O$, when their $O$-projections
coincide as sets of $O$-restricted tuples; the \emph{observation} of $w$ is the family
$\Obs(w)=(\proj w O)_{O\in\ObsFam}$, and $w,w'$ are \emph{observationally equivalent}
($w\sim w'$) when $\Obs(w)=\Obs(w')$, i.e.\ they agree on every schema in $\ObsFam$.
\inlean{\lean{World.AgreeOn} and \lean{ObsEquiv} in \lean{Basic.lean}. In the Lean
development the observable family $\ObsFam$ is the interface's list of designated
(closure-augmented) overlaps \lean{Interface.augOverlaps}; \lean{World.AgreeOn O w w'} is set
equality of the $O$-projections (made precise in \S\ref{a:sec:determinacy}).}

\paragraph{Conjunctive queries, footprint, identifiability.}
A Boolean conjunctive query $Q$ has a \emph{footprint} $\att Q$ --- the attributes its answer
can depend on --- and its answer is invariant under agreement on the footprint: if
$\proj w{\att Q}=\proj{w'}{\att Q}$ then $Q(w)=Q(w')$. $Q$ is \emph{identifiable} under the
interface when observationally equivalent legal worlds always agree on the answer:
\[
  \text{for all legal } w,w':\quad w\sim w' \;\Longrightarrow\; Q(w)=Q(w').
\]
\inlean{\lean{BoolCQ} (with the footprint-faithfulness field) in \lean{Basic.lean};
\lean{Identifiable} in \lean{Identifiability.lean}. As noted above, the footprint-faithfulness
field is the semantic invariant of a \cq, taken as the query's defining property.}

\subsection{Functional-dependency closure and Armstrong entailment}\label{a:sec:fd}

\paragraph{The closure operator.}
For an attribute set $X$, its \emph{closure} $\clo X$ under $\Sigma$ is the least set
containing $X$ and closed under the laws: if $Y\to b\in\Sigma$ and $Y\subseteq\clo X$ then
$b\in\clo X$. The closure is \emph{extensive} ($X\subseteq\clo X$), \emph{monotone}
($X\subseteq Y\Rightarrow\clo X\subseteq\clo Y$), and \emph{idempotent}
($\clo{\clo X}=\clo X$).
\inlean{\lean{InClosure}/\lean{fdClosure} with \lean{fdClosure\_extensive},
\lean{fdClosure\_mono}, \lean{fdClosure\_idem\_le}, \lean{fdClosure\_idem\_ge} in
\lean{FDClosure.lean}.}

\begin{lemma}[Determinacy under closure]\label{a:lem:fd-det}
For every attribute set $X$, all legal worlds $w,w'$, and all tuples $s\in w$, $t\in w'$,
\[
  s =_X t \;\Longrightarrow\; s =_{\clo X} t.
\]
\end{lemma}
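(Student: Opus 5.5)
The natural route is induction on the inductive definition of the closure $\clo X$. In the appendix model, $\clo X$ is the least set containing $X$ and closed under the rule ``if $Y\to b\in\Sigma$ and $Y\subseteq\clo X$ then $b\in\clo X$''; in Lean this is the inductive predicate \lean{InClosure}. So I fix legal worlds $w,w'$, tuples $s\in w$, $t\in w'$, and assume $s=_X t$. I then prove, by induction on the derivation of $a\in\clo X$, the pointwise statement $P(a)$: $s(a)=t(a)$. Agreement on $\clo X$ is by definition $P(a)$ for every $a\in\clo X$, so this gives the conclusion.

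There are two cases.

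\emph{Base case:} $a\in X$. Then $s(a)=t(a)$ follows directly from the hypothesis $s=_X t$.

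\emph{Inductive step:} $a=b$ was added via a law $Y\to b\in\Sigma$ with $Y\subseteq\clo X$, and the induction hypothesis gives $s(y)=t(y)$ for every $y\in Y$. That is exactly $s=_Y t$. Since the legality structure requires every law of $\Sigma$ to hold across \emph{any} pair of legal worlds and any pair of tuples drawn from them (the cross-world condition of Definition~\ref{def:legality}, \lean{FD.HoldsOnPair}), we can instantiate it at $w,w',s,t$ and obtain $s(b)=t(b)$.

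The argument is short. Its only real content is that the law must be applicable to tuples from \emph{different} worlds $w\neq w'$. This is where the cross-world semantics is essential: with mere instance-level FDs the inductive step would fail whenever $w\neq w'$. I therefore expect the main obstacle to be bookkeeping rather than mathematics. The induction must range over closure membership rather than over set inclusion, so that the hypothesis is available for all of $Y$ at once. Laws with set-valued right-hand sides $X\to Y$ in the main text must also be reduced to single-attribute laws $X\to b$, one for each $b\in Y$, which is immediate since agreement on $Y$ is agreement on each $b\in Y$.

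If the closure is instead defined as a least fixed point or an intersection of closed sets, I would proceed differently. I would show that $Z=\{a : s(a)=t(a)\}$ contains $X$ and is closed under every law of $\Sigma$, by the same two arguments above. Minimality of $\clo X$ then gives $\clo X\subseteq Z$.
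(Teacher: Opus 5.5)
Your proposal is correct and matches the paper's proof: both induct on the construction of $X^+_\Sigma$, close the base case $b\in X$ with the hypothesis $s =_X t$, and in the inductive step use the induction hypothesis to get $s =_Y t$ for a law $Y\to b$ with $Y\subseteq X^+_\Sigma$, then apply the cross-world law to $s\in w$, $t\in w'$. Your side remarks (the least-fixed-point variant and splitting set-valued right-hand sides into single-attribute laws) are sound but not needed, since the appendix already states laws in the form $X\to b$.
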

\begin{proof}
Induct on the construction of $\clo X$. If $b\in X$ the conclusion is the hypothesis. If
$b$ enters the closure through a law $Y\to b$ with $Y\subseteq\clo X$, then by the induction
hypothesis $s =_Y t$; since the law holds across the legal worlds $w,w'$ and $s\in w$,
$t\in w'$, agreement on $Y$ forces $s(b)=t(b)$. Hence $s =_{\clo X} t$.
\end{proof}
\inlean{\lean{fd\_determinacy} in \lean{FDClosure.lean} (with corollaries
\lean{fdClosure\_propagates\_agreement} lifting it from a single attribute to the whole
closure).}

\begin{theorem}[Armstrong soundness and completeness]\label{a:thm:armstrong}
For attribute sets $X$ and a single attribute $a$, the law $X\to a$ is entailed by $\Sigma$
(holds in every legality structure satisfying $\Sigma$) \emph{iff} $a\in\clo X$.
\end{theorem}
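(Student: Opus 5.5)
Soundness is Lemma~\ref{a:lem:fd-det} read pointwise; completeness needs a counter-model showing that every attribute outside the closure can be separated from $X$.

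\emph{Soundness} ($a\in\clo X \Rightarrow$ entailment). Fix any legality structure $(\W,\Sigma)$, legal worlds $w,w'$, and tuples $s\in w$, $t\in w'$ with $s=_X t$. Lemma~\ref{a:lem:fd-det} gives $s=_{\clo X} t$. Since $a\in\clo X$, we get $s(a)=t(a)$. Hence $X\to a$ holds across $\W$.

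\emph{Completeness} ($a\notin\clo X \Rightarrow$ not entailed). We build a single legality structure that satisfies $\Sigma$ but violates $X\to a$. The standard Armstrong two-tuple construction works here:
\begin{itemize}
\item fix two distinct domain values $0\neq 1$;
\item let $s$ be the all-$0$ tuple;
\item let $t$ take value $0$ on $\clo X$ and $1$ elsewhere;
\item take $\W=\{\{s\},\{t\}\}$, or equally the single world $\{s,t\}$.
\end{itemize}
Because cross-world legality quantifies over all pairs of tuples from legal worlds, both choices amount to checking every law on the tuple pairs $(s,s)$, $(t,t)$, $(s,t)$ and $(t,s)$.

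The diagonal pairs are trivial. For a law $Y\to b\in\Sigma$ with $s=_Y t$, the tuples agree exactly on $\clo X$, so $Y\subseteq\clo X$. The closure rule then gives $b\in\clo X$, so $s(b)=t(b)$. Thus $\Sigma$ holds across the structure.

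Now $X\subseteq\clo X$ by extensivity, so $s=_X t$. But $a\notin\clo X$ gives $s(a)=0\neq 1=t(a)$, so $X\to a$ fails.

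The main obstacle is a formal one: the construction needs a domain with at least two values. If the Lean model allows a singleton domain, every law is vacuously entailed and completeness is false. In that case I would add the hypothesis that the domain is nontrivial, as Lean's \texttt{Nontrivial} class does.

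A second point to check is whether laws in $\Sigma$ have set right-hand sides $X\to Y$ rather than single attributes. If so, I would split each law into single-attribute laws $X\to b$ for $b\in Y$; the argument is unchanged.

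Finally, ``entailed'' must quantify over legality structures with a nonempty $\W$. The two-world counter-model already satisfies this.
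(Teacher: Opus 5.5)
Your proof is correct and follows the paper's route: soundness comes from Lemma~\ref{a:lem:fd-det}, and completeness uses the canonical two-tuple structure whose tuples agree exactly on $\AttrClosure{X}$, so any law $Y\to b$ with $s=_Y t$ forces $Y\subseteq\AttrClosure{X}$ and hence $b\in\AttrClosure{X}$. You also make explicit that the domain needs at least two values, which the paper's sketch assumes without saying when it asks the tuples to ``differ on every attribute outside'' the closure.
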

\begin{proof}
\emph{Soundness} ($a\in\clo X\Rightarrow$ entailed): immediate from
Lemma~\ref{a:lem:fd-det}, which already shows $s=_X t \Rightarrow s(a)=t(a)$ for any legality
structure satisfying $\Sigma$.

\emph{Completeness} ($\text{entailed}\Rightarrow a\in\clo X$): contrapositive. If
$a\notin\clo X$, build a \emph{canonical} two-tuple legality structure that satisfies $\Sigma$
but violates $X\to a$: take two tuples that agree exactly on $\clo X$ and differ on every
attribute outside it. This pair satisfies every law of $\Sigma$ (a law $Y\to b$ with the
tuples agreeing on $Y$ forces $Y\subseteq\clo X$, hence $b\in\clo X$, hence agreement on $b$),
yet they agree on $X$ and disagree on $a\notin\clo X$. So $X\to a$ is not entailed.
\end{proof}
\inlean{\lean{fdClosure\_sound} and \lean{fdClosure\_complete} in \lean{FDClosure.lean}; the
canonical separating structure is \lean{canonStruct}.}

\subsection{The closure certificate}\label{a:sec:cert}

The first sufficient condition for identifiability covers a query whose entire footprint is
captured by the closure of a single observable overlap.

\begin{lemma}[Footprint lifting]\label{a:lem:footprint-lift}
Let $O$ be an attribute set and $Q$ a query with $\att Q\subseteq\clo O$. For all legal
worlds $w,w'$ and tuples $s\in w$, $t\in w'$, if $s =_O t$ then $s =_{\att Q} t$.
\end{lemma}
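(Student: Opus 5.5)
The plan is to reduce the statement to Lemma~\ref{a:lem:fd-det} and then use antitonicity of tuple agreement. Fix an attribute set $O$, a query $Q$ with $\att Q\subseteq\clo O$, legal worlds $w,w'$, and tuples $s\in w$, $t\in w'$ with $s=_O t$.

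\textbf{Step 1.} Apply Lemma~\ref{a:lem:fd-det} with $X=O$. The hypotheses match exactly: $w,w'$ are legal, $s\in w$, $t\in w'$, and $s=_O t$. The lemma gives $s=_{\clo O} t$; that is, the cross-world laws of $\Sigma$ propagate agreement from $O$ to its whole closure.

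\textbf{Step 2.} Since $\att Q\subseteq\clo O$, agreement on the larger set $\clo O$ restricts to agreement on the smaller set $\att Q$. This is the antitonicity of $=_X$ in $X$ recorded in \S\ref{a:sec:model} (\lean{AgreeOn.mono}). Concretely, for any $a\in\att Q$ we have $a\in\clo O$, hence $s(a)=t(a)$. Therefore $s=_{\att Q}t$.

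The only real content is in Step 1, and that work was already done by the induction over the construction of $\clo O$ in Lemma~\ref{a:lem:fd-det}. The one point to check is that the lemma is invoked in its \emph{cross-world} form: $s$ and $t$ are drawn from possibly different legal worlds. This is exactly why the laws are defined to hold across $\W$ and not merely inside a single instance.

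No other obstacle is expected. If the lemma were stated only for single attributes, I would first lift it pointwise over every $b\in\clo O$ (as \lean{fdClosure\_propagates\_agreement} does) and then restrict to $\att Q$.
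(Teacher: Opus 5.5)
Your proposal is correct and matches the paper's proof: apply Lemma~\ref{a:lem:fd-det} with $X=O$ to obtain $s=_{\AttrClosure{O}}t$, then restrict to $\attribute{Q}\subseteq\AttrClosure{O}$ by antitonicity of agreement.
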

\begin{proof}
By Lemma~\ref{a:lem:fd-det}, $s =_O t$ gives $s =_{\clo O} t$; antitonicity of agreement and
$\att Q\subseteq\clo O$ give $s =_{\att Q} t$.
\end{proof}
\inlean{\lean{footprint\_lifting} in \lean{Certificate.lean}.}

\begin{theorem}[Closure certificate]\label{a:thm:cert}
If $\att Q\subseteq\clo O$ for some observable overlap $O\in\ObsFam$, then $Q$ is identifiable.
\end{theorem}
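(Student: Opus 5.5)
The plan is to unfold identifiability and reduce it to the footprint-faithfulness invariant of $Q$. Fix legal worlds $w,w'$ with $w\sim w'$; the goal is $Q(w)=Q(w')$. Since every conjunctive query satisfies ``$\proj w{\att Q}=\proj{w'}{\att Q}$ implies $Q(w)=Q(w')$'', it suffices to show that $w$ and $w'$ have the same $\att Q$-projection.

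First, I will take the observable schema $\aug O=\clo{O}$ in $\ObsFam$. The relation $w\sim w'$ means the two worlds agree on every schema in $\ObsFam$, so in particular $\proj w{\clo O}=\proj{w'}{\clo O}$ as sets of restricted tuples.

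Second, I will push this agreement down to the smaller schema $\att Q\subseteq\clo O$. Projection is functorial: $\proj w{\att Q}$ is the image of $\proj w{\clo O}$ under restriction to $\att Q$. Concretely, for $s\in w$, equality of the $\clo O$-projections gives some $t\in w'$ with $s=_{\clo O}t$. Antitonicity of agreement then gives $s=_{\att Q}t$, so $s|_{\att Q}\in\proj{w'}{\att Q}$. The symmetric argument yields the reverse inclusion, hence $\proj w{\att Q}=\proj{w'}{\att Q}$, and footprint-faithfulness finishes the proof.

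Footprint lifting (Lemma~\ref{a:lem:footprint-lift}) is available if the observable schema is recorded as the raw overlap $O$ rather than its closure. In that case, the tuple-level step uses $s=_O t\Rightarrow s=_{\att Q}t$ for $s\in w$ and $t\in w'$, which relies on cross-world legality through Lemma~\ref{a:lem:fd-det}. The main obstacle is this set-level bookkeeping: turning equality of projections (existential matching of tuples) into equality of sub-projections, in both directions. Everything else is definitional.
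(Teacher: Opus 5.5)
Your proof is correct, but your main line puts the work in a different place than the paper's. You take the observed schema to be $\AugSchema{O}=\AttrClosure{O}$ itself, so observational equivalence gives agreement on $\AttrClosure{O}$ directly. Antitonicity of agreement then finishes the job with no appeal to legality, since projection to $\attribute{Q}$ factors through projection to $\AttrClosure{O}$. This is exactly the main-text proof of Theorem~\ref{thm:frontier-suff}.

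However, the appendix hypothesis only gives $O\in\ObsFam$, not $\AttrClosure{O}\in\ObsFam$. Your step ``take $\AugSchema{O}=\AttrClosure{O}$ in $\ObsFam$'' is justified when the members of $\ObsFam$ are already closure-augmented, because then $\AttrClosure{O}=O$ by idempotence. It is not justified when $\ObsFam$ contains raw local views or resolver outputs, which the appendix explicitly allows.

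The paper's proof is precisely your fallback. It matches each $s\in w$ with some $t\in w'$ such that $s=_O t$, and upgrades this to $s=_{\attribute{Q}} t$ by footprint lifting (Lemma~\ref{a:lem:footprint-lift}). Footprint lifting is cross-world determinacy under closure (Lemma~\ref{a:lem:fd-det}) followed by antitonicity. The paper then argues symmetrically and concludes by footprint faithfulness.

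Your primary route buys a legality-free one-liner, at the price of building the closure into what the interface observes. The paper's route works for an arbitrary observable schema, and makes explicit that the laws holding uniformly across legal worlds are what let $O$-evidence determine $\AttrClosure{O}$. Since you supply that fallback, your proposal covers the stated theorem.
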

\begin{proof}
Let $w,w'$ be legal with $w\sim w'$. Since $O\in\ObsFam$, observation equivalence gives
$\proj w O=\proj{w'} O$: every tuple of $w$ has a tuple of $w'$ agreeing on $O$, and
vice versa. Take $s\in w$; pick the matching $t\in w'$ with $s =_O t$. By
Lemma~\ref{a:lem:footprint-lift}, $s =_{\att Q} t$. Symmetrically every tuple of $w'$ has a
footprint-agreeing partner in $w$. Hence $\proj w{\att Q}=\proj{w'}{\att Q}$, and footprint
faithfulness yields $Q(w)=Q(w')$.
\end{proof}
\inlean{\lean{certificate\_sufficiency} in \lean{Certificate.lean}. The ``fully grounded''
variant, where every part of the query is grounded in some overlap, is
\lean{fully\_grounded\_identifiable}/\lean{hasCertificate\_identifiable}.}

\begin{remark}[The certificate is sufficient, not necessary]\label{a:rem:necessity-false}
The converse fails: an identifiable query need not have its footprint inside any single
closure. The Lean development exhibits a machine-checked counterexample --- a degenerate
single-world interface in which every query is vacuously identifiable while the closure
condition fails. Necessity holds only under an additional richness (``closure-separability'')
assumption on the legal class, which is not assumed here.
\inlean{\lean{certificate\_necessity\_false} in \lean{Certificate.lean}.}
\end{remark}

\subsection{The atom-wise closure certificate}\label{a:sec:atom}

The closure certificate of \S\ref{a:sec:cert} asks the \emph{whole} footprint to sit inside one
overlap. The atom-wise certificate is sharper: it covers a query \emph{atom by atom}, allowing
different atoms to be grounded in different overlaps. This is the right granularity, because
covering a query's attributes \emph{separately} is not enough.

A conjunctive query is presented by its relation-atom schemas $U_1,\dots,U_m$ together with an
answer that depends only on the per-atom projections.

\begin{lemma}[Atom footprint]\label{a:lem:cq-footprint}
Let $Q$ have atom schemas $U_1,\dots,U_m$. If legal worlds $w,w'$ satisfy
$\proj w{U_j}=\proj{w'}{U_j}$ for every $j$, then $Q(w)=Q(w')$.
\end{lemma}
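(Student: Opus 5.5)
The plan is to reduce the statement to the semantics of conjunctive query evaluation, using the way a \cq{} is presented in this appendix. A query with atom schemas $U_1,\dots,U_m$ is evaluated by interpreting each atom symbol $R_{U_j}$ on a world $w$ as the projection $\proj w{U_j}$. The Boolean answer is then the existence of an assignment of values to the variables $\bar v_1,\dots,\bar v_m$, with shared variables taking equal values, such that each $\bar v_j$ lies in the extension of $R_{U_j}$. So the first step is to write $Q(w)=\Phi\bigl(\proj w{U_1},\dots,\proj w{U_m}\bigr)$ for a fixed function $\Phi$ determined by the query's syntax alone: its variable pattern, constants, and existential quantifiers. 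In particular $\Phi$ does not depend on $w$.

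The second step is immediate. If $\proj w{U_j}=\proj{w'}{U_j}$ for every $j$, then the arguments of $\Phi$ coincide on the two sides, so $Q(w)=\Phi(\dots)=Q(w')$. Spelled out without $\Phi$: any satisfying assignment for $w$ sends each atom tuple into $\proj w{U_j}=\proj{w'}{U_j}$, so it is also a satisfying assignment for $w'$, and symmetrically. Hence the two answers agree. Legality of $w,w'$ plays no role, and neither do the laws $\Sigma$. The hypothesis is already agreement on the exact atom schemas, so no closure argument such as Lemma~\ref{a:lem:fd-det} is needed.

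The only real obstacle is a modelling one. In the Lean-style development, queries are semantic objects, so the lemma is essentially the defining invariant of how a \cq{} is ``presented by its atom schemas'' (the per-atom analogue of footprint faithfulness). The care needed is to state precisely that the answer map factors through the tuple of per-atom projections, rather than through $\proj w{\att Q}$. I would make this explicit by defining $\Phi$ as above and verifying the factorization by unfolding the satisfaction clause for each atom. Once that is in place, the lemma follows by rewriting with the $m$ equalities.
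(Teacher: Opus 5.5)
Your proposal is correct and follows the paper's own argument. Each relation symbol of $Q$ is interpreted on a world by its per-atom projection $w|_{U_j}$, so agreement on all of these projections makes the two structures interpret $Q$ identically, and set-semantics evaluation gives $Q(w)=Q(w')$; as you anticipate, the paper notes that in the Lean formalization this is simply the query's defining per-atom faithfulness invariant.
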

\begin{proof}
The interpretation of every relation symbol occurring in $Q$ is fixed by the per-atom
projections; agreeing on all of them makes the two induced structures interpret $Q$
identically, so set-semantics evaluation gives the same answer. (In the formalisation this is
the query's defining per-atom faithfulness invariant.)
\end{proof}
\inlean{\lean{cq\_footprint} (the field \lean{AtomCQ.faithful}) in \lean{AtomCertificate.lean}.}

\begin{theorem}[Atom-wise closure certificate]\label{a:thm:atomwise}
If for every atom $U_j$ there is an observable overlap $O_j\in\ObsFam$ with
$\att{U_j}\subseteq\clo{O_j}$, then $Q$ is identifiable.
\end{theorem}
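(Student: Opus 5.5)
The plan is to reduce the atom-wise statement to the per-atom structure already established, rerunning the argument of Theorem~\ref{a:thm:cert} once per atom and combining the results through Lemma~\ref{a:lem:cq-footprint}. Fix legal worlds $w,w'$ with $w\sim w'$. The goal is $Q(w)=Q(w')$. By Lemma~\ref{a:lem:cq-footprint}, it suffices to show $\proj w{U_j}=\proj{w'}{U_j}$ for each atom $j=1,\dots,m$. So the proof splits into $m$ independent obligations, each discharged by its own overlap $O_j$.

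For a fixed atom $j$, take the overlap $O_j\in\ObsFam$ supplied by the hypothesis. Observational equivalence gives $\proj w{O_j}=\proj{w'}{O_j}$. Concretely, every $s\in w$ has some $t\in w'$ with $s=_{O_j}t$, and conversely. Now apply Lemma~\ref{a:lem:footprint-lift}, or directly Lemma~\ref{a:lem:fd-det} followed by antitonicity of agreement, with $\att{U_j}\subseteq\clo{O_j}$ in place of $\att Q\subseteq\clo O$. This upgrades $s=_{O_j}t$ to $s=_{U_j}t$. Lemma~\ref{a:lem:fd-det} is legitimately applicable here because $s$ and $t$ come from \emph{different} legal worlds and the laws hold across worlds.

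From these matchings the two projections coincide. Every $U_j$-restriction of a tuple of $w$ equals the $U_j$-restriction of some tuple of $w'$, so $\proj w{U_j}\subseteq\proj{w'}{U_j}$. The symmetric argument, starting from $t\in w'$ and using the other half of the set equality on $O_j$, gives the reverse inclusion.

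The main point of care is precisely this set-equality step. Agreement of projections is an equality of sets of restricted tuples, so both inclusions must be produced explicitly from the two directions of $\proj w{O_j}=\proj{w'}{O_j}$. One must also use the cross-world form of the laws rather than an instance-level FD. Note that the argument does not require a single tuple to witness agreement on several atoms simultaneously: each atom is matched separately, possibly by different partner tuples. This independence is exactly why per-atom faithfulness in Lemma~\ref{a:lem:cq-footprint} is what makes the conclusion go through. By contrast, coverage of the union of the footprints by separate closures would not suffice.
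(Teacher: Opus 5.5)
Your proof is correct and matches the paper's argument step for step. For each atom you lift $O_j$-agreement of partner tuples to $U_j$-agreement via Lemma~\ref{a:lem:fd-det} and antitonicity, obtain both inclusions of $\proj w{U_j}=\proj{w'}{U_j}$ from the two directions of observational equivalence, and conclude with the per-atom faithfulness of Lemma~\ref{a:lem:cq-footprint}.
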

\begin{proof}
Let $w,w'$ be legal with $w\sim w'$. Fix an atom $U_j$ and its overlap $O_j$. Observation
equivalence gives $\proj w{O_j}=\proj{w'}{O_j}$, and the lifting argument of
Lemma~\ref{a:lem:footprint-lift} (now with $\att{U_j}\subseteq\clo{O_j}$) upgrades this to
$\proj w{U_j}=\proj{w'}{U_j}$: each tuple of $w$ has an $O_j$-agreeing partner in $w'$, which
by Lemma~\ref{a:lem:fd-det} agrees on $\clo{O_j}\supseteq\att{U_j}$, and conversely. As this
holds for every atom, Lemma~\ref{a:lem:cq-footprint} yields $Q(w)=Q(w')$.
\end{proof}
\inlean{\lean{atomwise\_certificate} in \lean{AtomCertificate.lean}; the per-overlap lifting
step is \lean{agreeOn\_lift}. The single-overlap certificate (Theorem~\ref{a:thm:cert}) is the
one-atom case. This declaration depends on no axioms.}

\begin{theorem}[Union-of-footprint coverage is insufficient]\label{a:thm:union-false}
It is \emph{not} the case that ``$\att Q\subseteq\clo{\bigcup_{O\in\ObsFam}\att O}$ implies
$Q$ identifiable.'' That is, covering the footprint by the closure of the \emph{union} of the
overlaps does not suffice.
\end{theorem}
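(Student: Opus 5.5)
We disprove the implication with one explicit counterexample: an interface, a query, and a pair of legal worlds. The counterexample is the running example with the resolver removed. Take attributes $\{a,b\}$ over the domain $\{0,1\}$, laws $\Sigma=\varnothing$, and two designated overlaps $O_1=\{a\}$ and $O_2=\{b\}$. Then
\[\ObsFam=\{\aug{O_1},\aug{O_2}\}=\{\{a\},\{b\}\}, \qquad \clo{\bigcup_{O\in\ObsFam}\att O}=\{a,b\}.\]
For the query, take the single-atom Boolean \cq{} $Q=R_{\{a,b\}}(0,0)$, i.e.\ ``some tuple has $a=0$ and $b=0$''. Its footprint is $\att Q=\{a,b\}$, so the hypothesis of the claimed implication holds.

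The witness pair is the standard join-ambiguity pair:
\[w=\{(0,0),(1,1)\}, \qquad w'=\{(0,1),(1,0)\}.\]
Since $\Sigma=\varnothing$, legality imposes no cross-world constraint, so the legal class $\W=\{w,w'\}$ forms a legality structure. We check three things.
\begin{enumerate}
\item $\proj w{\{a\}}=\{0,1\}=\proj{w'}{\{a\}}$.
\item $\proj w{\{b\}}=\{0,1\}=\proj{w'}{\{b\}}$, so together with the first item $w\sim w'$.
\item $Q(w)=\mathrm{true}$ but $Q(w')=\mathrm{false}$.
\end{enumerate}
So $(w,w')$ is a non-identifiability witness in the sense of Proposition~\ref{prop:dual-cert}, and $Q$ is not identifiable even though its footprint lies in the closure of the union. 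The answer-map formalism also requires footprint faithfulness: $Q$ depends only on $\proj w{\{a,b\}}$, which is the whole world here, so this holds trivially.

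The construction has no real obstacle. The only point needing care is matching the formal model, where observation is set equality of projections onto each \emph{separate} schema in $\ObsFam$. The proof should stress that agreement on each overlap separately does not give agreement on their union, because the projections forget how $a$-values and $b$-values pair up. This contrasts with the atom-wise certificate (Theorem~\ref{a:thm:atomwise}), which needs each atom's schema inside a single closure. In Lean, I would build the two worlds as explicit finite sets, discharge the projection equalities by \lean{decide} or extensionality over the two-element domain, and evaluate $Q$ directly.
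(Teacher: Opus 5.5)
Your proposal is correct and matches the paper's proof: the same two singleton overlaps with no laws and the same witness worlds $\{(0,0),(1,1)\}$ and $\{(0,1),(1,0)\}$. The only difference is the query. You use the constant pattern $R_{\{a,b\}}(0,0)$, while the paper uses the diagonal query ``some tuple has $\text{attr}_0=\text{attr}_1$'', and both separate the two worlds equally well.
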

\begin{proof}
Counterexample. Take two overlaps $\{0\}$ and $\{1\}$, no laws, and the two worlds
\[
  w_A=\{(0,0),(1,1)\},\qquad w_B=\{(0,1),(1,0)\}
\]
(pairs written as $(\text{attr}_0,\text{attr}_1)$). Their projections onto $\{0\}$ are both
$\{0,1\}$, and likewise onto $\{1\}$, so $w_A\sim w_B$. The query ``some tuple has
$\text{attr}_0=\text{attr}_1$'' is true on $w_A$ (e.g.\ $(0,0)$) and false on $w_B$, although
its footprint $\{0,1\}$ lies in the closure of the union $\{0\}\cup\{1\}$. Hence the union
condition does not imply identifiability.
\end{proof}
\inlean{\lean{union\_footprint\_coverage\_insufficient} in \lean{InterfaceVisible.lean}.
This refutes the flat union-of-footprint heuristic only; it is consistent with both the
atom-wise certificate (the offending atom $\{0,1\}$ lies in \emph{no} single overlap) and the
interface-visible fragment below.}

\subsection{The interface-visible fragment}\label{a:sec:iv}

A complementary, exact fragment is obtained by building queries directly from the observation.
For each observable schema $H\in\ObsFam$ introduce a predicate $R_H$ interpreted on $w$ as the
projection $\proj w H$; the \emph{interface-visible vocabulary} $\mathcal L_{\mathrm{iv}}$
consists of these base predicates together with derived predicates, each defined by a
conjunctive query over the $R_H$. A conjunctive query is \emph{interface-visible} when every
relation symbol it uses belongs to $\mathcal L_{\mathrm{iv}}$.

\begin{theorem}[Interface-visible fragment is identifiable]\label{a:thm:iv}
Every interface-visible conjunctive query is identifiable.
\end{theorem}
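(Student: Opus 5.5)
The plan is to reduce the statement to a single invariance property of the interpretation of $\mathcal L_{\mathrm{iv}}$: if $w\sim w'$, then every symbol of $\mathcal L_{\mathrm{iv}}$ has the same extension on $w$ and on $w'$. Once this holds, the theorem is immediate. A conjunctive query evaluated under set semantics depends only on the interpretations of the relation symbols it uses; this is the same reasoning as in Lemma~\ref{a:lem:cq-footprint}, now read at the level of the vocabulary rather than attribute projections. Hence an interface-visible $Q$ receives the same answer on $w$ and $w'$, which is exactly Definition~\ref{def:query-ident}.

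First I handle the base predicates. Let $w,w'$ be legal with $w\sim w'$. By definition of observational equivalence, $\proj wH=\proj{w'}H$ for every $H\in\ObsFam$. Since $R_H$ is interpreted as $\proj wH$, every base predicate has identical extension in the two worlds. Legality, the laws $\Sigma$ and Lemma~\ref{a:lem:fd-det} are not needed here: the closure is already built into the schemas $H=\aug O$.

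Next I handle the derived predicates, by induction on the definition depth of the derived symbols. This assumes definitions are well-founded, meaning each derived predicate is defined by a conjunctive query over previously available symbols, ultimately over the $R_H$. For a derived predicate $P$ defined by a CQ $\varphi_P$, its extension on $w$ is $\varphi_P$ evaluated on the structure interpreting the symbols $\varphi_P$ uses. By the induction hypothesis these structures coincide for $w$ and $w'$, so $P$ has equal extensions. A general CQ $Q$ over $\mathcal L_{\mathrm{iv}}$ is then one more instance of the same step, so $Q(w)=Q(w')$.

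The main obstacle is making the \emph{evaluation depends only on the interpretations of the symbols used} step precise when derived predicates are present. One could attempt to unfold each derived predicate into its defining CQ over the base symbols, which stays a CQ because CQs are closed under substitution with fresh existential variables, and then apply the base case once. I will instead use the induction on definitions, since it avoids variable-renaming bookkeeping and matches a semantic formalisation in which a query is given by its answer map together with an invariance property. In that formalisation the step is the statement that an answer invariant under agreement of the interpretations of its vocabulary is invariant under $\sim$, which follows from the base case and induction.
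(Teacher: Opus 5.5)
Your proposal is correct and follows the paper's proof. Observational equivalence fixes every base predicate $R_H$, and derived predicates are then fixed because they are computed by fixed queries over the base symbols. Set-semantics evaluation of $Q$ therefore gives $Q(w)=Q(w')$. Your induction on definition depth is a slightly more general version of the paper's one-level step, since the paper defines derived predicates directly over the $R_H$. Your remark that legality is never used matches the paper's own note on this result.
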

\begin{proof}
Let $w\sim w'$ be legal. For each base symbol $R_H$, observation equivalence gives
$\proj w H=\proj{w'} H$, so the two structures interpret $R_H$ identically. Each derived
symbol is computed by a fixed query over the base symbols, hence is also interpreted
identically. Thus the two structures agree on every symbol the query mentions, and
set-semantics evaluation gives $Q(w)=Q(w')$.
\end{proof}
\inlean{\lean{iv\_identifiable} (with the query model \lean{IVQuery}, whose \lean{visible}
field records that the answer is fixed by the observable projections) in
\lean{Determinacy.lean}. Unlike the closure certificate this needs no legality assumption ---
the symbols are fixed by the observation outright --- and it permits joins \emph{across}
overlaps, provided they are expressed over the observable $R_H$ relations rather than over raw
attributes spanning overlaps (the case Theorem~\ref{a:thm:union-false} rules out).}

\subsection{Identifiability is query determinacy}\label{a:sec:determinacy}

Identifiability coincides exactly with \emph{query determinacy} (in the sense of
Nash--Segoufin--Vianu) by the overlap-projection views. Define the $O$-projection view of a
world by $\mathrm{view}_O(w)=\{\,t : \exists s\in w,\ s=_O t\,\}$.

\begin{lemma}[Views vs.\ agreement]\label{a:lem:view-agree}
$\mathrm{view}_O(w)=\mathrm{view}_O(w')$ if and only if $\proj w O=\proj{w'} O$ (the two
worlds agree on $O$).
\end{lemma}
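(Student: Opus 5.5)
The plan is to unfold both sides to statements about tuples and check the two inclusions of sets; nothing beyond the definitions is needed. By definition, $\proj w O=\proj{w'}O$ means the two projections coincide as sets of $O$-restricted tuples: every $s\in w$ has some $t\in w'$ with $s=_O t$, and every $t\in w'$ has some $s\in w$ with $s=_O t$ (this is how agreement on a schema was used in the proof of Theorem~\ref{a:thm:cert}). The view $\mathrm{view}_O(w)$ is the set of \emph{all} total tuples agreeing on $O$ with some member of $w$, i.e.\ the preimage of $\proj w O$ under restriction to $O$. I would use only that $=_O$ is an equivalence relation (\lean{AgreeOn.refl/symm/trans}).

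($\Leftarrow$) Assume $\proj w O=\proj{w'}O$ and let $t\in\mathrm{view}_O(w)$, witnessed by $s\in w$ with $s=_O t$. Agreement gives $s'\in w'$ with $s=_O s'$. By symmetry and transitivity, $s'=_O t$, so $t\in\mathrm{view}_O(w')$. The reverse inclusion follows by swapping $w$ and $w'$, which gives equality of views.

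($\Rightarrow$) Assume $\mathrm{view}_O(w)=\mathrm{view}_O(w')$ and let $s\in w$. Reflexivity gives $s=_O s$, so $s\in\mathrm{view}_O(w)=\mathrm{view}_O(w')$. Hence there is $t\in w'$ with $t=_O s$, which is the required partner for $s$. The other direction is symmetric, so the $O$-projections coincide.

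The only point needing care, and the likely obstacle in the formal version, is choosing the orientation of $=_O$ in the definition of $\mathrm{view}_O$ and in \lean{World.AgreeOn} consistently, so that the symmetry step lines up. The lemma imposes no legality hypothesis and uses no closure, so it holds for arbitrary worlds.
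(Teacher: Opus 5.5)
Your proof is correct and follows the paper's argument step for step. For $(\Leftarrow)$ you transport a view witness through an $O$-agreeing partner using symmetry and transitivity of $=_O$; for $(\Rightarrow)$ you use reflexivity to place each $s\in w$ in $\mathrm{view}_O(w)$ and read off its partner from the equality of views, with both directions closed by symmetry.
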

\begin{proof}
($\Leftarrow$) If the worlds agree on $O$ and $t\in\mathrm{view}_O(w)$ via $s\in w$ with
$s=_O t$, then $s$ has an $O$-agreeing partner $u\in w'$, and $u=_O t$ by transitivity, so
$t\in\mathrm{view}_O(w')$; symmetrically. ($\Rightarrow$) For $s\in w$ we have
$s\in\mathrm{view}_O(w)$ (reflexivity), hence $s\in\mathrm{view}_O(w')$, giving a partner
$u\in w'$ with $u=_O s$; symmetrically. Thus the worlds agree on $O$.
\end{proof}
\inlean{\lean{projView} and \lean{projView\_eq\_iff\_agreeOn} in \lean{Determinacy.lean}.}

\begin{theorem}[Determinacy characterisation]\label{a:thm:det}
A query is identifiable under the interface if and only if it is determined, in the
determinacy sense, by the overlap-projection views: every two legal worlds with equal views
agree on the answer.
\end{theorem}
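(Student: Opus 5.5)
The plan is to unfold both definitions and show that the two hypotheses on a pair of legal worlds coincide. By the appendix definition, $w\sim w'$ means that $\proj w O=\proj{w'} O$ for every $O\in\ObsFam$. Determinacy by the overlap-projection views means that $\mathrm{view}_O(w)=\mathrm{view}_O(w')$ for every $O\in\ObsFam$. If these two premises are equivalent for every pair of legal worlds, then the two universally quantified implications (each concluding $Q(w)=Q(w')$) are equivalent. That gives the theorem.

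The key step is Lemma~\ref{a:lem:view-agree}, applied pointwise to each $O\in\ObsFam$. It states that $\mathrm{view}_O(w)=\mathrm{view}_O(w')$ holds iff $\proj w O=\proj{w'} O$. Taking the conjunction over all $O\in\ObsFam$, equality of the whole view family is equivalent to $w\sim w'$.

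Both directions then follow directly. ($\Rightarrow$) Assume $Q$ is identifiable, and let $w,w'$ be legal worlds with equal views. The lemma turns equal views into $w\sim w'$, and identifiability gives $Q(w)=Q(w')$. ($\Leftarrow$) Assume $Q$ is determined, and let $w\sim w'$ be legal. The lemma turns $w\sim w'$ into equal views, and determinacy gives $Q(w)=Q(w')$.

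I expect no real obstacle; the substance lies in Lemma~\ref{a:lem:view-agree}, which is already established. The one point needing care is to use the same view family on both sides. The views must range over exactly the closure-augmented overlaps $\aug O$ that make up $\ObsFam$, not over the raw overlaps $O$. Otherwise the determinacy notion would be weaker, because closure-derived attributes would not be observed. With that choice fixed, the argument is just a rewrite under the quantifiers, and it needs no legality or closure reasoning.
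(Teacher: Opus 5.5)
Your proof is correct and is the paper's argument: Lemma~\ref{a:lem:view-agree}, applied to each observable schema, identifies equality of all overlap-projection views with $w\sim w'$, after which identifiability and determinacy are the same quantified statement. One small correction to your closing caveat: on legal worlds, using the raw overlaps $O$ instead of $\AugSchema{O}$ would not weaken the notion, because Lemma~\ref{a:lem:fd-det} lifts agreement on $O$ to agreement on $\AttrClosure{O}$ (the lifting step in the proof of Theorem~\ref{a:thm:cert}), so both view families induce the same equivalence on legal worlds.
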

\begin{proof}
By Lemma~\ref{a:lem:view-agree}, equality of all overlap-projection views is the same relation
as observation equivalence. Identifiability is precisely ``observationally equivalent legal
worlds agree on the answer,'' which is exactly determinacy by those views.
\end{proof}
\inlean{\lean{DeterminedBy} and \lean{identifiable\_iff\_determined} in
\lean{Determinacy.lean}.}

\subsection{The minimax error floor}\label{a:sec:minimax}

Identifiability is also exactly the line below which no observation-based predictor can do
better than chance.

\begin{theorem}[Minimax floor]\label{a:thm:minimax}
Let $Q$ be \emph{non}-identifiable, witnessed by legal $w\sim w'$ with $Q(w)\neq Q(w')$, and
let $f$ be any classifier that depends only on the observation (so $f(w)=f(w')$ whenever
$w\sim w'$). Then $f$ misclassifies at least one of $w,w'$: there is a legal world on which
$f$'s prediction is wrong.
\end{theorem}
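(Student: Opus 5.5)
The argument is a direct pigeonhole on the two witness worlds, mirroring the main-text proof of Theorem~\ref{thm:nonident-minimax} but stated deterministically, as the appendix model requires. First I fix the witness data given in the hypothesis: legal worlds $w,w'\in\W$ with $w\sim w'$ and $Q(w)\neq Q(w')$. The existence of such a pair is exactly the negation of identifiability as defined in \S\ref{a:sec:model}, so no construction is needed. Next I invoke the single assumption on $f$, namely that it factors through the observation, to conclude $f(w)=f(w')$.

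The core step is a case split on the common Boolean value $b=f(w)=f(w')$. Since $Q$ is Boolean and $Q(w)\neq Q(w')$, exactly one of $Q(w),Q(w')$ equals $b$. If $Q(w)\neq b$, then $f$ errs on $w$. Otherwise $Q(w)=b$, so $Q(w')\neq b=f(w')$ and $f$ errs on $w'$. In either case we obtain a legal world $u\in\{w,w'\}$ with $f(u)\neq Q(u)$, which is the existential conclusion of the statement. A slicker route avoids the case split entirely: assume $f$ is correct on both worlds, and then $Q(w)=f(w)=f(w')=Q(w')$ contradicts the witness. I would present it this way, since it carries over verbatim to non-Boolean answer types.

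There is no genuine obstacle. The only point requiring care is the formal reading of ``depends only on the observation'': it should be taken as the stated invariance $w\sim w'\Rightarrow f(w)=f(w')$, and not as a factorization through a particular encoding of $\Obs$. Once that is fixed, the $\tfrac12$ bound of Theorem~\ref{thm:nonident-minimax} follows as a corollary. Under the uniform distribution on $\{w,w'\}$, or any distribution on that pair, the supremum of the error probability is at least the error on the misclassified world, which is $1\ge\tfrac12$. Averaging over the two worlds gives at least $\tfrac12$.
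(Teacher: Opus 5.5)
Your proof is correct and is essentially the paper's argument: observation-invariance forces $f(w)=f(w')$, and since $Q(w)\neq Q(w')$ a single shared prediction cannot match both answers. Your contradiction form ($Q(w)=f(w)=f(w')=Q(w')$) is that same step, and your averaging remark matches the paper's closing observation that this yields an error rate of at least $\tfrac12$ over the witness pair.
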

\begin{proof}
Because $f$ is observation-based and $w\sim w'$, we have $f(w)=f(w')$. But $Q(w)\neq Q(w')$,
so $f$'s single shared prediction cannot match both answers; it is wrong on at least one of
the two legal worlds. Averaged over the witness pair this is an error rate of at least
$\tfrac12$.
\end{proof}
\inlean{\lean{minimax\_error\_floor} in \lean{Minimax.lean} (with
\lean{not\_perfect\_balanced\_accuracy}); ``observation-based'' is \lean{ObsDetermined}.}

\subsection{Robustness: a zero-discrepancy threshold}\label{a:sec:robust}

The certificates are exact: zero overlap loss forces exact agreement on certified queries.
We show a quantitative refinement --- a \emph{small} loss already forces exact agreement ---
when prediction quality is measured by Jensen--Shannon divergence. All divergences use natural
logarithms ($\kappa=1$).

For finite mass functions $p,q$, the Kullback--Leibler and Jensen--Shannon divergences are
\[
  \KL(p\,\|\,q)=\sum_a p(a)\log\frac{p(a)}{q(a)},
  \qquad
  \JS(p\,\|\,q)=\tfrac12\KL(p\,\|\,m)+\tfrac12\KL(q\,\|\,m),\quad m=\tfrac{p+q}{2}.
\]
Write $\delta_x$ for the point mass at $x$.

\begin{lemma}[Gibbs' inequality]\label{a:lem:gibbs}
For mass functions $p,q$ with $q(a)>0$ whenever $p(a)\neq 0$, $\KL(p\,\|\,q)\ge 0$.
\end{lemma}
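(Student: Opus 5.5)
The plan is the standard Jensen-type argument via the elementary inequality $\log x\le x-1$ for $x>0$; no earlier result is needed beyond the definition of $\KL$ with natural logarithms. Let $S=\{a: p(a)\neq 0\}$ be the support of $p$. Since $\KL(p\,\|\,q)$ is defined as a finite sum with the convention that terms with $p(a)=0$ contribute $0$, I first rewrite the divergence as a sum over $S$ only. By hypothesis $q(a)>0$ on $S$, so every summand is a well-defined real number.

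Next, for each $a\in S$ I apply $\log x\le x-1$ with $x=q(a)/p(a)>0$. This gives
\[
-\KL(p\,\|\,q)=\sum_{a\in S}p(a)\log\frac{q(a)}{p(a)}\;\le\;\sum_{a\in S}p(a)\Bigl(\frac{q(a)}{p(a)}-1\Bigr)=\sum_{a\in S}q(a)-\sum_{a\in S}p(a).
\]
Since $p$ is a mass function, $\sum_{a\in S}p(a)=1$. Since $q$ is nonnegative with total mass $1$, we also have $\sum_{a\in S}q(a)\le 1$. Hence the right-hand side is at most $0$, and so $\KL(p\,\|\,q)\ge 0$.

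The only delicate point is bookkeeping rather than analysis: the restriction to the support and the convention $0\log(0/q)=0$. In a Lean-style formalisation this is where the effort lies. I would handle it by splitting each summand into cases according to whether $p(a)=0$, and proving the pointwise bound $p(a)\log(p(a)/q(a))\ge p(a)-q(a)$ in both cases. When $p(a)=0$ the bound reads $0\ge -q(a)$, which holds because $q\ge 0$. When $p(a)>0$ it is the inequality above. Summing the pointwise bound over all $a$ gives $\KL(p\,\|\,q)\ge\sum_a p(a)-\sum_a q(a)=0$. This version needs no explicit support set.

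The elementary inequality $\log x\le x-1$ I take as a library fact; it follows from concavity of $\log$, or equivalently from $e^{y}\ge 1+y$.
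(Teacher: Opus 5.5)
Your proposal is correct and matches the paper's proof essentially verbatim: both use $\log z\le z-1$ to get the pointwise bound $p(a)\log\frac{q(a)}{p(a)}\le q(a)-p(a)$ where $p(a)>0$, note that it holds trivially when $p(a)=0$, and sum to $1-1=0$. Your first, support-restricted version, which uses $\sum_{a\in S}q(a)\le 1$ instead of an exact cancellation, is an equally valid way of doing the same bookkeeping.
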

\begin{proof}
Using $\log z\le z-1$ for $z>0$, for each $a$ with $p(a)>0$,
$p(a)\log\frac{q(a)}{p(a)}\le p(a)\big(\frac{q(a)}{p(a)}-1\big)=q(a)-p(a)$; the inequality also
holds trivially when $p(a)=0$ (left side $0\le q(a)$). Summing,
$\sum_a p(a)\log\frac{q(a)}{p(a)}\le\sum_a(q(a)-p(a))=1-1=0$, and $\KL(p\,\|\,q)$ is the
negation of the left-hand side, hence $\ge 0$.
\end{proof}
\inlean{\lean{kl\_nonneg} in \lean{Information.lean}.}

\begin{lemma}[Point-mass lower bound]\label{a:lem:js-point}
For a mass function $p$ and any $x$, $\ \JS(\delta_x\,\|\,p)\ \ge\ \tfrac12\log\!\frac{2}{1+p(x)}$.
\end{lemma}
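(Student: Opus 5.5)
Write $m=\tfrac12(\delta_x+p)$. Then $m(x)=\tfrac{1+p(x)}{2}$ and, for $a\neq x$, $m(a)=\tfrac{p(a)}{2}$. I would expand the two halves of $\JS(\delta_x\,\|\,p)=\tfrac12\KL(\delta_x\,\|\,m)+\tfrac12\KL(p\,\|\,m)$ separately, keep the first exactly, and show the second is nonnegative.

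\emph{First half.} Since $\delta_x$ has mass only at $x$, the sum defining $\KL(\delta_x\,\|\,m)$ has a single nonzero term:
\[
\KL(\delta_x\,\|\,m)=1\cdot\log\frac{1}{m(x)}=\log\frac{2}{1+p(x)}.
\]
This already gives the claimed quantity $\tfrac12\log\tfrac{2}{1+p(x)}$ once multiplied by $\tfrac12$. It needs $m(x)>0$, which holds because $m(x)\ge\tfrac12$.

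\emph{Second half.} I would apply Gibbs' inequality (Lemma~\ref{a:lem:gibbs}) to the pair $(p,m)$. The support hypothesis holds: whenever $p(a)\neq0$, we have $m(a)\ge p(a)/2>0$. So $\KL(p\,\|\,m)\ge0$, and hence $\tfrac12\KL(p\,\|\,m)\ge0$.

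Adding the two halves yields the bound. The main obstacle is bookkeeping rather than analysis. In the finite-sum convention, terms with $p(a)=0$ contribute $0$, and I must make sure the term $a=x$ in $\KL(p\,\|\,m)$ is handled uniformly whether or not $p(x)=0$. Gibbs already covers both cases. I would also confirm that the paper's $\KL$ uses the $0\log0=0$ convention, so that the single-term collapse of $\KL(\delta_x\,\|\,m)$ is legitimate.
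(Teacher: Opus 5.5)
Your proposal is correct and follows the paper's proof step for step. You set $m=\tfrac12(\delta_x+p)$, compute $\KL(\delta_x\,\|\,m)=\log\frac{2}{1+p(x)}$, and drop the nonnegative term $\tfrac12\KL(p\,\|\,m)$ via Gibbs' inequality (Lemma~\ref{a:lem:gibbs}); your explicit check of the support hypothesis $m(a)\ge p(a)/2>0$ fills in a detail the paper leaves implicit.
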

\begin{proof}
With $m=\tfrac12(\delta_x+p)$ we have $m(x)=\tfrac{1+p(x)}2$, and
$\KL(\delta_x\,\|\,m)=\log\frac1{m(x)}=\log\frac{2}{1+p(x)}$. By Lemma~\ref{a:lem:gibbs},
$\KL(p\,\|\,m)\ge0$, so
$\JS(\delta_x\,\|\,p)\ge\tfrac12\KL(\delta_x\,\|\,m)=\tfrac12\log\frac2{1+p(x)}$.
\end{proof}
\inlean{\lean{kl\_dirac\_mix} and \lean{jsdiv\_dirac\_lower} in \lean{Information.lean}.}

\begin{lemma}[Mass concentration]\label{a:lem:mass}
If $\JS(\delta_x\,\|\,p)\le\gamma$ with $\gamma<\tfrac18$, then $p(x)>\tfrac12$.
\end{lemma}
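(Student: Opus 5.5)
The plan is to combine the point-mass lower bound of Lemma~\ref{a:lem:js-point} with the hypothesis $\JS(\delta_x\,\|\,p)\le\gamma<\tfrac18$, and then solve the resulting inequality for $p(x)$. Chaining the two bounds gives
\[
\tfrac12\log\frac{2}{1+p(x)}\;\le\;\JS(\delta_x\,\|\,p)\;\le\;\gamma\;<\;\tfrac18,
\]
so $\log\frac{2}{1+p(x)}<\tfrac14$. Exponentiating (natural logarithms, $\kappa=1$) yields $\frac{2}{1+p(x)}<e^{1/4}$, and rearranging gives
\[
p(x)\;>\;2e^{-1/4}-1 .
\]

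It remains to show that $2e^{-1/4}-1\ge\tfrac12$, i.e.\ $e^{-1/4}\ge\tfrac34$, or equivalently $e^{1/4}\le\tfrac43$. Two routes are available. First, $\log(4/3)\ge 1-\tfrac34=\tfrac14$, using the inequality $\log z\ge 1-1/z$, which is the same $\log z\le z-1$ fact from Gibbs' inequality applied to $1/z$. Second, $e^{1/4}\le 1/(1-\tfrac14)=\tfrac43$, which follows from $e^{-y}\ge 1-y$. Either route closes the argument. Numerically the margin is comfortable, since $2e^{-1/4}-1\approx0.5576>\tfrac12$. Equality is never attained, because the chain above contains a strict inequality.

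The only step needing care is this final numeric comparison, together with the monotonicity manipulations of $\log$ and $\exp$ that precede it. In a formal proof I would avoid approximating $e^{1/4}$ altogether. Instead I would argue by contrapositive: assume $p(x)\le\tfrac12$. Then $\frac{2}{1+p(x)}\ge\tfrac43$, so $\JS\ge\tfrac12\log\tfrac43\ge\tfrac12\cdot\tfrac14=\tfrac18$, which contradicts $\gamma<\tfrac18$. This version uses only the monotonicity of $\log$ and the bound $\log(4/3)\ge\tfrac14$ established above.
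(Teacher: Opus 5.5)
Your proof is correct and follows the paper's route: chain the point-mass bound of Lemma~\ref{a:lem:js-point} with the hypothesis, exponentiate, and close with the elementary inequality $e^{-t}\ge 1-t$. The only difference is that the paper keeps $\gamma$ symbolic, applying $e^{-2\gamma}\ge 1-2\gamma$ to obtain the quantitative bound $p(x)\ge 1-4\gamma$ before invoking $\gamma<\tfrac18$, whereas you specialise to the threshold first and then check $e^{-1/4}\ge\tfrac34$ with the same inequality.
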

\begin{proof}
By Lemma~\ref{a:lem:js-point}, $\tfrac12\log\frac2{1+p(x)}\le\gamma$, so
$\frac2{1+p(x)}\le e^{2\gamma}$ and $1+p(x)\ge 2e^{-2\gamma}$. Using $e^{-t}\ge 1-t$,
$1+p(x)\ge 2(1-2\gamma)$, i.e.\ $p(x)\ge 1-4\gamma$. As $\gamma<\tfrac18$, $p(x)>\tfrac12$.
(Avoiding the usual $\sqrt{\cdot}$ Pinsker step keeps the bound elementary.)
\end{proof}
\inlean{\lean{px\_gt\_half} in \lean{Information.lean}.}

\begin{lemma}[Unique majority / unique mode]\label{a:lem:mode}
A finite mass function has at most one outcome of mass $>\tfrac12$. Consequently, if
$\JS(\delta_x\,\|\,p)\le\gamma$ and $\JS(\delta_{x'}\,\|\,p)\le\gamma$ with $\gamma<\tfrac18$,
then $x=x'$.
\end{lemma}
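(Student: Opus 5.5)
The plan is to prove the two assertions of Lemma~\ref{a:lem:mode} in order. The first is a counting argument on total mass. The second then follows by combining it with Lemma~\ref{a:lem:mass}.

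For the first assertion, suppose $p$ is a finite mass function with two distinct outcomes $a\neq b$ such that $p(a)>\tfrac12$ and $p(b)>\tfrac12$. Since $p$ is nonnegative and sums to $1$, split the total mass as
\[
1=\sum_c p(c)\;\ge\; p(a)+p(b)\;>\;\tfrac12+\tfrac12=1,
\]
which is a contradiction. The only step that needs care is the inequality $\sum_c p(c)\ge p(a)+p(b)$. It holds because $\{a,b\}$ is a two-element sub-multiset of the finite support, and every term omitted from the sum is nonnegative. In the formalization this is the step I expect to be the main, if mild, obstacle: extracting the two-element subsum from a \lean{Finset} sum when $a\neq b$. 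I would use the lemma that a sum over a subset is bounded by the sum over the whole set, for nonnegative terms, applied to $\{a,b\}$. Then I would rewrite the sum over the pair as $p(a)+p(b)$, which uses $a\neq b$.

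For the consequence, apply Lemma~\ref{a:lem:mass} twice, once to $x$ and once to $x'$, each time with the hypothesis $\gamma<\tfrac18$. This gives $p(x)>\tfrac12$ and $p(x')>\tfrac12$. Now argue by contradiction: if $x\neq x'$, then $x$ and $x'$ are two distinct outcomes of mass exceeding $\tfrac12$, which the first assertion rules out. Hence $x=x'$.

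This is exactly the content used in the main-text Lemma~\ref{lem:js-mode}, with $\kappa=1$. There the elementary bound $p(x)\ge 1-4\gamma$ from Lemma~\ref{a:lem:mass} replaces the Pinsker-type square-root bound, so no further analytic estimates are needed.
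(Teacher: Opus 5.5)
Your proposal is correct and follows the paper's proof: two distinct outcomes of mass exceeding $\tfrac12$ would give total mass exceeding $1$, and the consequence applies Lemma~\ref{a:lem:mass} to $x$ and to $x'$ and then uses that uniqueness. Your extra care in extracting the two-element subsum, using $x\neq x'$ and nonnegativity, just spells out the step the paper states in one line.
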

\begin{proof}
If $x\neq x'$ both had mass $>\tfrac12$, then $p(x)+p(x')>1$, contradicting $\sum_a p(a)=1$.
The second statement applies Lemma~\ref{a:lem:mass} to $x$ and $x'$.
\end{proof}
\inlean{\lean{unique\_majority} and \lean{js\_mode} in \lean{Information.lean}.}

\begin{theorem}[Robust threshold]\label{a:thm:robust}
Suppose the overlap projection $\mathrm{proj}(w)$ of a world (its value on a closure-augmented
overlap $\aug O$) covers the query's footprint, in the sense that
$\mathrm{proj}(w)=\mathrm{proj}(w')\Rightarrow Q(w)=Q(w')$, and that the overlap loss is
anchored to a fixed reference $p_O$: $\ \eta\cdot\JS(\delta_{\mathrm{proj}(w)}\,\|\,p_O)\le
\ell(w)$ with $\eta>0$. Let $\varepsilon_0=\eta/8$. Then for every $\varepsilon<\varepsilon_0$,
$Q$ is $(\varepsilon,0)$-identifiable: any two worlds with loss $\le\varepsilon$ give the same
answer.
\end{theorem}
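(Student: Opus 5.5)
The plan is to reduce the statement to the unique-mode lemma (Lemma~\ref{a:lem:mode}) applied to the anchored overlap terms. Then the coverage hypothesis transfers agreement on the overlap projection to agreement on the answer. This mirrors the main-text proof of Theorem~\ref{thm:robust-threshold}, now with $\kappa=1$ and the explicit constant $\tfrac18$ obtained from Lemma~\ref{a:lem:mass}.

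First, fix $\varepsilon<\varepsilon_0=\eta/8$ and take two worlds $w,w'$ with $\ell(w)\le\varepsilon$ and $\ell(w')\le\varepsilon$. Divide the anchoring inequality by $\eta>0$:
\[
\JS(\delta_{\mathrm{proj}(w)}\,\|\,p_O)\le \ell(w)/\eta\le\varepsilon/\eta=:\gamma,
\]
and likewise for $w'$. Since $\varepsilon<\eta/8$, we have $\gamma<\tfrac18$. Both point masses $\delta_{\mathrm{proj}(w)}$ and $\delta_{\mathrm{proj}(w')}$ are therefore within JS-distance $\gamma$ of the same fixed reference $p_O$.

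Second, apply Lemma~\ref{a:lem:mode} with $x=\mathrm{proj}(w)$, $x'=\mathrm{proj}(w')$, $p=p_O$ and this $\gamma$. It yields $\mathrm{proj}(w)=\mathrm{proj}(w')$. Internally, Lemma~\ref{a:lem:mass} gives $p_O(x)>\tfrac12$ and $p_O(x')>\tfrac12$, and at most one outcome can carry mass above $\tfrac12$.

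Third, the coverage hypothesis $\mathrm{proj}(w)=\mathrm{proj}(w')\Rightarrow Q(w)=Q(w')$ gives $Q(w)=Q(w')$, which is $(\varepsilon,0)$-identifiability. In the intended instantiation, $\mathrm{proj}(w)=\proj w{\aug O}$ with $\att Q\subseteq\aug O$. There the hypothesis follows from footprint faithfulness, because agreement on $\aug O$ implies agreement on $\att Q$ by antitonicity of agreement.

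The only delicate point is bookkeeping: $p_O$ must be a single distribution on a finite outcome space that contains both projections, so that Lemma~\ref{a:lem:mode} applies. The division by $\eta$ must keep the strict inequality $\gamma<\tfrac18$, and it does, since $\varepsilon<\eta/8$ is strict. No legality assumption is needed; the argument is purely information-theoretic plus the coverage hypothesis.
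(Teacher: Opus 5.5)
Your proposal is correct and matches the paper's proof: divide the anchoring inequality by $\eta$ to get $\JS(\delta_{\mathrm{proj}(w)}\,\|\,p_O)\le\varepsilon/\eta<\tfrac18$ for both worlds, apply Lemma~\ref{a:lem:mode} to obtain $\mathrm{proj}(w)=\mathrm{proj}(w')$, and conclude with the coverage hypothesis. Your added remarks on finiteness of the outcome space and the intended instantiation $\att Q\subseteq\aug O$ are consistent with the paper's framing and do not change the argument.
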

\begin{proof}
If $\ell(w),\ell(w')\le\varepsilon<\eta/8$, anchoring gives
$\JS(\delta_{\mathrm{proj}(w)}\,\|\,p_O)\le\varepsilon/\eta<\tfrac18$ and likewise for $w'$. By
Lemma~\ref{a:lem:mode}, $\mathrm{proj}(w)=\mathrm{proj}(w')$, and footprint coverage yields
$Q(w)=Q(w')$.
\end{proof}
\inlean{\lean{robust\_threshold} in \lean{Information.lean}. The footprint-coverage hypothesis
is the abstract counterpart of ``$\att Q\subseteq\aug O$'' (Theorem~\ref{a:thm:cert}); the
anchoring hypothesis models the overlap-anchored loss term.}

\subsection{Information-theoretic lower bounds}\label{a:sec:info}

\paragraph{Capacity and outcome bounds.}
Let $m_Q$ be the \emph{outcome multiplicity} of $Q$ --- the number of distinct answers it
realises over the legal worlds.

\begin{theorem}[Outcome lower bound]\label{a:thm:outcome}
Any predictor that reads the observation, stores it in a representation with at most $2^k$
states, and answers $Q$ correctly on every legal world satisfies $m_Q\le 2^k$ (equivalently
$k\ge\log_2 m_Q$).
\end{theorem}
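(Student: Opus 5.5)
The plan is a pigeonhole argument on the composite map ``world $\mapsto$ observation $\mapsto$ representation $\mapsto$ answer''. Model the predictor as an encoder $\phi$ from observations into a state set $S$ with $|S|\le 2^k$, followed by a decoder $\psi:S\to Y$. Its prediction on a legal world $w$ is then $\psi(\phi(\Obs(w)))$. Correctness on every legal world says $\psi(\phi(\Obs(w)))=Q(w)$ for all $w\in\W$.

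The key step is to show that $\psi$ maps $S$ onto the answer set $Y_Q=\{Q(w)\mid w\in\W\}$. Take any answer $y\in Y_Q$ and choose a legal $w$ with $Q(w)=y$. Then $y=\psi(\phi(\Obs(w)))$, so $y$ lies in the image of $\psi$. Hence $Y_Q\subseteq\psi(S)$, and since the image of a function has at most as many elements as its domain, $m_Q=|Y_Q|\le|\psi(S)|\le|S|\le 2^k$. Taking $\log_2$ gives $k\ge\log_2 m_Q$.

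This mirrors the main-text proof of Theorem~\ref{thm:rate-lb}, but it is cleaner: it never uses the quotient $\W/{\sim}$. Identifiability is not needed here, because correctness on every legal world already forces $Q$ to factor through the observation. It is also consistent with Theorem~\ref{a:thm:minimax}, which shows that no such predictor exists when $Q$ is not identifiable, so in that case the statement holds vacuously.

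The only delicate point is finiteness and how ``representation of size at most $2^k$'' is formalized. I would take the state space to be a \tx{Fintype} (or a \tx{Finset} image) of cardinality at most $2^k$, and define $m_Q$ as the cardinality of the image of $Q$ on $\W$. The inequality then follows from the standard lemma that the cardinality of an image is at most the cardinality of the domain, applied to $\psi$, together with the inclusion $Y_Q\subseteq\psi(S)$. I expect this bookkeeping of cardinalities of images in the formal setting, rather than any mathematical difficulty, to be the main obstacle.
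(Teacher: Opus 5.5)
Your proof is correct and is essentially the paper's own argument: the appendix also factors the predictor as a representation map into a set $R$ with $|R|\le 2^k$ followed by a decoder, and notes that every realised answer $Q(w)$ lies in the decoder's range, which has size at most $|R|\le 2^k$. As you say, this argument uses neither identifiability nor the quotient $\LegalWorlds/{\sim}$ that appears in the main-text proof of Theorem~\ref{thm:rate-lb}.
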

\begin{proof}
Write the predictor as a representation map $\mathrm{rep}$ into a set $R$ with $|R|\le 2^k$,
followed by a decoder $\mathrm{ans}:R\to\text{answers}$, correct in the sense
$\mathrm{ans}(\mathrm{rep}(w))=Q(w)$ for every legal $w$. Then the realised answers are
$\{Q(w)\}=\{\mathrm{ans}(\mathrm{rep}(w))\}\subseteq\mathrm{range}(\mathrm{ans})$, a set of size
$\le|R|\le 2^k$. Hence $m_Q\le 2^k$.
\end{proof}
\inlean{\lean{outcome\_lower\_bound} in \lean{OutcomeBound.lean}.}

\begin{theorem}[Capacity error bound]\label{a:thm:capacity}
Index the $m_Q$ distinct outcomes by a set on which the true answer is injective. Any
$\le 2^k$-state predictor errs on at least $m_Q-2^k$ of them; under a uniform prior its error
satisfies $P_e\ge 1-2^k/m_Q$.
\end{theorem}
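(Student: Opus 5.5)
The plan is a counting argument that reuses the representation–decoder factorisation from the proof of Theorem~\ref{a:thm:outcome}.

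First, fix the setup. Let $Y=\{y_1,\dots,y_{m_Q}\}$ be the distinct outcomes of $Q$ over legal worlds. For each $i$ choose a legal world $w_i$ with $Q(w_i)=y_i$; this is the index set on which the true answer is injective. Write the predictor as $\mathrm{rep}:\W\to R$ with $|R|\le 2^k$, followed by $\mathrm{ans}:R\to\text{answers}$, so its prediction on $w_i$ is $\mathrm{ans}(\mathrm{rep}(w_i))$.

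Second, count the correct indices. Call index $i$ correct when $\mathrm{ans}(\mathrm{rep}(w_i))=y_i$. Then the map $i\mapsto\mathrm{rep}(w_i)$, restricted to correct indices, is injective: if two correct indices $i\neq j$ had the same representation state, the shared prediction would equal both $y_i$ and $y_j$, contradicting $y_i\neq y_j$. This is the same pigeonhole step as in Theorem~\ref{a:thm:outcome}, applied to a subset rather than to all worlds. Hence the number of correct indices is at most $|R|\le 2^k$, and the number of erroneous indices is at least $m_Q-2^k$ (the bound is vacuous when this is negative).

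Third, pass to the error probability. Under the uniform prior on the $m_Q$ indices, $P_e=\#\{\text{errors}\}/m_Q\ge (m_Q-2^k)/m_Q=1-2^k/m_Q$.

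The only real subtlety, and the step I would state carefully, is making precise what ``$\le 2^k$-state predictor'' means. The argument needs the prediction to factor through $R$, as in Theorem~\ref{a:thm:outcome}, so I would state that factorisation explicitly. I would also note that the bound holds whether $\mathrm{rep}$ reads the observation or the world itself, since only the factorisation is used. The remaining details are routine natural-number subtraction, in Lean truncated subtraction, together with the division bound.
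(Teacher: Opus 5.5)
Your proof is correct and follows the paper's argument essentially verbatim: the representation map is injective on the correctly answered outcomes, so at most $|R|\le 2^k$ of them are correct, at least $m_Q-2^k$ are errors, and dividing by $m_Q$ gives $P_e\ge 1-2^k/m_Q$ under the uniform prior. Stating the factorisation through $R$ explicitly, as you propose, is the same setup the paper carries over from Theorem~\ref{a:thm:outcome}.
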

\begin{proof}
On the set of correctly-answered outcomes the representation map is injective (two correctly
answered, distinct-outcome worlds with the same representation would force equal answers, hence
equal outcomes), so at most $|R|\le 2^k$ are correct; the remaining $\ge m_Q-2^k$ are errors.
Dividing by $m_Q$ gives $P_e\ge 1-2^k/m_Q$.
\end{proof}
\inlean{\lean{capacity\_error\_bound} and \lean{capacity\_error\_rate} in \lean{OutcomeBound.lean}.}

\paragraph{Fano's inequality.}
Write $\nml(x)=-x\log x$ and entropy $H(p)=\sum_a\nml(p(a))$ (nats); the binary entropy is
$\binent(p)=\nml(p)+\nml(1-p)$.

\begin{lemma}[Basic entropy facts]\label{a:lem:entropy-basic}
For a mass function on an $M$-element set: $H\ge 0$; $H(\text{uniform})=\log M$; and
$H(p)\le\log M$ (maximum entropy).
\end{lemma}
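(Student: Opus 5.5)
Nonnegativity, the uniform value, and the maximum-entropy bound will be proved in that order, using only elementary facts about $\nml$ and Gibbs' inequality (Lemma~\ref{a:lem:gibbs}).

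\emph{Nonnegativity.} For $x\in[0,1]$ we have $\log x\le 0$ when $x>0$, and $\nml(0)=0$ by convention. Hence $\nml(x)=-x\log x\ge 0$ termwise. Every mass value lies in $[0,1]$, so summing gives $H(p)\ge 0$.

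\emph{Uniform value.} Let $u(a)=1/M$ for every $a$. Then each term is $\nml(1/M)=\tfrac1M\log M$, and summing $M$ equal terms gives $H(u)=\log M$. This requires $M\ge 1$; the empty case is vacuous because no mass function exists.

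\emph{Maximum entropy.} Here I invoke Lemma~\ref{a:lem:gibbs} with $q=u$. Its support hypothesis holds automatically, since $u(a)=1/M>0$ everywhere. Expanding the divergence,
\[
\KL(p\,\|\,u)=\sum_a p(a)\log p(a)+\sum_a p(a)\log M=-H(p)+\log M,
\]
where terms with $p(a)=0$ contribute $0$ on both sides. Nonnegativity of $\KL$ then gives $H(p)\le\log M$.

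The only delicate step is this rearrangement. The sum must be split over the support of $p$, and the convention $0\log 0=0$ must be matched with the convention used inside $\KL$. I would handle it by proving the per-term identity $p(a)\log\frac{p(a)}{1/M}=-\nml(p(a))+p(a)\log M$ separately for the cases $p(a)=0$ and $p(a)>0$. After that I sum and use $\sum_a p(a)=1$. I expect this bookkeeping, not any conceptual point, to be the main obstacle, especially in the Lean formalization.
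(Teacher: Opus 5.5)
Your proposal is correct and follows the paper's proof: termwise nonnegativity of $x\mapsto -x\log x$ on $[0,1]$, the direct computation $-\tfrac1M\log\tfrac1M=\tfrac1M\log M$ summed over $M$ terms, and Gibbs' inequality applied to $\KL(p\,\|\,\text{uniform})=\log M-H(p)$. Your extra handling of the $0\log 0$ convention, the support hypothesis, and the case $M\ge 1$ is bookkeeping that the paper leaves implicit.
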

\begin{proof}
$\nml\ge0$ on $[0,1]$, so $H\ge0$. For the uniform law each term is $\nml(1/M)=\frac1M\log M$,
summing to $\log M$. Maximum entropy is $\KL(p\,\|\,\text{uniform})\ge0$
(Lemma~\ref{a:lem:gibbs}): expanding, $\KL(p\,\|\,\text{uniform})=\log M-H(p)$.
\end{proof}
\inlean{\lean{entropy\_nonneg}, \lean{entropy\_uniform}, \lean{entropy\_le\_log\_card} in
\lean{Entropy.lean}.}

\begin{lemma}[Superadditivity and the max-entropy-with-total bound]\label{a:lem:superadd}
$\nml$ is superadditive on nonnegatives: $\nml(x+y)\le\nml(x)+\nml(y)$. Hence joint entropy
dominates a marginal, $H(\text{marginal})\le H(\text{joint})$; and for nonnegatives
$a_1,\dots,a_N$ with sum $S$, $\ \sum_i\nml(a_i)\le\nml(S)+S\log N$.
\end{lemma}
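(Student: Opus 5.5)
We prove the three claims in order; each is elementary real analysis on $\nml(x)=-x\log x$ with $\nml(0)=0$.

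\emph{Superadditivity.} For $x,y\ge0$ we show $\nml(x+y)\le\nml(x)+\nml(y)$. If $x=0$ or $y=0$ the two sides are equal. Otherwise expand the left side:
\[
\nml(x+y)=-x\log(x+y)-y\log(x+y).
\]
Since $\log$ is monotone, $\log(x+y)\ge\log x$ and $\log(x+y)\ge\log y$. Hence $-x\log(x+y)\le-x\log x$ and $-y\log(x+y)\le-y\log y$. Adding the two inequalities gives the claim.

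\emph{Marginal versus joint entropy.} Let $p(a,b)$ be a joint mass function with marginal $p(a)=\sum_b p(a,b)$. Applying superadditivity inductively over the finite index $b$ gives $\nml\bigl(\sum_b p(a,b)\bigr)\le\sum_b\nml(p(a,b))$, with the empty sum giving $\nml(0)=0$. Summing over $a$ yields $H(\text{marginal})\le H(\text{joint})$.

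\emph{Max-entropy-with-total bound.} Let $a_1,\dots,a_N\ge0$ have sum $S$.
\begin{itemize}
\item If $S=0$, every $a_i=0$ and both sides vanish.
\item If $S>0$, normalise $p_i=a_i/S$, a mass function on $N$ points. Using $\log a_i=\log S+\log p_i$,
\[
\sum_i\nml(a_i)=-\sum_i a_i\log S-\sum_i a_i\log p_i=\nml(S)+S\,H(p),
\]
where terms with $a_i=0$ are zero on both sides. Lemma~\ref{a:lem:entropy-basic} gives $H(p)\le\log N$, so $\sum_i\nml(a_i)\le\nml(S)+S\log N$.
\end{itemize}

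The only delicate point is the convention $\nml(0)=0$. Wherever a logarithm of zero would appear — the boundary cases of superadditivity, the zero terms in the normalisation identity, and the case $S=0$ — it has to be handled separately. In the Lean formalisation these are exactly the case splits that need care; mathematically none of them is difficult.
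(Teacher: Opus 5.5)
Your proof is correct. The first two parts match the paper's argument: monotonicity of $\log$ for $x,y>0$ with trivial boundary cases, then summing row-wise.

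For the third claim you take a different route. You normalise $p_i=a_i/S$, use the exact identity $\sum_i\nml(a_i)=\nml(S)+S\,H(p)$, and invoke the maximum-entropy bound $H(p)\le\log N$ of Lemma~\ref{a:lem:entropy-basic}. That lemma rests only on Gibbs (Lemma~\ref{a:lem:gibbs}), so there is no circularity.

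The paper instead applies Jensen's inequality to the concave $\nml$ with uniform weights, $\frac1N\sum_i\nml(a_i)\le\nml(S/N)$, and then rewrites $N\,\nml(S/N)=\nml(S)+S\log N$.

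The Jensen route is uniform in the data: it never divides by $S$, so it needs no separate $S=0$ case. It does, however, require concavity of $\nml$ and Jensen as an additional tool.

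Your route reuses a lemma already established. It also identifies the slack exactly: it equals $S(\log N-H(p))=S\,\KL(p\,\|\,u_N)$, where $u_N$ is the uniform law on $N$ points. The cost is the $S=0$ case split, which you correctly flag and handle.
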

\begin{proof}
For $x,y>0$, since $0<x,y\le x+y$ and $\log$ is increasing,
$x\log x+y\log y\le x\log(x+y)+y\log(x+y)=(x+y)\log(x+y)$, which is
$\nml(x+y)\le\nml(x)+\nml(y)$ (boundary cases $x=0$ or $y=0$ are immediate). Summing
row-wise gives $H(\text{marginal})\le H(\text{joint})$. The last bound is Jensen for the
concave $\nml$ with uniform weights: $\frac1N\sum_i\nml(a_i)\le\nml(\frac SN)$, and
$N\,\nml(\frac SN)=\nml(S)+S\log N$.
\end{proof}
\inlean{\lean{negMulLog\_add\_le}, \lean{entropy\_marginalX\_le}/\lean{condEntropy\_nonneg},
\lean{negMulLog\_sum\_le\_total} in \lean{Entropy.lean}.}

\begin{theorem}[Fano's inequality]\label{a:thm:fano}
Let $q$ be a mass function on an $M$-element set with a distinguished outcome $x_0$, and write
$p=1-q(x_0)$ for the residual (``error'') mass. Then
$\ H(q)\le\binent(p)+p\log(M-1)$.
\end{theorem}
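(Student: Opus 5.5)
Here is the plan. Split the entropy of $q$ into the contribution of the distinguished outcome and the contribution of the remaining $M-1$ outcomes. This uses only the two elementary lemmas just established: the total-mass Jensen bound of Lemma~\ref{a:lem:superadd}, and nonnegativity/boundary behaviour of $\nml$. No conditional-entropy machinery is needed, because the "estimator" here is the fixed guess $x_0$.

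\emph{Step 1 (decomposition).} Write
\[
H(q)=\nml(q(x_0))+\sum_{a\neq x_0}\nml(q(a)).
\]
By definition $q(x_0)=1-p$, so the first term is $\nml(1-p)$.

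\emph{Step 2 (bounding the tail).} The off-distinguished masses $q(a)$, $a\neq x_0$, are $N=M-1$ nonnegative numbers with sum $S=\sum_{a\neq x_0}q(a)=1-q(x_0)=p$. Apply the last inequality of Lemma~\ref{a:lem:superadd}:
\[
\sum_{a\neq x_0}\nml(q(a))\le \nml(p)+p\log(M-1).
\]

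\emph{Step 3 (combine).} Adding the two pieces gives
\[
H(q)\le \nml(1-p)+\nml(p)+p\log(M-1)=\binent(p)+p\log(M-1),
\]
which is the claim.

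The main obstacle is the degenerate edge case $M=1$, where the tail set is empty and $\log(M-1)=\log 0$ is undefined. I would handle it separately. If $M=1$, then $q(x_0)=1$, so $p=0$ and $H(q)=\nml(1)=0$. The right-hand side is then $\binent(0)+0\cdot\log 0$, which equals $0$ under the convention $0\cdot(\cdot)=0$, matching the formal $\log 0=0$ used in Lean.

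The Jensen step in Lemma~\ref{a:lem:superadd} also needs $N\ge1$. That holds exactly when $M\ge2$, so after the $M=1$ split the lemma applies directly; it already covers $S=0$, where everything vanishes.

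Finally, I would confirm that reindexing the sum over the finite set minus $\{x_0\}$ is legitimate. This amounts to splitting off one term from a finite sum, which is routine.
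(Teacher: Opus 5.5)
Your proposal is correct and is essentially the paper's proof: split off the term $\operatorname{nml}(q(x_0))=\operatorname{nml}(1-p)$, then bound the sum over the other $M-1$ outcomes (total mass $p$) by the max-entropy-with-total bound of Lemma~\ref{a:lem:superadd}. Your separate treatment of the degenerate case $M=1$, which relies on the convention $0\cdot\log 0=0$, is a harmless refinement that the paper leaves implicit.
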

\begin{proof}
Split off $x_0$: $H(q)=\nml(q(x_0))+\sum_{a\neq x_0}\nml(q(a))$. The residual mass is
$\sum_{a\neq x_0}q(a)=p$ over $M-1$ outcomes, so by the max-entropy-with-total bound
(Lemma~\ref{a:lem:superadd}), $\sum_{a\neq x_0}\nml(q(a))\le\nml(p)+p\log(M-1)$. Therefore
$H(q)\le\nml(q(x_0))+\nml(p)+p\log(M-1)=\nml(1-p)+\nml(p)+p\log(M-1)
=\binent(p)+p\log(M-1)$.
\end{proof}
\inlean{\lean{entropy\_le\_fano} (with \lean{binEntropy\_eq\_negMulLog}) in \lean{Entropy.lean}.}

\begin{remark}[Distributional predictor corollary --- open]\label{a:rem:fano-open}
The verbatim entropic predictor bound $P_e\ge 1-\frac{I(Q;\mathrm{Obs})+1}{\log_2 m_Q}$, the
conditional/averaged lift of Theorem~\ref{a:thm:fano} over a joint distribution, is \emph{not}
formalised. Its operational content (interface capacity forces an error floor) is the proved
Theorem~\ref{a:thm:capacity}.
\end{remark}

\subsection{Minimum augmentation}\label{a:sec:minaug}

When a query is not certified, one seeks the smallest \emph{augmentation} --- extra attributes
added to an overlap --- whose closure covers the footprint. Say $\mathit{aug}$ is an
\emph{augmentation certificate} for $Q$ on overlap $O$ when $\att Q\subseteq\clo{O\cup
\mathit{aug}}$.

\begin{proposition}[Structure of augmentation certificates]\label{a:prop:aug}
Augmentation certificates are monotone (if $\mathit{aug}_1\subseteq\mathit{aug}_2$ certifies,
so does $\mathit{aug}_2$); certification depends only on the closure $\clo{O\cup\mathit{aug}}$;
and $\mathit{aug}$ certifies $Q$ iff it covers the \emph{residual}
$\{a\in\att Q : a\notin\clo O\}$. Moreover, adding $\mathit{aug}$ to the interface (making
$O\cup\mathit{aug}$ an observable overlap) renders $Q$ identifiable \emph{under the augmented
interface}.
\end{proposition}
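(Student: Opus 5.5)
Each of the four claims reduces to elementary properties of the closure operator recorded in \S\ref{a:sec:fd} (extensive, monotone, idempotent), plus one application of Theorem~\ref{a:thm:cert}. I take them in the order stated.

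\emph{Monotonicity.} If $\mathit{aug}_1\subseteq\mathit{aug}_2$, then $O\cup\mathit{aug}_1\subseteq O\cup\mathit{aug}_2$. Monotonicity of closure gives $\clo{O\cup\mathit{aug}_1}\subseteq\clo{O\cup\mathit{aug}_2}$, and $\att Q\subseteq\clo{O\cup\mathit{aug}_1}$ then transfers to the larger closure.

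\emph{Dependence only on the closure.} This is immediate from the definition: the predicate $\att Q\subseteq\clo{O\cup\mathit{aug}}$ mentions $\mathit{aug}$ only through $\clo{O\cup\mathit{aug}}$. So two augmentations with equal closures certify exactly the same queries.

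\emph{Residual characterisation.} Let $\mathrm{Res}=\{a\in\att Q: a\notin\clo O\}$. I read ``$\mathit{aug}$ covers the residual'' as $\mathrm{Res}\subseteq\clo{O\cup\mathit{aug}}$; this interpretive choice is the step I expect to need care, since under a literal reading $\mathrm{Res}\subseteq\mathit{aug}$ the ``only if'' direction is false, because laws can reach residual attributes without their being added.
\begin{itemize}
\item ($\Rightarrow$) Since $\mathrm{Res}\subseteq\att Q$, any certificate covers $\mathrm{Res}$.
\item ($\Leftarrow$) Split $\att Q=(\att Q\cap\clo O)\cup\mathrm{Res}$. The first part lies in $\clo O\subseteq\clo{O\cup\mathit{aug}}$ by monotonicity, and the second part lies in $\clo{O\cup\mathit{aug}}$ by hypothesis.
\end{itemize}
If the literal reading is intended, I would prove only the valid direction: $\mathrm{Res}\subseteq\mathit{aug}$ implies certification, because $\mathit{aug}\subseteq\clo{O\cup\mathit{aug}}$ by extensivity. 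I would then note that the closure form is what the equivalence actually requires.

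\emph{Identifiability under the augmented interface.} Form the interface whose observable family is $\ObsFam'=\ObsFam\cup\{\clo{O\cup\mathit{aug}}\}$, with $\Sigma$ and the legal class unchanged. Then $O\cup\mathit{aug}$ is an observable overlap and $\att Q\subseteq\clo{O\cup\mathit{aug}}$, so Theorem~\ref{a:thm:cert} applies verbatim with $O\cup\mathit{aug}$ in place of $O$ and gives identifiability under $\sim'$.

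The one point to check here is that Theorem~\ref{a:thm:cert} needs nothing from the original family. It uses only membership of the single overlap in the observable family, together with Lemma~\ref{a:lem:footprint-lift} and Lemma~\ref{a:lem:fd-det}, and those lemmas depend on legality alone. This also recovers the single-overlap case of Theorem~\ref{thm:connectivity-emergence}.
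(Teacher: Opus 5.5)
Your proof is correct and follows the paper's route: monotonicity of the closure operator for the first two claims, splitting $\attribute{Q}$ into its part inside $\AttrClosure{O}$ and the residual for the characterisation, and Theorem~\ref{a:thm:cert} applied to the overlap $O\cup\mathit{aug}$ in the augmented interface for identifiability. Your reading of ``covers the residual'' as $\mathrm{Res}\subseteq\AttrClosure{O\cup\mathit{aug}}$ is the one under which the equivalence holds, and the paper's one-line proof leaves this choice implicit.
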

\begin{proof}
Monotonicity and closure-dependence are immediate from monotonicity and idempotence of $\clo
\cdot$. The residual characterisation splits $\att Q$ into the part already in $\clo O$ and the
rest. Identifiability under the augmented interface is Theorem~\ref{a:thm:cert} applied to the
overlap $O\cup\mathit{aug}$ in the augmented interface.
\end{proof}
\inlean{\lean{augCertificate\_mono}, \lean{augCertificate\_closure\_char},
\lean{augCertificate\_iff\_covers\_residual}, \lean{aug\_closure\_equiv},
\lean{augCertificate\_identifiable\_augmented} in \lean{MinAug.lean}.}

\begin{remark}[Closure-uniqueness of minimum augmentations is false]\label{a:rem:minaug-false}
Two minimum-cardinality augmentations need not have the same closure (minimum set covers are
not unique). The development records a machine-checked counterexample.
\inlean{\lean{minAug\_closure\_unique\_false} in \lean{MinAug.lean}.}
\end{remark}

\begin{remark}[Greedy approximation and hardness --- open]\label{a:rem:greedy-open}
Selecting a minimum augmentation reduces to \textsc{Set Cover}, so the greedy algorithm enjoys
the standard $H_k\le 1+\ln k$ approximation guarantee and the decision problem is NP-complete.
These two facts are \emph{not} machine-checked: they require a formal \textsc{Set
Cover}/greedy development and an NP-completeness framework outside the present scope. This is
the single remaining \texttt{sorry} (\lean{greedy\_approx\_ratio}) together with the placeholder
\lean{minAug\_NP\_hard\_from\_SetCover}.
\end{remark}

\subsection*{Reproducing the proofs}
The toolchain is pinned by \tx{lean-toolchain} (Lean~4.30.0) and the dependencies by \tx{lake-manifest.json} (Mathlib~v4.30.0); from the repository root, run \tx{lake exe cache get} then \tx{lake build}. A successful build reports only the one documented \tx{sorry} (\lean{greedy\_approx\_ratio}) and unused-variable warnings; the declaration names cited above are checked against the sources in CI.

\section{Experimental Details}\label{a:sec:exp}

This appendix expands Section~\ref{sec:experiments} with dataset construction, predictor architectures and hyperparameters, the experimental protocol, and the per-architecture error-floor numbers behind the figures. Experiment code is at \url{https://github.com/danielhz/query-identifiability}.

\subsection{Datasets}\label{a:sec:datasets}
Table~\ref{a:tab:datasets} summarises how each dataset is built. The synthetic benchmark supports the exact certificate/oracle comparison (RQ1) by exhaustive enumeration; the four real-record datasets validate the interface-law model on genuine integration data; WDC-Product is a schema-design illustration whose query-critical attributes are synthesised to realise a prescribed certified/uncertified split.

\begin{table}[t]
\centering
\caption{Datasets: source, size, designated overlap $O$, and interface laws $\LawSet$. Attribute indices follow the per-dataset schemas of Section~\ref{sec:experiments}.}
\label{a:tab:datasets}
\begin{tabular}{@{}p{0.20\linewidth}p{0.24\linewidth}p{0.20\linewidth}p{0.28\linewidth}@{}}
\toprule
Dataset & Source / size & Overlap $O$ & Interface laws $\LawSet$ \\
\midrule
Synthetic & generated; all $d^{\,n}$ single-row worlds ($n{=}5$) & random per instance & random FDs (exact ground truth by enumeration) \\
BibInteg & OpenAlex API; $9{,}992$ papers & \{title, author, year\} & $O\!\to\!\tx{venue}$, $O\!\to\!\tx{doi}$, $O\!\to\!\tx{n\_authors}$, \{year\}$\to\tx{decade}$ \\
CrossKG-DBLP & DBLP $\times$ OpenAlex; $11{,}800$ papers & \{title, author, year\} & $O\!\to\!\tx{doi}$ \\
Amazon-Google & Magellan; $1{,}046$ pairs & matched-pair identity & none across \tx{price} \\
Fodors-Zagat & Magellan; $110$ pairs & matched-pair $+$ segment & none across cuisine \\
WDC-Product & WDC corpus; schema only & \{brand, model\} & \{brand,model\}$\to$ synthesised attrs \\
\bottomrule
\end{tabular}
\end{table}

\subsection{Predictor architectures and training}\label{a:sec:arch}
Six predictors span the structure-agnostic-to-structure-exploiting spectrum plus two baselines; all consume the same per-overlap feature vector (the concatenated closure-augmented overlap marginals) and are trained identically (Table~\ref{a:tab:hparams}).

\begin{itemize}[leftmargin=1.2em,itemsep=1pt,topsep=2pt]
\item \tx{MLP}: two hidden layers, ReLU, dropout; a flat feed-forward network over the feature vector.
\item \tx{SetTransformer} (\tx{ST}): each overlap's marginal is linearly projected to a shared dimension, followed by multi-head self-attention blocks and a pooled read-out (cross-overlap attention).
\item \tx{GNN-OG}: message passing on the overlap graph; each attribute's marginal is recovered from its overlap's probability tensor and propagated along overlap edges before pooling.
\item \tx{Closure-Aware} (\tx{CA}): when the query is certified it reads the answer directly from the $\LawSet$-closure of the overlap projection (no learned inference); otherwise it falls back to \tx{GNN-OG}.
\item \tx{VanillaOv}: logistic regression on the raw overlap feature vector (no hidden layers), testing whether linear expressivity suffices.
\item \tx{MajVote}: constant majority-class predictor.
\end{itemize}

\begin{table}[t]
\centering
\caption{Training configuration (shared across all learnable architectures).}
\label{a:tab:hparams}
\begin{tabular}{@{}ll@{}}
\toprule
Setting & Value \\
\midrule
Loss / optimiser & binary cross-entropy / Adam \\
Epochs (max) & 300, early stopping (patience 30) on val.\ loss \\
Learning rate & $10^{-3}$; weight decay $10^{-5}$ \\
Batch size & 256; best-val checkpoint restored \\
MLP width & 128 (swept over $\{64,256,1024\}$), 2 hidden layers \\
Hardware & NVIDIA A40 (48\,GB), 128-core CPU \\
Software & PyTorch 2.3.0 / CUDA 12.1, PyG 2.5.0, Python 3.11 \\
\bottomrule
\end{tabular}
\end{table}

\subsection{Protocol}\label{a:sec:protocol}
\emph{Certificate exactness (RQ1).} A benchmark of $841$ instances (sampling $200$ random $5$-attribute schemas with up to four FDs and five single-atom Boolean \cq{s} each, split evenly certified/non-certified) is checked by \tx{CheckCert} against an exhaustive oracle that enumerates all $d^{\,n}$ single-row worlds, groups them by observation, and tests answer agreement; for non-certified queries it returns an explicit witness pair. \emph{Confirmatory ML.} Worlds are drawn from the resolver model with $m\!\in\!\{10,30,50\}$ tuples consistent with $\LawSet$; training sizes $N\!\in\!\{10^3,5\!\times\!10^3,5\!\times\!10^4\}$ (validation/test scaled proportionally); $10$ certified and $10$ non-certified single-atom Boolean \cq{s} per configuration; three seeds ($0,1,2$). \emph{Minimum augmentation (RQ3).} $500$ random trials per $(n,|\LawSet|)$ with $n\!\in\!\{4,6\}$, footprint $\approx\!0.6n$, and atom-obligation counts $|\AtomObligs|\!\in\!\{1,2,3,4\}$. \emph{Scalability (RQ4).} $|\AttrUniverse|$ and $|\LawSet|$ each swept over $\{10,50,100,250,500,1000\}$.

\subsection{Detailed error-floor results}\label{a:sec:errorfloor}
Table~\ref{a:tab:errorfloor} reports the per-architecture balanced accuracy underlying Figure~\ref{fig:error-floor} ($m{=}10$, aggregated over $N$ and three seeds). Structure-exploiting architectures separate certified from non-certified queries; the flat and constant baselines sit at the $\nicefrac{1}{2}$ floor on both. On non-certified queries every architecture stays within noise of $\nicefrac{1}{2}$ for all $m\!\in\!\{10,30,50\}$ (maximum observed deviation $0.02$), the empirical counterpart of Theorem~\ref{thm:nonident-minimax}.

\begin{table}[t]
\centering
\caption{Balanced accuracy (mean\,$\pm$\,std over queries, $N$, and seeds) at $m{=}10$.}
\label{a:tab:errorfloor}
\begin{tabular}{@{}lcc@{}}
\toprule
Architecture & Certified & Non-certified \\
\midrule
\tx{MLP}          & $0.97 \pm 0.12$ & $0.51 \pm 0.02$ \\
\tx{SetTransf.}   & $1.00 \pm 0.00$ & $0.52 \pm 0.03$ \\
\tx{GNN-OG}       & $0.83 \pm 0.24$ & $0.52 \pm 0.03$ \\
\tx{CA}           & $0.89 \pm 0.21$ & $0.50 \pm 0.00$ \\
\tx{VanillaOv}    & $0.50 \pm 0.00$ & $0.50 \pm 0.00$ \\
\tx{MajVote}      & $0.50 \pm 0.00$ & $0.50 \pm 0.00$ \\
\bottomrule
\end{tabular}
\end{table}

\end{document}